\title{On the Symmetric $K$-user Interference Channels with Limited Feedback}
\author{Mehdi Ashraphijuo, Vaneet Aggarwal and Xiaodong Wang \thanks{ This paper was presented in part at the 2014 Annual Allerton Conference on Communication, Control, and Computing.

M. Ashraphijuo and X. Wang are with the Electrical Engineering Department, Columbia University, New York, NY 10027, email: \{mehdi,wangx\}@ee.columbia.edu. V. Aggarwal is with the School of Industrial Engineering, Purdue University, West Lafayette, IN 47907, email: vaneet@purdue.edu .}}
\newtheorem{theorem}{Theorem}
\newtheorem{lemma}{Lemma}
\newtheorem{corollary}{Corollary}
\newtheorem{remark}{Remark}
\begin{document}
\maketitle
\begin{abstract}
In this paper, we develop achievability schemes for symmetric $K$-user interference channels with a rate-limited feedback from each receiver to the corresponding transmitter. We study this problem under two different channel models: the linear deterministic model, and the Gaussian model. For the deterministic model, the proposed scheme achieves a symmetric rate that is the minimum of the symmetric capacity with infinite feedback, and the sum of the symmetric capacity without feedback and the symmetric amount of feedback. For the  Gaussian interference channel, we use lattice codes to propose a transmission strategy that incorporates the techniques of Han-Kobayashi message splitting, interference decoding, and decode and forward. This strategy achieves a symmetric rate which is within a constant number of bits to the minimum of the symmetric capacity with infinite feedback, and the sum of the symmetric capacity without feedback and the  amount of symmetric feedback. This constant is obtained as a function of the number of users, $K$. The symmetric achievable rate is used to characterize the achievable generalized degrees of freedom which exhibits a gradual increase from no feedback to perfect feedback in the presence of feedback links with limited capacity.
\end{abstract}

\

{\bf Index terms:} $K$-user symmetric interference channel, rate-limited feedback,  symmetric rate, achievability, generalized degrees of freedom.

\newpage
\section{Introduction}

The interference channel (IC) has been studied in the literature since 1970's to understand performance limits of multiuser communication networks \cite{Carleial}. Although the exact characterization of the capacity region of a two-user Gaussian IC is still unknown, several inner and outer bounds have been obtained. These bounds have resulted in an approximate characterization of the capacity region, within one bit, in \cite{Etkin} and \cite{Telatar}. Such characterization includes outer bounds on the capacity region for the two-user Gaussian IC, as well as encoding/decoding strategies based on the Han-Kobayashi scheme \cite{HK}, which performs close to the optimum. On the other hand, the $K$-user IC has been studied in \cite{Jafar3,Ordentlich} for a symmetric scenario, where all direct links (from each transmitter to its respective receiver) have the same gain, and similarly, the gains of all cross (interfering) links are identical. For such a $K$-user symmetric IC, the number of symmetric generalized degrees of freedom (${\mathsf {GDoF}}$) is characterized in \cite{Jafar3}, and an approximate sum capacity is given in \cite{Ordentlich}.

It is well known that feedback does not increase the capacity of point-to-point discrete memoryless channels \cite{Shannon}. However, feedback is beneficial in improving the capacity region of multi-user networks (see \cite{ElGamal} and references therein). A number of works on ICs explore feedback strategies, where each receiver feeds back the channel output to its own transmitter\cite{Cadambe2,Tsesuh,J1,Tuninetti1,Tuninetti1,Achal,Achal1st,Kramer,feedback1,feedback2}. Several coding schemes for the $K$-user Gaussian IC are developed in \cite{Kramer}. The effect of feedback on the capacity region of the two-user IC is studied in \cite{Cadambe2}, where it is shown that feedback provides a multiplicative gain in the sum capacity at high signal-to-noise ratio (${\mathsf {SNR}}$), when the interference links are much stronger than the direct links. The capacity region of the two-user Gaussian IC with unlimited feedback is characterized within a 2 bit gap in \cite{Tsesuh}. The $K$-user symmetric IC with unlimited feedback is considered in \cite{Mohajer}, where the ${\mathsf {GDoF}}$ is characterized. A more realistic feedback model is one where the feedback links are rate-limited. The impact of rate-limited feedback is studied for a two-user Gaussian IC in \cite{Vahid}, where it is shown that the maximum gain in the symmetric capacity with feedback is the amount of symmetric feedback.

In this paper, we study the impact of rate-limited feedback for a $K$-user IC. We first consider this problem for the linear deterministic model proposed in \cite{Avestimehr} as an approximation to the Gaussian model, and then treat the Gaussian model. For the Gaussian model, we develop an achievability scheme that employs the techniques of Han-Kobayashi message splitting, interference decoding and decode-and-forward. In order to effectively decode the interference, lattice codes are used such that the sum of signals can be decoded without decoding the individual signals. We also find the achievable symmetric ${\mathsf {GDoF}}$ with rate-limited feedback.

Roughly speaking, except for the pairs of (${\mathsf {SNR}},{\mathsf {INR}}$) where $\lim_{{\mathsf{SNR}}\to\infty}\frac{\log{\mathsf{INR}}}{\log{\mathsf{SNR}}} =1$, the effect of interference from the other $K-1$ users is as if there were only one interferer in the network. This is analogous to the result of \cite{Jafar3} and \cite{Mohajer}, where it is shown that for the cases of no feedback and unlimited feedback, respectively,  the symmetric ${\mathsf {GDoF}}$  of the $K$-user IC is the same as that of a two-user IC.

In order to get the maximal benefit of feedback, we use an encoding scheme which combines two well-known interference management techniques, namely, interference alignment and interference decoding. More precisely, the encoding at the transmitters is such that all the interfering signals are aligned at each receiver. However, a fundamental difference between our approach and the conventional interference alignment approach is that we need to decode interference to be able to remove it from the received signal, whereas the aligned interference is usually suppressed in conventional approaches. A challenge here, which makes the $K$-user  problem fundamentally different from the two-user problem \cite{Vahid}, is that the interference is a combination of multiple interfering messages instead of a single message as in the two-user case, and decoding all of them imposes strict bounds on the rate of the interfering messages. A key idea is that instead of decoding all the interfering messages individually, we will decode some combination of them that corrupts the intended message of interest. In the proposed scheme, the receiver decodes the sum of certain interfering signals and the intended signal, and sends it back to the transmitter. The transmitter, knowing the intended signal, can then decode the sum of interfering signals and transmits to the receiver in the next slot, to help  the receiver to decode the intended signal. In order to decode the sum of certain intended/interfering signals, all transmitters employ a common structured  lattice code \cite{Lattice} which has the property that the sum of different codewords is another codeword from the same codebook.

Our new scheme generalizes the prior works in  \cite{Vahid,Tsesuh,Mohajer,Etkin,Jafar,Ordentlich} as follows. In this paper, we investigate the cases of weak and strong regimes of interference channels. It is because as it will be explained later in the paper, a regime of medium interference level has been previously shown to not to have a improvement via feedback. A two-user IC with rate-limited feedback is considered in \cite{Vahid}, while this paper develops the achievability for a symmetric $K$-user IC. A two-user IC without feedback is treated in \cite{Etkin}, which is a special case of the $K$-user IC without feedback in \cite{Jafar,Ordentlich}. A two-user IC with unlimited feedback is considered in  \cite{Tsesuh}, which is a special case of \cite{Mohajer} where the $K$-user IC with unlimited feedback is treated. In this paper, we develop an achievability scheme for a $K$-user IC with limited feedback, which for the special cases of two-user, no feedback, and unlimited feedback, results in schemes that are different from those in \cite{Vahid}, \cite{Ordentlich}, and \cite{Mohajer}, respectively. This achievability scheme achieves a rate which is approximately equal to the symmetric capacity without feedback plus the symmetric amount of feedback up to some saturation point, which corresponds to reaching the symmetric capacity with infinite feedback which leads us to conjecture an outer bound. The challenge in proving that the conjectured upper bound is indeed an upper bound lies in the fact that when feedback links are available, it is not immediate that adding more interferers can only decrease capacity. Thus, the two-user bounds based on genie-aided information on all other users' messages become hard to extend since the other receivers can feed back certain signals, and thus other transmitters can potentially help increase rate. We further note that for the two-user case, the achievable symmetric rate in \cite{Vahid} is not within a constant gap to the upper bound for a certain interference region, and the result in this paper fixes the results of \cite{Vahid} thus providing the correct approximate capacity result for the case of $K=2$.

For the achievability scheme for two-user IC in \cite{Vahid}, the two transmitters have different and asymmetric encoding operations, and it cannot be generalized to arbitrary number of users. Also alignment of interfering signals and encoding them by a lattice code is not considered, because each receiver receives interference from only one transmitter. In our proposed achievability scheme, all the transmitters employ the same encoding operation and therefore all users are symmetric. Moreover, each receiver receives interference from the other $K-1$ transmitters and we align  and encode them using a lattice code so that the sum signal can be decoded.  On the other hand, in the achievability scheme for the $K$-user IC with infinite feedback in \cite{Mohajer}, each receiver simply sends back all received signals to the corresponding transmitter whereas in our scheme each receiver sends back a lattice codeword (via the rate-limited feedback channel) with a strategy that is chosen depending on the interference regime. Finally the achievability scheme for the $K$-user IC with no feedback in \cite{Ordentlich}  only performs alignment on the interfering signals and does not deal with feedback. A novelty in this paper is to decide which part of the signal and interference should be aligned to be decoded as a lattice codeword, to be fed back to the transmitter with limited feedback. The proposed scheme in this paper uses the concept of signal alignment with lattice codes in addition to rate-limited feedback that has not been jointly considered in \cite{Ordentlich,Mohajer,Vahid}.

The remainder of this paper is organized as follows. Section II gives the symmetric achievable rate for the deterministic model, with some examples to illustrate the main ideas of the proposed achievability scheme. Section III gives our results for the Gaussian model, where the proposed achievability scheme is described and the achievable symmetric rate, a conjectured upper bound, and the achievable ${\mathsf {GDoF}}$ are given. Finally, Section IV concludes the paper. Some of the proofs are given in appendices.

\section{Deterministic Model}\label{sec:det}

\subsection{System Model and Problem Formulation}
We first consider the linear deterministic $K$-user IC. This model was proposed in \cite{Avestimehr} to focus on signal interactions instead of the additive noise, and to obtain insights for the Gaussian model.  Let $s\ge K$ be a prime number and ${\mathsf {F}_s}$ be the finite field over the set $\{0, \dots, s-1\}$ with sum and product modulo $s$. Moreover, in this model there is a non-negative integer $n_{kj}$ representing the channel gain from transmitter $k$ to receiver $j$, $j,k \in\{1, \cdots, K\}$. We assume that $n_{jk}=\lfloor\log_s^{\mathsf{SNR}}\rfloor=n$ for $j=k$ and $n_{jk}=\lfloor\log_s^{\mathsf{INR}}\rfloor=m$ for $j\ne k$. Also, define $q \triangleq \max(m,n)$. We write the channel input at transmitter $k$ at time $i$ as $X_{k,i} = [X_{k,i}^1, X_{k,i}^2, \ldots, X_{k,i}^q] \in {\mathsf{F}_s}^q$, for $k \in \{1, 2, \cdots, K\}$, such that $X_{k,i}^1$ and $X_{k,i}^q$ represent the most and the least significant levels
of the transmitted signal, respectively. At each time $i$, the received signal at the $k^{\text{th}}$ receiver is given by
\begin{equation}
Y_{k,i}=D^{q-n}X_{k,i}+\sum_{j\neq k}{D^{q-m}X_{j,i}},
\end{equation}
where all the operations are performed modulo $s$. \footnote{For any prime number $s$, it holds that the equation $(s-1)x=a \mod s$ has a unique solution in $\{0,1,...,s-1\}$ and this property will be used frequently in this paper.} Also, $D$ is a $q\times q$ shift matrix. We assume that there is a feedback channel from the $k^{\text{th}}$ receiver to the $k^{\text{th}}$ transmitter which is of capacity $p$, and that $p$ is a multiple of $\log s$ because of using the finite field of size $s$. The feedback is causal and hence at time $i$ the signal received till time $i-1$ is available at each receiver for encoding and feeding back to the corresponding transmitter. Fig. \ref{fig:Example-det} depicts a linear deterministic IC for $n=5$ and $m=2$.
All the transmissions $a_{i,j}$, $1\le i\le K$, $1\le j\le\max\{n,m\}$ are $s$-ary.
\begin{figure}[htbp]
\centering
	\includegraphics[width=6cm]{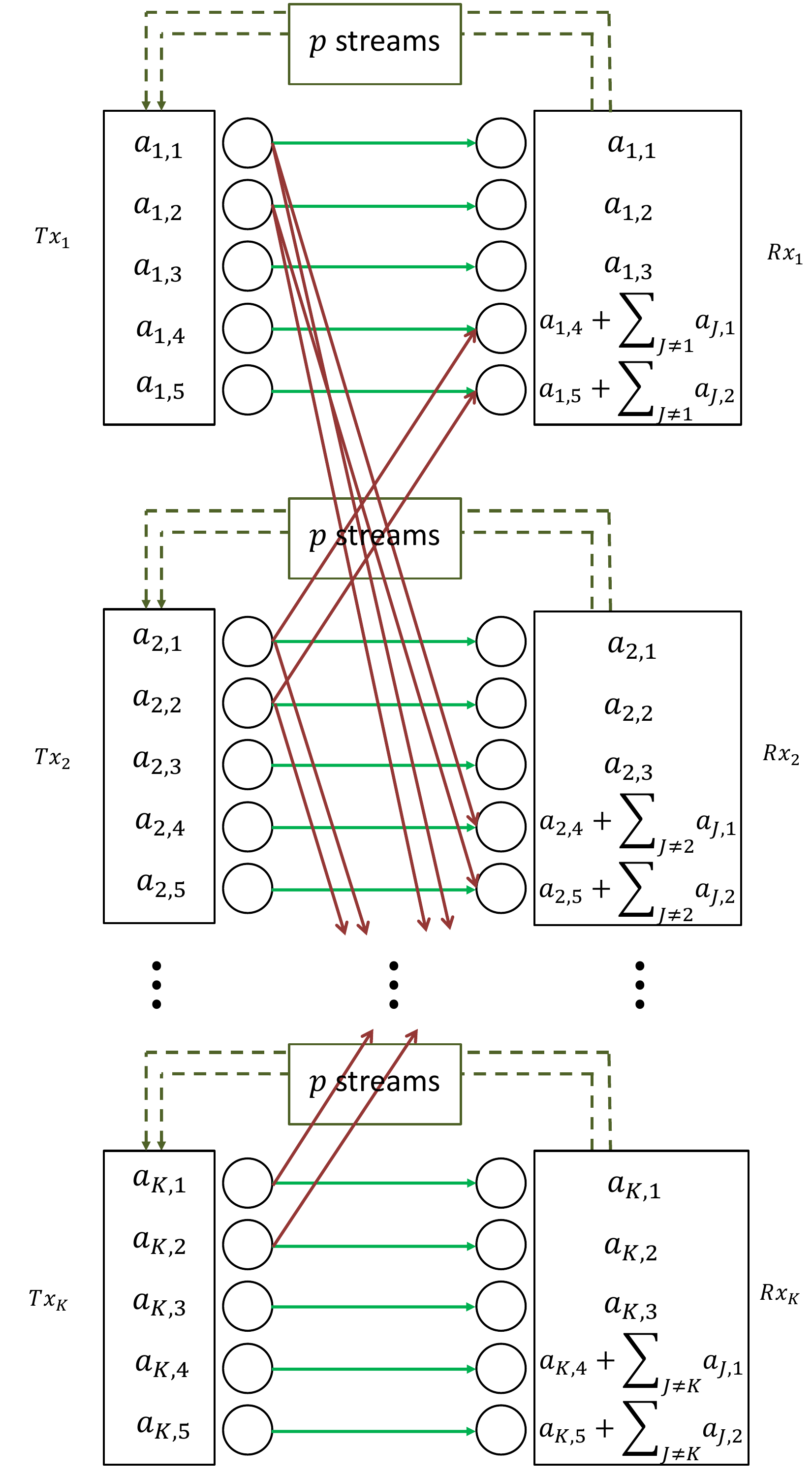}
\caption{A linear deterministic IC with $n=5$, $m=2$ and $p=2$.}
\label{fig:Example-det}
\end{figure}

For a deterministic IC, a symmetric rate $R_{sym}$ is said to be achievable if there is a strategy such that all users can get a rate $R_{sym}$. We further define $\alpha \triangleq m/n$ and $\beta \triangleq p/n$.

For the deterministic channel, in defining the achievable rate and decoding process, the zero-error probability model is assumed. Similarly, the notion of zero-error capacity is used for the converse proofs as in \cite{Avestimehr}.

\subsection{Results for Linear Deterministic IC Model}
In this section, we describe our proposed coding schemes for the $K$-user linear deterministic IC with rate-limited feedback. The following theorem gives our achievability result.

\begin{theorem}\label{thm_det}
For the $K$-user linear deterministic IC, the following symmetric rate is achievable:
\begin{eqnarray}\label{r-d}
{R}_{sym} /{\log s}=\left\{ \begin{array}{ll}
\min\{n-m+p,n-\frac{m}{2}\},& \text{ if } \ 0\le m \le \frac{n}{2}, \\
\min\{m+p,n-\frac{m}{2}\},& \text{ if } \ \frac{n}{2}\le m \le \frac{2n}{3}, \\
n-\frac{m}{2},& \text{ if }\frac{2n}{3}\le m < n, \\
\frac{n}{K},& \text{ if } \ \ \ \ \ \ \ m=n, \\
\frac{m}{2},& \text{ if } \ n < m \le 2n, \\
\min\{n+p,\frac{m}{2}\},& \text{ if } 2n \le m.
\end{array}
\right.
\end{eqnarray}
\end{theorem}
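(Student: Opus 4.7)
The plan is to prove \eqref{r-d} by case analysis on the six branches, since they correspond to qualitatively different alignment/feedback strategies. I would treat each regime separately and then check consistency of the resulting piecewise formula at the four interior break-points $\alpha=1/2,\,2/3,\,1,\,2$.

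In the regimes where the feedback capacity $p$ does not appear in the formula, namely $2n/3 \le m < n$, $m=n$, and $n < m \le 2n$, I would simply invoke the known no-feedback $K$-user achievability schemes of \cite{Jafar3,Ordentlich}: conventional Han-Kobayashi plus alignment already deliver $n-m/2$ and $m/2$ without any use of the feedback link. At the singular point $m=n$, a pure interference-alignment scheme in which every transmitter uses the same $n/K$ out of $n$ bit positions yields $n/K$ bits per user per slot, because the $K-1$ interferers collapse exactly onto the same positions as the intended signal and can be peeled off.

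The substantive work is in the three feedback-active regimes $0 \le m \le n/2$, $n/2 \le m \le 2n/3$, and $m \ge 2n$. In each of these I would construct a two-slot block scheme of the following form. At time $i$, every transmitter places fresh payload bits into a common fixed set of bit positions in $X_{k,i}$, chosen so that by symmetry the $K-1$ interfering signals align exactly at each receiver. Receiver $k$ decodes the interference-free top $|n-m|$ bits of $Y_{k,i}$ immediately, and feeds back $p$ bits from the contaminated shared region to transmitter $k$. Transmitter $k$ then XORs out its own known contribution to recover $p$ bits of the pure aligned interference sum $\bigoplus_{j \ne k} D^{q-m} X_{j,i}$, and retransmits those bits in the clean top levels of slot $i+1$. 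Receiver $k$ uses this retransmission to cancel interference on $p$ positions of slot $i$ and decode the residual payload. For $n/2 \le m \le 2n/3$, an additional Han-Kobayashi split is used to also exploit the levels below the interference zone. Averaging over two slots gives effective symmetric rates $n-m+p$, $m+p$, and $n+p$ in the three regimes, capped by the infinite-feedback ceilings $n-m/2$ (first two) and $m/2$ (third); the minima reproduce \eqref{r-d}.

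The principal obstacle, and the reason the $K$-user proof is not a direct generalisation of the two-user scheme of \cite{Vahid}, is that after the transmitter XORs out its own contribution from the fed-back bits it holds a \emph{single} bit-wise XOR of $K-1$ interferer signals, not $K-1$ separate signals. Decoding each interferer individually would cut each cross-link rate by a factor of $K-1$ and destroy the claimed rates. The resolution is that the receiver only needs that single XOR in order to cancel the aligned interference, so the transmitter simply retransmits it verbatim; this is the finite-field precursor of the ``decode-the-sum'' lattice trick used for the Gaussian model later in the paper. The remaining work, which I expect to be mechanical rather than conceptual, is to verify in each regime that the feedback fits within $p$ bits per slot (using the assumption that $2p$ is an integer), to check the boundary limits $p\to 0$ (recovering the no-feedback rate of \cite{Jafar3}) and $p$ large (saturating the infinite-feedback rate of \cite{Mohajer}), and to confirm continuity of \eqref{r-d} at the four interior break-points.
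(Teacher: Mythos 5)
Your overall architecture matches the paper's: Appendix~A proves exactly three two-slot lemmas for $\alpha\le\frac12$, $\frac12\le\alpha\le\frac23$ and $\alpha\ge 2$, defers the regimes $\frac23\le\alpha\le 2$ (where \eqref{r-d} contains no $p$) to no-feedback results, and its central device is precisely your ``decode-the-sum'' idea: the receiver feeds back the corrupted equation, the transmitter XORs out its own contribution to recover the single aligned interference sum $\overline{a_{i,j}}=\sum_{k\ne i}a_{k,j}$, and that sum -- never the individual interfering messages -- is what gets retransmitted. For the very weak and weak regimes your cancellation mechanism is the correct one and the parameter bookkeeping ($l=(m-2p)^+$ etc.\ in the paper) is indeed routine.

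There is, however, a genuine gap in the very strong regime $m\ge 2n$. Your mechanism -- receiver cancels interference on $p$ contaminated positions of slot $i$ using its own transmitter's retransmission -- cannot produce the rate $n+p$ there, because when $m\ge 2n$ the received levels split into pure-interference levels on top and $n$ pure-signal levels at the bottom, with \emph{no} contaminated region at all; there is nothing to cancel, and your scheme as described is stuck at rate $n$. The paper's scheme uses feedback in a qualitatively different way in this regime: the extra bits $a_{i,n+1},\ldots,a_{i,m-n-l''}$ are placed on levels the weak direct link cannot deliver, so they never reach receiver $i$ in slot 1; instead they reach the \emph{other} receivers inside the aligned sums, are fed back to the \emph{other} transmitters, and are forwarded in slot 2 over the strong cross links, whereupon receiver $i$ extracts $a_{i,n+1}$ from the slot-2 cross reception $(K-1)a_{i,n+1}+(K-2)\overline{a_{i,n+1}}$ using the sum $\overline{a_{i,n+1}}$ it already decoded. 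That is, feedback acts as a decode-and-forward relay path through the other users, not as interference cancellation, and this idea is absent from your proposal. Two smaller points. First, the slot-2 collision analysis you defer as ``mechanical'' is not: whether $(K-1)a+(K-2)\overline{a}$ reveals $a$ given $\overline{a}$ depends on the field -- over $\mathsf{F}_2$ the coefficient $K-1$ vanishes for odd $K$ -- so it requires explicit verification (in the weak regimes the receiver happens to know both $a$ and $\overline{a}$ from other levels, which saves the day; in the strong regime it is exactly the delicate step). Second, your scheme for $m=n$ is wrong as stated: if all transmitters use the \emph{same} $n/K$ positions, each receiver observes only the mod-2 sum of all $K$ signals on those positions and can peel off nothing; the rate $n/K$ requires \emph{disjoint} level sets (orthogonalization) across users.
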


\begin{remark}\label{infde}
With infinite feedback, i.e., $p=\infty$, according to Theorem 4 of \cite{Mohajer}, the symmetric capacity is
\begin{eqnarray}\label{r-dp}
{C}_{sym,\infty} /{\log s} =\left\{ \begin{array}{ll}
n-\frac{m}{2},& \text{ if }0\le m < n, \\
\frac{n}{K},& \text{ if } \ \ \ \ \ m=n, \\
\frac{m}{2},& \text{ if } n < m.
\end{array}
\right.
\end{eqnarray}
\end{remark}

\begin{corollary}\label{zde}
With no feedback, i.e., $p=0$, the symmetric capacity is
\begin{eqnarray}\label{r-dn}
{C}_{sym,0} /{\log s}=\left\{ \begin{array}{ll}
n-m,& \text{ if } \ 0\le m \le \frac{n}{2}, \\
m,& \text{ if } \ \frac{n}{2}\le m \le \frac{2n}{3}, \\
n-\frac{m}{2},& \text{ if }\frac{2n}{3}\le m < n, \\
\frac{n}{K},& \text{ if } \ \ \ \ \ \ \ m=n, \\
\frac{m}{2},& \text{ if } \ n < m \le 2n, \\
n,& \text{ if } 2n \le m.
\end{array}
\right.
\end{eqnarray}
\end{corollary}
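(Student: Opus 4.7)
The plan is to derive the corollary in two halves: achievability comes for free from Theorem~\ref{thm_det} by specializing $p=0$, while the matching converse is the known symmetric capacity upper bound for the $K$-user linear deterministic IC without feedback, e.g.\ from \cite{Jafar3,Ordentlich}. So my proof would begin by saying exactly that, and the bulk of the work is just checking that the six cases of Theorem~\ref{thm_det} reduce, piecewise, to the six cases of \eqref{r-dn}.

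I would carry out the case-by-case simplification in order. For $0\le m\le n/2$, Theorem~\ref{thm_det} gives $\min\{n-m,n-m/2\}$; since $m\ge 0$ makes $n-m\le n-m/2$, the minimum is $n-m$. For $n/2\le m\le 2n/3$, we get $\min\{m,n-m/2\}$; the inequality $m\le n-m/2$ is equivalent to $m\le 2n/3$, so on this subinterval the minimum is $m$. The cases $2n/3\le m<n$, $m=n$, and $n<m\le 2n$ are already free of $p$ in Theorem~\ref{thm_det}, so they match \eqref{r-dn} verbatim. Finally, for $2n\le m$, Theorem~\ref{thm_det} gives $\min\{n,m/2\}$; since $m/2\ge n$ there, the minimum is $n$. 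Thus \eqref{r-dn} is achievable.

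For the converse direction, I would point out that the right-hand side of \eqref{r-dn} is precisely the well-known ``W-curve'' characterization of the symmetric capacity of the $K$-user symmetric linear deterministic IC without feedback, which is established in \cite{Jafar3,Ordentlich}. One can also verify it directly by the standard arguments: in the weak-interference regime one bounds $R_{\text{sym}}$ by the single-user direct-link contribution not corrupted by interference, in the strong-interference regime by the interference-limited sum-rate bound $m/2$ per user, at $m=n$ by the alignment bound $n/K$ (a receiver observes at most $n$ superposed bit levels shared among $K$ users), and for $m\ge 2n$ by the point-to-point bound $n$ on the direct link. Combined with the achievability above, this gives the capacity claimed in \eqref{r-dn}.

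I expect no real obstacle here: the heavy lifting is done by Theorem~\ref{thm_det} (for achievability) and by the prior no-feedback converses in \cite{Jafar3,Ordentlich} (for the matching upper bound). The only thing to be careful about is the two mixed cases $0\le m\le n/2$ and $n/2\le m\le 2n/3$, where the $p=0$ substitution turns a $\min$ of two expressions into a single one, and one must check that the right branch of the $\min$ is selected throughout each interval, as I did above.
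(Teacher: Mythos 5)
Your achievability half coincides with the paper's: the paper likewise obtains achievability by setting $p=0$ in Theorem \ref{thm_det}, and your interval-by-interval check of which branch of each $\min$ is active is correct (the paper leaves that verification implicit).

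The converse is where you diverge from the paper, and where your argument has a genuine gap. You attribute the exact symmetric capacity of the $K$-user symmetric linear \emph{deterministic} IC without feedback to \cite{Jafar3,Ordentlich}, but those references establish Gaussian-model statements --- Theorem 3.1 of \cite{Jafar3} is a ${\mathsf {GDoF}}$ (limiting) characterization and \cite{Ordentlich} is an approximate, constant-gap sum-capacity result --- and neither of these yields the exact deterministic capacity claimed in \eqref{r-dn}. Your fallback ``direct verification'' does not close this hole in the regime that matters: for $0\le m\le \frac{2n}{3}$, bounding $R_{sym}$ by ``the single-user direct-link contribution not corrupted by interference'' is not a valid converse step, since nothing a priori prevents a receiver from decoding and stripping the interference; the correct argument is a genie-aided (ETW-type) bound. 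This is precisely what the paper imports: its converse is assembled from three pieces, namely (i) for $\frac{2n}{3}\le m\le 2n$, the infinite-feedback symmetric capacity of Remark \ref{infde} (valid as an upper bound because feedback cannot decrease capacity, and in this range $C_{sym,\infty}$ equals the claimed $C_{sym,0}$); (ii) for $m\ge 2n$, a simple cutset bound on the direct link; and (iii) for $0\le m\le\frac{2n}{3}$, the two-user deterministic bound (13) of \cite{Etkin}, which applies to any pair of users in the $K$-user channel after a genie reveals the other $K-2$ messages so their signals can be subtracted. Your sketches for $m=n$ (all receivers observe the same sum signal, so one output resolves all $K$ messages, giving $R_{sym}\le n/K$) and for $m\ge 2n$ (cutset) are fine in spirit; it is the weak and very weak interference regimes where your level-counting argument would fail and must be replaced by the bound from \cite{Etkin} (or an equivalent genie argument) to make the corollary's converse complete.
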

\begin{proof}
The achievability follows from Theorem \ref{thm_det} for $p=0$. The upper bound for $\frac{2n}{3}\le m \le 2n$ follows from Remark \ref{infde} and for $2n \le m$ it is a simple cutset bound. For $0 \le m\le \frac{2n}{3}$,  the proof is given as follows. Assume that the $k^{\text{th}}$ transmitter transmits $X_k={[a_{k,1},\ldots,a_{k,n}]}^T$ and let $S_k\triangleq{[a_{k,1},\ldots,a_{k,m}]}^T$, for $k\in\{1,\ldots,K\}$. The sum rate for any two users (say 1 and 2) can be found by giving genie-aided information on all other users' messages, the two-user bound still holds for any number of users. Thus, 
\begin{eqnarray}\label{r-dn}
R_1+R_2&\le&I(X_1;Y_1,S_1)+I(X_2;Y_2,S_2)\nonumber\\
&=&I(X_1;S_1)+I(X_1;Y_1|S_1)+I(X_2;S_2)+I(X_2;Y_2|S_2)\nonumber\\
&=&h(S_1)-h(S_1|X_1)+h(Y_1|S_1)-h(Y_1|S_1,X_1)+\nonumber\\
&&h(S_2)-h(S_2|X_2)+h(Y_2|S_2)-h(Y_2|S_2,X_2)\nonumber\\
&=&h(S_1)-h(S_1|X_1)+h(Y_1|S_1)-h(S_2)+\nonumber\\
&&h(S_2)-h(S_2|X_2)+h(Y_2|S_2)-h(S_1)\nonumber\\
&=&h(Y_1|S_1)+h(Y_2|S_2),
\end{eqnarray}
where for $0 \le m\le \frac{n}{2}$ we have $h(Y_k|S_k)\le n-m$, and also for $\frac{n}{2} \le m\le \frac{2n}{3}$ we have $h(Y_k|S_k)\le m$.

Since this holds for any two users, we can combine these upper bounds to get the result as in the statement.\end{proof}

Then, from \eqref{r-d}-\eqref{r-dn} we have ${R}_{sym} /{\log s}=\min\{{C}_{sym,\infty},{C}_{sym,0}+p\}$. For $K = 2$, this result has been shown to be tight in \cite{Vahid}, i.e., ${R}_{sym}$ is the symmetric capacity for $K=2$ , and we conjecture that ${R}_{sym}$ is the symmetric capacity for a general $K$.

\begin{figure}[htbp]
\centering
	\includegraphics[width=10cm]{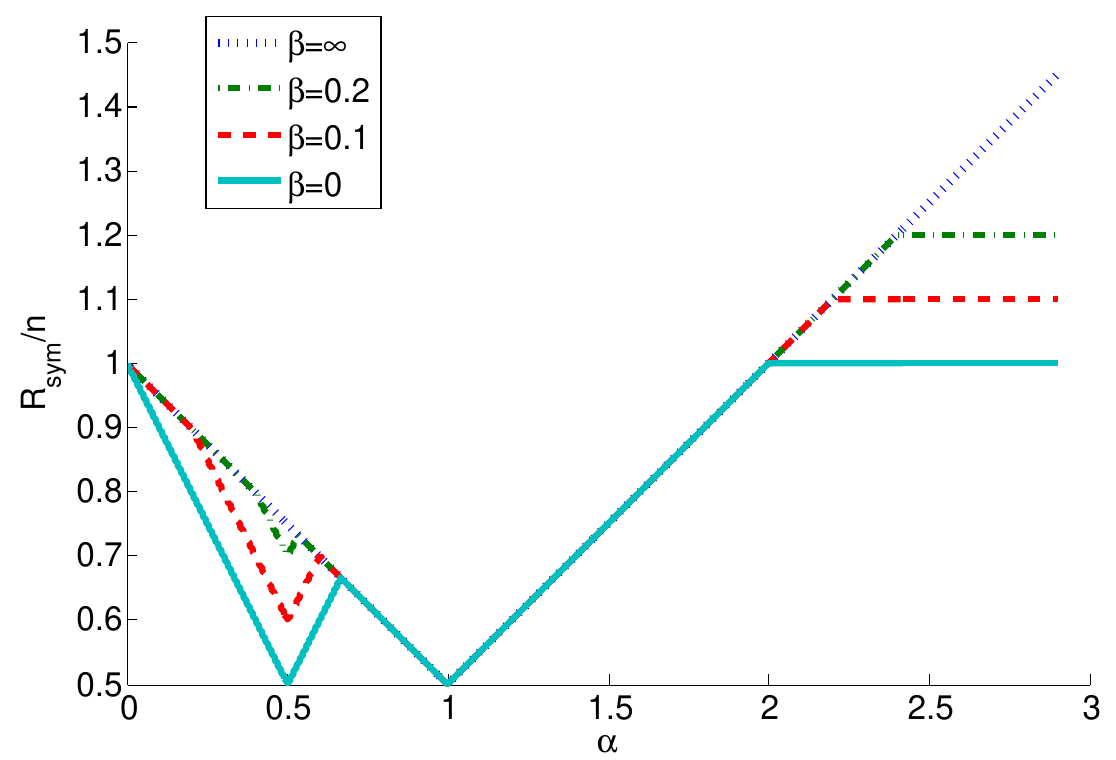}
\caption{Achievable symmetric rate of the deterministic IC with feedback.}
\label{fig:Example0d}
\end{figure}

Fig. \ref{fig:Example0d} illustrates the (normalized) symmetric rate as a function of $\alpha$, for different values of $\beta=0$ (i.e., Corollary \ref{zde}), $\beta=0.1$, $\beta=0.2$, and $\beta=\infty$ (i.e., Remark \ref{infde}).

The complete proof of Theorem \ref{thm_det} is given in Appendix \ref{apdx_innerdet}. In this section, we present several examples of the transmission schemes that  achieve the symmetric rate as claimed in Theorem \ref{thm_det}. For the range of $\frac{2}{3}\le \alpha \le 2$, we have ${C}_{sym,0}={C}_{sym,\infty}$. So, with limited feedback the symmetric capacity remains the same in this range. In the rest of this section, we will illustrate the proposed coding schemes for three ranges of $\alpha$. Generalization of the proposed coding strategy with specific channel parameter values for arbitrary $n$, $m$, and $p$ and its analysis is presented in Appendix \ref{apdx_innerdet}.

\subsubsection{Very Weak Interference Regime $(\alpha \le \frac{1}{2})$}

In the very weak interference regime, the goal is to achieve a symmetric rate of $R_{sym} /{\log s}=\min\{n-m+p,n-\frac{m}{2}\}$ bits per user. We propose an encoding scheme that operates on a block of length 2. The basic idea can be seen from Fig. \ref{fig:Example1}, where the coding scheme is demonstrated for $K=3$, $n=5$, $m=2$, and $p=0.5$.

\begin{figure}[htbp]
\centering
	\includegraphics[width=8cm]{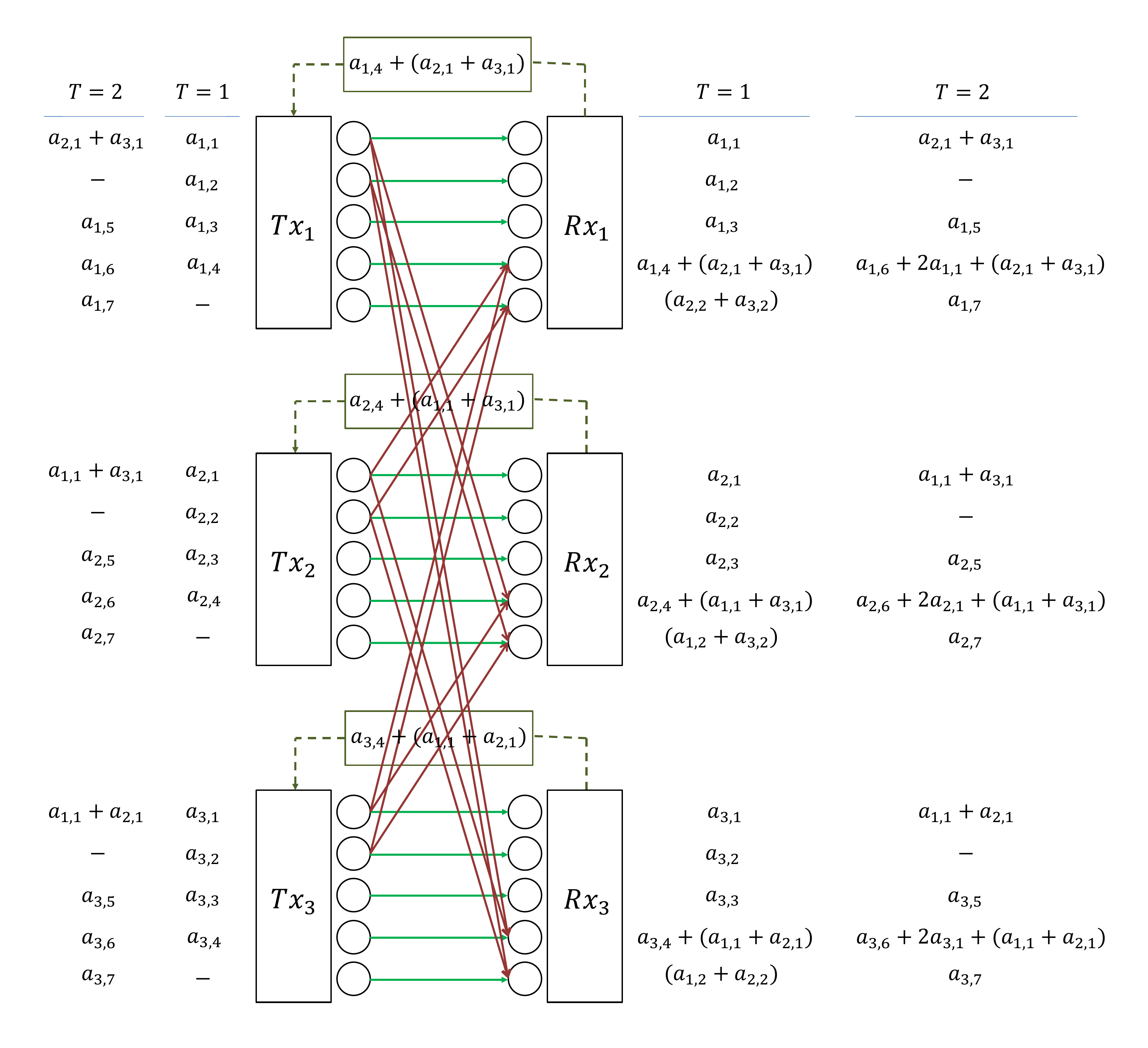}
\caption{Proposed coding scheme for the linear deterministic model in the very weak interference regime ($\alpha\le\frac{1}{2}$), for $K=3$, $n=5$, $m=2$ and $p=0.5$.}
\label{fig:Example1}
\end{figure}

As shown in Fig. \ref{fig:Example1}, the proposed coding scheme is able to convey seven intended symbols from each transmitter to its respective receiver in two channel uses, i.e., $2R_{sym}=7$. 


\subsubsection{Weak Interference Regime $(\frac{1}{2} \le \alpha \le \frac{2}{3})$}

In the weak interference regime, the goal is to achieve a symmetric rate of $R_{sym} /{\log s}=\min\{m+p,n-\frac{m}{2}\}$ bits per user. We propose an encoding scheme that operates on a block of length 2. The basic idea can be seen from Fig. \ref{fig:Example2}, where the coding scheme is demonstrated for $K=3$, $n=7$, $m=4$, and $p=0.5$.

\begin{figure}[htbp]
\centering
	\includegraphics[width=10cm]{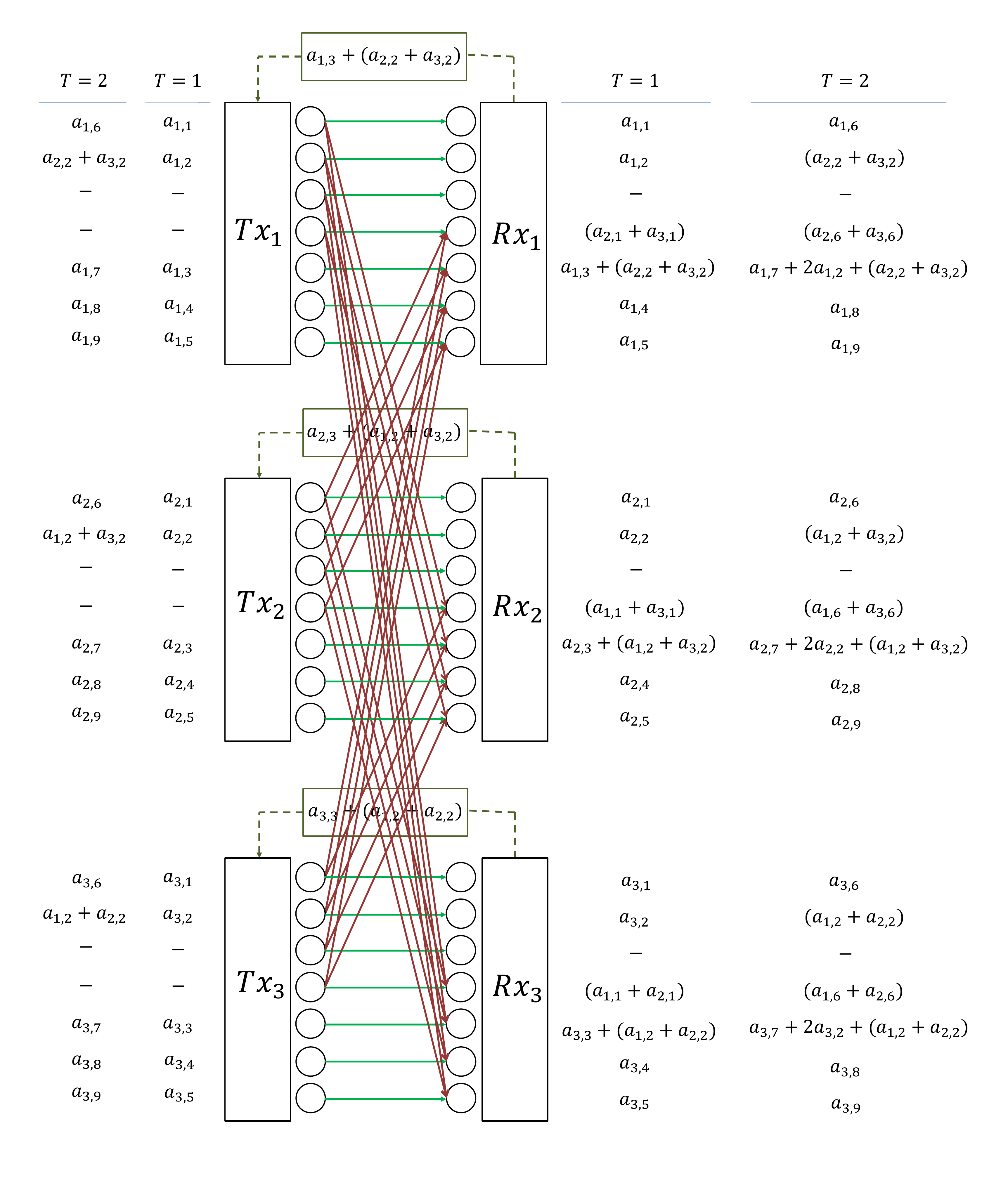}
\caption{Proposed coding scheme for the linear deterministic model in the weak interference regime ($\frac{1}{2}\le\alpha\le\frac{2}{3}$), for $K=3$, $n=7$, $m=4$ and $p=0.5$.}
\label{fig:Example2}
\end{figure}

As shown in Fig. \ref{fig:Example2}, the proposed coding scheme is able to convey nine intended symbols from each transmitter to its respective receiver in two channel uses, i.e., $2R_{sym}=9$. 


\subsubsection{Very Strong Interference Regime $(\alpha\ge 2)$}

In the very strong interference regime, the goal is to achieve a symmetric rate of $R_{sym} /{\log s}=\min\{n+p,\frac{m}{2}\}$ bits per user. We propose an encoding scheme that operates on a block of length 2. The basic idea can be seen from Fig. \ref{fig:Example3}, where the coding scheme is demonstrated for $K=3$, $n=2$, $m=6$, and $p=0.5$.

\begin{figure}[htbp]
\centering
	\includegraphics[width=10cm]{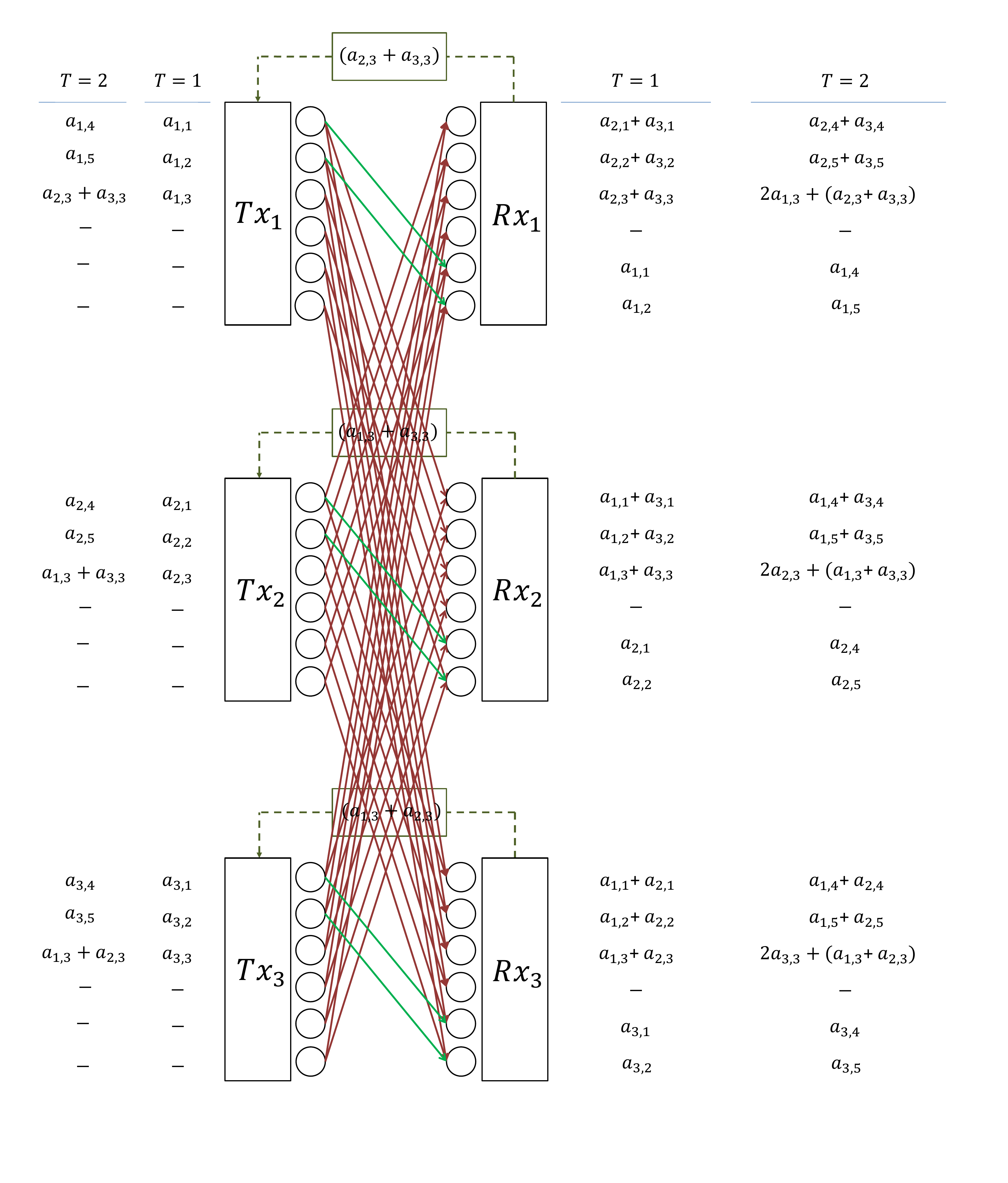}
\caption{Proposed coding scheme for the linear deterministic model in the very strong interference regime ($2 \le\alpha$), for $K=3$, $n=2$, $m=6$ and $p=0.5$.}
\label{fig:Example3}
\end{figure}

As shown in Fig. \ref{fig:Example3}, the proposed coding scheme is able to convey five intended symbols from each transmitter to its respective receiver in two channel uses, i.e., $2R_{sym}=5$. 


\begin{remark}\label{2p-p}
The feedback link is assumed to have a capacity of $C_{fb}$ (or $p{\log s}$ in the deterministic model) bits per channel use. However, the schemes described in the paper use  $2C_{fb}$ (or $2p{\log s}$) bits per channel use in odd slots while $0$ bits in even slots. We now show that the presented scheme where $N$ pairs of blocks are used, with feedback of $2C_{fb}$ (or $2p{\log s}$) bits in the first slot, and no feedback in the second can be adapted to a feedback of capacity $C_{fb}$ (or $p {\log s}$) bits in each slot. In order to see this, consider $2N$ slots in the proposed scheme where $a_{\lceil i/2\rceil,(i-1)\mod 2+1}$, $i = 1, \cdots, 2N$, is transmitted in the forward direction where the first subscript refers to one of the $N$ blocks and the second subscript represents the first transmission, where $a_{j,2}$ is a function of the  $2C_{fb}$ (or $2p {\log s}$) bits of feedback which is based on $a_{j,1}$, $j =1, \cdots, N$. The feedback based on $a_{j,1}$ is referred to as $\{b_{j,1}, b_{j,2}\}$ where each $b_{j,t}$ is of capacity $C_{fb}$ (or $p{\log s}$) bits, $j = 1, \cdots, N$.
\begin{enumerate}
\item $t=1$: The transmitter transmits $a_{1,1}$ and the receiver feeds back $b_{1,1}$.
\item $t=2j$, for $j\in\{1,2,\ldots\}$: The transmitter transmits $a_{j+1,1}$. The receiver has already sent $b_{j,1}$ and thus has information to transmit $b_{j,2}$. Thus, the receiver sends $b_{j,2}$.
\item $t=2j+1$, for $j\in\{1,2,\ldots\}$: The transmitter has received the feedback $b_{j,1}$ and $b_{j2}$ and thus can transmit $a_{j,2}$. The receiver has received $a_{j+1,1}$ and thus sends $b_{j+1,1}$ as  feedback.
\end{enumerate}
We note that we can have $a_{N+1,1}=0$ and thus $N-1$ blocks have been decoded, and as $N\to\infty$,  the same rate can be achieved using $C_{fb}$ (or $p {\log s}$) bits in each slot. Since the same procedure is done at each transmitter and receiver, the signals received at the destination will be the same to compute the desired feedback signals in the above method.
\end{remark}

\section{Gaussian Interference Channel}

\subsection{System Model and Problem Formulation}

In this section, we describe the $K$-user symmetric Gaussian IC which consists of $K$ transmitters and $K$ receivers. Transmitter $i$ has a message $W_i$ that it wishes to send to  receiver $i$. At time $t$, transmitter $i$ transmits a signal $X_{i}[t]$ over the channel with a power constraint ${{\rm tr}(\mathbb{E}(X_{i}X^{\dagger }_{i}))}\le 1$ ($A^\dagger$ is the conjugate of $A$).

The received signal at receiver $i$ at time $t$ is denoted as $Y_{i}[t]$ and can be written as
\begin{eqnarray}
Y_{i}[t]&=&\sqrt{\mathsf {SNR}}X_{i}[t]+\sum_{j=1, j\ne i}^{K}\sqrt{\mathsf {INR}}X_{j}[t]+Z_{i}[t],
\end{eqnarray}
where $Z_{i}[t]\sim\mathsf{CN}(0,1)$ is i.i.d. complex Gaussian noise, ${\mathsf {SNR}}$ is the received signal-to-noise-ratio from transmitter $i$  to receiver $i$, and ${\mathsf {INR}}$ is the received interference-to-noise-ratio from transmitter $i$ to receiver $j$ for $i, j\in \{1,...,K\},$ $i \ne j$. In other words, $\sqrt{\mathsf {SNR}}$ is the power attenuation factor of the direct links and $\sqrt{\mathsf {INR}}$ is the power attenuation factor of the interference links. Let $C_{FB}$ be the capacity of the feedback link from receiver $i$ to transmitter $i$, for all $i\in\{1, 2, \cdots, K\}$. We assume that the feedback channels are orthogonal to each other and they are also orthogonal to the data channels.

The encoding process at each node is causal, in the sense that the feedback signal transmitted from  receiver $i$ at time $t$ is a function of whatever is received over the data channel up to time $(t-1)$; and the transmitted signal by transmitter $i$ at time $t$ is a function of the message $W_i$ and the feedback received till time $t$. Each receiver decodes the message at $t=T$. If a message $W_i\in \{1,\dots ,2^{TR}\}$ transmitted from transmitter $i$ is decoded at receiver $i$ for each $i\in \{1, \cdots, K\}$ with error probability $e_{i,T}={\rm Pr}({\rm \ }\hat{W_i}\ne W_i)\to 0$ as $T\to\infty$, we say that the symmetric rate $R$ is achievable. We assume that $\mathsf{SNR},\mathsf{INR}> 1$ and also define
\begin{equation}
{\alpha}={\frac{{\log \mathsf{INR}}}{{\log \mathsf{SNR}}}\ }, \text{ and } {\beta}={\frac{C_{FB}}{{\log \mathsf{SNR}}}}.
\end{equation}

\subsection{Results of Gaussian IC Model}\label{vjhdz}

\subsubsection{Overview}

In this section, we will describe the achievability scheme for the symmetric $K$-user Gaussian IC with rate-limited feedback. This scheme will be shown to achieve a symmetric rate within a constant gap to a conjectured upper bound, which is the minimum of the symmetric rate upper bound with infinite feedback, and the sum of the symmetric rate upper bound without feedback and the amount of symmetric feedback.

The feedback helps decode the interference which can be useful for decoding the desired message. In addition, feedback helps to decode a part of the intended message that is conveyed from other transmitters through the feedback path. In a $K$-user IC, the receivers hear interference signals from multiple transmitters. Partial decoding of all interfering messages would dramatically decrease the maximum rate of the desired message. Thus, we decode the total interference from all the other users, without resolving the individual components of the interference. Our achievability strategy has two key features, namely, 1) interfering signals are aligned, and 2) the summation of interfering signals  belong to a message set of proper size which can be decoded at each receiver. Here, the first property is satisfied since the network is symmetric (all interfering links have the same gain), and therefore, all interfering messages are received at the same power level. In order to satisfy the second property, we use a common lattice code for all transmitters, instead of random Gaussian codebooks. The structure of a lattice codebook and its closedness with respect to summation imply that the sum of aligned interfering codewords observed at each receiver is still a codeword from the same codebook. This allows us to perform decoding by searching over a single codebook, instead of the Cartesian product of all codebooks.

Lattice codes are a class of codes that can achieve the capacity of the Gaussian channel \cite{Lattice1,Lattice2,nazer,nazer2012successive}, with lower complexity as compared to the conventional random codes. A $T$-dimensional lattice $\Lambda^T$ is a subset of $T$-tuples with real elements, such that $x, y \in \Lambda^T$ implies $-x\in \Lambda^T$ and $x+y \in \Lambda^T$. For an arbitrary $x\in {\mathbb R}^T$, we define $[x \ \mathrm{mod} \ \Lambda^T]=x-Q(x)$, where $Q(x)=\arg \min_{t \in \Lambda^T}||x-t||$, is the closest lattice point to $x$. The Voronoi cell of $\Lambda^T$, denoted by ${\mathcal{V}}_{\Lambda^T}$, is defined as ${\mathcal{V}}_{\Lambda^T}={x \in {\mathbb R}^T:Q(x)=0}$. The Voronoi volume $V({\mathcal{V}}_{\Lambda^T})$ and the second moment $\sigma^2(\Lambda^T)$ of the lattice are defined as $V({\mathcal{V}}_{\Lambda^T})=\int_{{\mathcal{V}}_{\Lambda^T}}dx$ and $\sigma^2(\Lambda^T)=\frac{\int_{{\mathcal{V}}_{\Lambda^T}}||x||^2dx}{TV({\mathcal{V}}_{\Lambda^T})}$, respectively. We further define the normalized second moment of $\Lambda^T$ as $G(\Lambda^T)=\frac{\sigma^2}{V({\mathcal{V}}_{\Lambda^T})^{2/T}}=\frac{\int_{{\mathcal{V}}_{\Lambda^T}}||x||^2dx}{TV({\mathcal{V}}_{\Lambda^T})^{1+\frac{2}{T}}}$. A sequence of lattices $\{\Lambda^T\}$ is called a {\em good quantization code} if $\lim_{T\rightarrow\infty}G(\Lambda^T)=\frac{1}{2\pi e}$. On the other hand, a sequence of lattices is known to be good for AWGN channel coding if $\lim_{T\rightarrow\infty}P[z^T\in {\mathcal{V}}_{\Lambda^T}]=1$, where $z^T\sim \mathcal{N}\left(0_{1\times T},\sigma^2(\Lambda^T)I_{T\times T}\right)$ is a random noise. It is shown in \cite{Lattice3} that there exist sequences of lattices $\{\Lambda^T\}$ that are simultaneously good for quantization and AWGN channel coding.

For the achievability scheme, we use a nested lattice code \cite{Lattice} which is generated using a good quantization lattice for shaping and a good channel coding lattice. We start with $T$-dimensional nested lattices $\Lambda_{c}\subseteq\Lambda_{f}$, where $\Lambda_{c}$ is a good quantization lattice with $\sigma^2(\Lambda_{c})=1$ and $G(\Lambda_{c})\approx1/2\pi e$, and $\Lambda_{f}$ is a good channel coding lattice. We construct a codebook ${\mathcal C}=\Lambda_{f} \cap {{\mathcal{V}}_{\Lambda_{c}}}$, where ${{\mathcal{V}}_{\Lambda_{c}}}$ is the Voronoi cell of the lattice $\Lambda_{c}$. Let $s$ be the lattice codeword in $\Lambda_{f} \cap {\mathcal{V}}_{\Lambda_{c}}$ to which the message is mapped, and build $X={[s - d ]} \ \mathrm{mod} \ \Lambda_{c}$ as the signal to be transmitted, where $d$ is a random dither uniformly distributed over ${\mathcal{V}}_{\Lambda_{c}}$, and shared between all users in the network. We will use the following properties of lattice codes \cite{Lattice2}:
\begin{enumerate}
\item Codebook ${\mathcal C}$ is a closed set with respect to summation under the ``$\mathrm{mod} \ \Lambda_{c}$" operation, i.e., if $x_1, x_2 \in {\mathcal C}$ are two codewords, then $(x_1+x_2) \ \mathrm{mod} \ \Lambda_{c} \in{\mathcal C}$ is also a codeword.

\item Lattice code ${\mathcal C}$ can be used to reliably transmit up to rate $R=\log(\mathsf {SNR})$ over a Gaussian channel modeled by $Y=\sqrt{\mathsf {SNR}}X+Z$ with $Z\sim \mathcal{N}\left(0,1\right)$.
\end{enumerate}

\subsubsection{Proposed Achievability Scheme}

We will now describe our achievability strategies for a $K$-user symmetric Gaussian IC, which is inspired by the proposed schemes in Section \ref{sec:det} for the deterministic IC. We split the result into three regions, denoted as very weak interference where $\alpha \le 1/2$, weak interference where $1/2<\alpha \le 2/3$, and strong interference where $\alpha\ge 2$. We do not consider the case of $2/3<\alpha<2$ since the upper bound for the symmetric capacity with perfect feedback in Theorem 3 of \cite{Mohajer} and the lower bound for the symmetric capacity with no feedback in Theorem 1 of \cite{Ordentlich} are within a constant of $\frac{1}{2}\log9+16+\frac{K-1}{2}+3\log K$ bits to each other for $2/3<\alpha<1$, and are within a constant of $\frac{1}{2}\log6+6+\frac{K-1}{2}+\log K$ bits to each other for $1<\alpha<2$. The achievability for $\alpha=1$ in \cite{Ordentlich,Mohajer} assumes that channel gains are outside an outage set. We also assume for our theorems that the channel gains are not in this outage set. We will use the notation $X^{(a:b)}\triangleq X^{(a)} + X^{(a+1)}+ \cdots + X^{(b)}$.

The next result gives the symmetric achievable rate for the very weak interference regime ($\alpha \le 1/2$).
\begin{theorem}\label{thm_gauss1}
For $\alpha \le 1/2$, a symmetric rate of $R_{sym}=\frac{R^{(1:4)}}{2}$ is achievable, for any $R^{(1)}, \cdots, R^{(4)}$ satisfying
\begin{eqnarray}
R^{(1)} &\le& \log \left(1+ \frac{{\mathsf {SNR}} P^{'(1)} }{{\mathsf {SNR}} P^{'(2:3)}+{\mathsf {SNR}}^{\alpha}P^{'(1:3)}(K-1)+1} \right),\label{1w}\\
R^{(2)} &\le& \log \left( 1+\frac{{\mathsf {SNR}} P^{'(2)}}{{\mathsf {SNR}} P^{'(3)}+{\mathsf {SNR}}^{\alpha} P^{'(1:3)}(K-1)+1} \right),\label{2w}\\
R^{(3)} &\le& \log \left(\frac{1}{K}+\frac{{\mathsf {SNR}} P^{'(3)} }{{\mathsf {SNR}}^{\alpha} P^{'(2:3)} (K-1)+1} \right),\label{3w}\\
R^{(1)} &\le& \log \left(\frac{1}{K}+\frac{{\mathsf {SNR}}^{\alpha} P^{'(1)} }{{\mathsf {SNR}}^{\alpha} P^{'(2:3)}(K-1)+1} \right),\label{4w}\\
R^{(1)}  &\le& 2 C_{FB} - \log K ,\label{7w1}\\
R^{(3)}  &\le& 2 C_{FB} - \log K ,\label{7w2}\\
R^{(1)} &\le& \log \left(\frac{1}{K-1}+\frac{(\sqrt{\mathsf {SNR}}{\left(\frac{1}{K-1}\right)}+\sqrt{{\mathsf {SNR}}^{\alpha}}\left(\frac{K-2}{K-1}\right))^{2} P^{''(1)} }{{\mathsf {SNR}}P^{''(4)}+{\mathsf {SNR}}^{\alpha} (K-1)P^{''(4)}+1} \right),\label{5w}\\
R^{(4)} &\le& \log \left( 1+ \frac{{\mathsf {SNR}} P^{''(4)}}{{\mathsf {SNR}}^{\alpha}(K-1)P^{''(4)}+1} \right),\label{6w}
\end{eqnarray}
for any non-negative set of power values that satisfy $P^{'(1:3)}=1$, $P^{''(1)}+P^{''(4)}=1$, and ${\mathsf {SNR}}P^{'(3)}={\mathsf {SNR}}^{\alpha}P^{'(1)}$.
\end{theorem}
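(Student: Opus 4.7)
The plan is to lift the deterministic very-weak-interference scheme of Fig.\ \ref{fig:Example1} to the Gaussian setting using a common nested lattice codebook and a two-channel-use block structure that aligns the interference at each receiver. First I would split each message $W_i$ into four independent sub-messages $W_i^{(1)},\ldots,W_i^{(4)}$ of rates $R^{(1)},\ldots,R^{(4)}$, and encode each as a codeword $X_i^{(\ell)}$ drawn from the common lattice codebook $\mathcal{C}=\Lambda_f\cap\nu_c$ (with $\sigma^2(\Lambda_c)=1$) scaled to power $\mu^{(\ell)}$. The alignment equality ${\mathsf{SNR}}\,\mu^{(3)}={\mathsf{SNR}}^{\alpha}\,\mu^{(1)}$ is imposed so that the desired layer-$3$ codeword $X_i^{(3)}$ and each interfering layer-$1$ codeword $X_j^{(1)}$, $j\ne i$, arrive at the intended receiver at identical power; by the lattice sum-closure property (item~1 of the Gaussian section), their $\bmod\,\Lambda_c$ sum $S_i\triangleq\sum_{j\ne i}X_j^{(1)}$ is still a codeword of rate $R^{(1)}$.

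In slot~$1$, transmitter $i$ sends $X_i^{(1)}+X_i^{(2)}+X_i^{(3)}$, and receiver $i$ performs top-down successive decoding of $X_i^{(1)}$, then $X_i^{(2)}$, then $X_i^{(3)}$, treating all interferers and all remaining layers as Gaussian noise. Invoking property~2 of the nested lattice codebook (which delivers $\log({\mathsf{SNR}})$ per Gaussian channel use) yields exactly the right-hand sides of (\ref{1w})--(\ref{3w}), with the factor $K-1$ appearing because the $K-1$ interferers contribute independent Gaussian-like terms at each layer. The aligned interference $S_i$ is then decoded jointly with $X_i^{(3)}$ at the bottom power level, yielding (\ref{4w}). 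Between slots, receiver $i$ feeds back the indices of $S_i$ and of the analogous combined codeword at the $R^{(3)}$ level; each costs at most $R^{(1)}$ or $R^{(3)}$ bits within the $2C_{FB}$ bits available per two-slot block, producing (\ref{7w1})--(\ref{7w2}). Because transmitter $i$ knows its own $X_i^{(1)}$ it subtracts it from the feedback to recover $S_i$, and in slot~$2$ it transmits a properly scaled combination of $S_i$ with a fresh codeword $X_i^{(4)}$ of power $\mu^{(4)}$. Combining slot~$1$ and slot~$2$ at receiver~$i$ produces a coherent beamforming gain for $X_i^{(1)}$, since each forwarded $S_j$ with $j\ne i$ contains $X_i^{(1)}$; tracking the direct-link and cross-link coefficients yields the numerator $\bigl(\tfrac{\sqrt{{\mathsf{SNR}}}}{K-1}+\tfrac{K-2}{K-1}\sqrt{{\mathsf{SNR}}^{\alpha}}\bigr)^{2}\mu^{(1)}$ of (\ref{5w}). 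Finally $X_i^{(4)}$ is decoded in slot~$2$ treating the remainder as noise, giving (\ref{6w}). Since $R^{(1:4)}$ bits per user are delivered in two channel uses, $R_{sym}=R^{(1:4)}/2$.

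The main obstacle I expect is the error-probability analysis for the joint decoding of $X_i^{(3)}$ and the $(K-1)$-fold aligned lattice sum $S_i$ at the bottom power level: one must check that after the $\bmod\,\Lambda_c$ reduction $S_i\in\mathcal{C}$ (closure) and that treating the non-aligned layers as effective Gaussian noise of power ${\mathsf{SNR}}^{\alpha}\mu^{(2:3)}(K-1)+\mu^{(1:3)}$ achieves the rate in (\ref{4w}); this is where the multiplicity $K-1$ enters the denominators of (\ref{1w})--(\ref{4w}) and where the goodness-for-AWGN property of $\Lambda_f$ and the goodness-for-quantization property of $\Lambda_c$ must both be invoked. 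A secondary subtlety is the slot-$2$ combining identity: one has to unfold exactly how $X_i^{(1)}$ re-enters at receiver~$i$ through the other transmitters' forwarded $S_j$'s, then normalize the composite transmit signals to unit power, which fixes the coefficients $\tfrac{1}{K-1}$ and $\tfrac{K-2}{K-1}$ in (\ref{5w}). Once these two verifications are in place, constraints (\ref{2w})--(\ref{3w}) and (\ref{6w}) follow from routine Gaussian/lattice decoding with successive interference cancellation, and (\ref{7w1})--(\ref{7w2}) are just the capacity of the rate-limited feedback link applied over the two-slot block.
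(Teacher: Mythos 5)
Your overall architecture (lattice alignment, sum decoding at the bottom power level, feedback of a lattice point, slot-2 beamforming giving the coefficient $\bigl(\tfrac{\sqrt{\mathsf{SNR}}}{K-1}+\tfrac{K-2}{K-1}\sqrt{\mathsf{SNR}^{\alpha}}\bigr)^{2}$, and final decoding of $X^{(4)}$) matches the paper's, but your slot-1 decoding step contains a genuine error that breaks the chain. You claim that receiver $i$ successively decodes $X_i^{(1)}$, $X_i^{(2)}$, \emph{and then $X_i^{(3)}$}, treating everything else as noise, and that this yields \eqref{3w}. It cannot: by the alignment condition $\mathsf{SNR}\,\mu^{(3)}=\mathsf{SNR}^{\alpha}\mu^{(1)}$, each interfering codeword $X_j^{(1)}$, $j\ne i$, arrives at \emph{exactly} the same power as $X_i^{(3)}$, so decoding $X_i^{(3)}$ individually while treating the aligned interference as noise leaves an SINR below $\tfrac{1}{K-1}<1$, i.e.\ essentially zero rate --- not the right-hand side of \eqref{3w}, whose denominator conspicuously omits the aligned-interference term $\mathsf{SNR}^{\alpha}\mu^{(1)}(K-1)$. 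The only thing decodable at that power level is the single lattice point $I=X_i^{(3)}+\sum_{j\ne i}X_j^{(1)}$ (codebook closure), and since the rate of $I$ is the maximum of the rates of its summands, decoding $I$ is what requires \emph{both} \eqref{3w} and \eqref{4w}; $X_i^{(3)}$ itself is not resolvable in slot 1 at all.

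This error propagates into your feedback step. Receiver $i$ cannot feed back ``the index of $S_i=\sum_{j\ne i}X_j^{(1)}$,'' because after slot 1 it knows only the sum $I$, not $S_i$ and $X_i^{(3)}$ separately; and feeding back two indices, as you describe, would cost $R^{(1)}+R^{(3)}$ feedback bits, a strictly stronger requirement than \eqref{7w1}--\eqref{7w2}. In the paper's proof the receiver feeds back the single codeword $I$, whose index costs $\max\{R^{(1)},R^{(3)}\}\le 2C_{FB}$ --- exactly \eqref{7w1}--\eqref{7w2}; the \emph{transmitter} then subtracts its own $X_i^{(3)}$ (not $X_i^{(1)}$, which does not even appear in $I$) to obtain $S_i$; and the receiver recovers $X_i^{(3)}$ only at the end of slot 2, as the difference $I-I_2$ after decoding $I_2=S_i$ via the coherent-combining constraint \eqref{5w}. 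As written, your scheme never delivers $X_i^{(3)}$ by any valid mechanism, so it does not achieve the claimed sum $R^{(1:4)}/2$. The fix is to replace individual decoding of $X_i^{(3)}$ by sum decoding of $I$ (attributing both \eqref{3w} and \eqref{4w} to that one step), feed back $I$ alone, and add the reconstruction step $X_i^{(3)}=I-I_2$ in slot 2.
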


\begin{proof}
Here, we will describe the achievability scheme only for the first user. Due to the symmetry of the scheme, the achievability for the other users is similar.

We take $M_1=\{M_1^{(1)}, M_1^{(2)}, M_1^{(3)}, M_1^{(4)}\}$ as the messages to be transmitted by the first transmitter. In order to encode $M_1^{(i)}$, for $i\in \{1,\ldots, 4\}$, we use the common quantization lattice ($\Lambda_{c_i}=\Lambda_{c}$, $i = 1,\ldots, 4$) but different channel coding lattices ($\Lambda_{f_i}$). The different codebooks ${\mathcal C_i} = \Lambda_{f_i} \cap {\mathcal{V}}_{\Lambda_{c}}$ are assumed to be of size $2^{T R^{(i)}}$. Section III of \cite{Lattice} gives a detailed construction of nested lattice codes. Let $s_1^{(i)}$ be the lattice codeword in $\Lambda_{f_i} \cap {\mathcal{V}}_{\Lambda_{c}}$ to which $M_1^{(i)}$ is mapped, and take $X_1^{(i)}={[s_1^{(i)}-d_1^{(i)}]} \ \mathrm{mod} \ \Lambda_{c}$ where $d_1^{(i)}$ is a random dither uniformly distributed over ${\mathcal{V}}_{\Lambda_{c}}$, and shared between all users in the network. The dithered lattice points can be treated as Gaussian noise in the analysis as shown in
Appendix A of \cite{nazer}. We also take $X_1=\{X_1^{(1)}, X_1^{(2)}, X_1^{(3)}, X_1^{(4)}\}$ as the set of signals that the first user transmits during two consecutive time-slots.

The encoded symbol $X_1^{(i)}$, for $i\in \{1, 2, 3, 4\}$, is of rate $R^{(i)}$ using the lattice codes, for $i\in \{1, 2, 3, 4\}$. The overall rate is thus  $R=\frac{R^{(1:4)}}{2}$. Let $P^{'(i)}$ be the power attenuation of $X_1^{(i)}$ transmitted in the first round, and  $P^{''(i)}$ be the power attenuation of the $X_1^{(i)}$ transmitted in the second round. The power allocations during the two rounds are chosen as
\begin{eqnarray}
P^{'(1)} = \frac{\mu^{(1)}}{\mu^{(1:3)}}, P^{'(2)} = \frac{\mu^{(2)}}{\mu^{(1:3)}}, P^{'(3)} = \frac{\mu^{(3)}}{\mu^{(1:3)}}, P^{'(4)}=0,\nonumber\\
P^{''(1)} = \frac{\mu^{(1)}}{\mu^{(1)}+\mu^{(4)}}, P^{''(2)} = P^{''(3)} =0, P^{''(4)} = \frac{\mu^{(4)}}{\mu^{(1)}+\mu^{(4)}}.
\end{eqnarray}

{\bf Transmission in the first time-slot:} In the first time-slot, the $j^{\text{th}}$ transmitter transmits $\sum_{i=1}^3\sqrt{P^{'(i)}}X_j^{(i)}$, where $X_j^{(i)}$ is of length $T$, for $j \in \{1, \cdots, K\}$.

{\bf Decoding and feedback:} The first receiver  receives
\begin{equation}
Y^{(1)}_1=\sqrt{\mathsf {SNR}}\sum_{i=1}^3\sqrt{P^{'(i)}}X_1^{(i)}+\sqrt{\mathsf {INR}}\sum\limits_{j\neq 1}\sum_{i=1}^3\sqrt{P^{'(i)}}X_j^{(i)}+Z^{(1)}_1.
\end{equation}
It decodes $X^{(1)}_1$, and consequently $s^{(1)}_1$, by treating the rest of the signals as noise. The signal power is $\mathsf{SNR} P^{'(1)}$ and the interference plus noise power is
\begin{equation}
1+ {\mathsf {SNR}} P^{'(2:3)}+{\mathsf {INR}}P^{'(1:3)}(K-1),
\end{equation}
and thus the decoding can be performed since \eqref{1w} holds. After removing $X^{(1)}_1$, then $X^{(2)}_1$, and consequently $s^{(2)}_1$, is decoded by treating the rest as noise. Since the signal power is $\mathsf{SNR} P^{'(2)}$ and the interference plus noise power is
\begin{equation}
1+ {\mathsf {SNR}} P^{'(3)}+{\mathsf {INR}}P^{'(1:3)}(K-1),
\end{equation}
$X^{(2)}_1$ can be decoded since \eqref{2w} holds.

The residual received signal after the contribution of $X^{(1)}_1$ and $X^{(2)}_1$ is removed is
\begin{equation}
\sqrt{\mathsf {SNR}}\sqrt{P^{'(3)}}X_1^{(3)}+\sqrt{\mathsf {INR}}\sum\limits_{j\neq 1}{\sum_{i=1}^3\sqrt{P^{'(i)}}X_j^{(i)}}+Z^{(1)}_1.
\end{equation}

The signal $I_1 \triangleq X^{(3)}_1 + \sum\limits_{j\neq 1}{X_j^{(1)}}$ can be recovered if  \eqref{3w} and \eqref{4w} hold, which follows from Lemma \ref{sjkhgsjkdh2} by considering $X^{(3)}_1$, $X_2^{(1)},\dots,X_K^{(1)}$ as the $K$ signals in the statement of Lemma \ref{sjkhgsjkdh2} which are all received at the same power level (${\mathsf{SNR}}{P^{'(3)}}={\mathsf{INR}}{P^{'(1)}}$).


After obtaining $I_1$, it is sent back to the transmitter. It can be verified using Lemma \ref{sjk2} that the rate of the feedback signal is smaller than the feedback capacity if  \eqref{7w1} and \eqref{7w2} hold.


{\bf Transmission in the second time-slot:} The first transmitter has received $I_1 = X^{(3)}_1 + \sum\limits_{j\neq 1}{X_j^{(1)}}$ from feedback. Since the transmitter already knows $X^{(3)}_1$, and thus obtains $\sum\limits_{j\neq 1}{X_j^{(1)}}$, and consequently transmits
\begin{equation}
\frac{\sqrt{P^{''(1)}}}{K-1}\sum\limits_{j\neq 1}{X_j^{(1)}}+\sqrt{P^{''(4)}}X_1^{(4)},
\end{equation}
and similarly, in general, the $i^{\text{th}}$ transmitter, $\forall i\in \{1, \cdots, K\}$, sends
\begin{equation}
\frac{\sqrt{P^{''(1)}}}{K-1}\sum\limits_{j\neq i}{X_j^{(1)}}+\sqrt{P^{''(4)}}X_i^{(4)},
\end{equation}

{\bf Decoding:} The first receiver receives the signal
\begin{equation}
Y^{(2)}_1=\sqrt{\mathsf {SNR}}\left(\frac{\sqrt{P^{''(1)}}}{K-1}\sum\limits_{j\neq 1}{X_j^{(1)}}+\sqrt{P^{''(4)}}X_1^{(4)}\right)+\sqrt{\mathsf {INR}}\sum\limits_{j\neq 1}{\left(\frac{\sqrt{P^{''(1)}}}{K-1}\sum\limits_{i\neq j}{X_i^{(1)}}+\sqrt{P^{''(4)}}X_j^{(4)}\right)}+Z^{(2)}_1.
\end{equation}

First, the receiver subtracts the  $X_1^{(1)}$ term from the received signal and obtains 
\begin{equation}
Y'^{(2)}_1=\left(\sqrt{\mathsf {SNR}}+(K-2)\sqrt{\mathsf {INR}}\right)\frac{\sqrt{P^{''(1)}}}{K-1}\sum\limits_{j\neq 1}{X_j^{(1)}}+\sqrt{\mathsf {SNR}}\sqrt{P^{''(4)}}X_1^{(4)}+\sqrt{\mathsf {INR}}\sqrt{P^{''(4)}}\sum\limits_{j\neq 1}X_j^{(4)}+Z^{(2)}_1.
\end{equation}
Let $I_2 \triangleq \sum\limits_{j\neq 1}{X_j^{(1)}}$. The receiver obtains $I_2$ treating $\sqrt{\mathsf {SNR}}\sqrt{P^{''(4)}}X_1^{(4)} + \sqrt{\mathsf {INR}}\sqrt{P^{''(4)}}\sum\limits_{j\neq 1}X_j^{(4)}$ as noise. It can be seen that $I_2$ can be obtained if  \eqref{5w} holds for $R_1$, which follows from Lemma \ref{sjkhgsjkdh2} (with  $X^{(1)}_2,\dots,X_K^{(1)}$ as the $(K-1)$ signals). Having decoded $I_1$ and $I_2$, then $X_1^{(3)}$ can be decoded since it is the difference of the two. Having $I_2$ decoded, the residual signal is
\begin{equation}
\sqrt{\mathsf {SNR}}\sqrt{P^{''(4)}}X_1^{(4)} + \sqrt{\mathsf {INR}}\sqrt{P^{''(4)}}\sum\limits_{j\neq 1}X_j^{(4)} + Z^{(2)}_1,
\end{equation}
from which $X^{(4)}_1$ can be decoded by treating
\begin{equation}
\sqrt{\mathsf {INR}}\sqrt{P^{''(4)}}\sum\limits_{j\neq 1}X_j^{(4)} + Z^{(2)}_1,
\end{equation}
as noise since \eqref{6w} holds.
\end{proof}

The next result gives the symmetric achievable rate for the weak interference regime ($1/2<\alpha \le 2/3$).

\begin{theorem}\label{thm_gauss2} For $1/2<\alpha \le 2/3$, the symmetric rate of $R_{sym}=\frac{R^{(1:6)}}{2}$ is achievable, for any $R^{(1)}, \cdots, R^{(6)}$ satisfying

\begin{eqnarray}
R^{(1)} &\le& \log \left(1+ \frac{{\mathsf {SNR}} P^{'(1)} }{{\mathsf {SNR}} P^{'(2:4)}+{\mathsf {SNR}}^{\alpha} (K-1)P^{'(1:4)}+1} \right),\label{1}\\
R^{(2)} &\le& \log \left(1+ \frac{{\mathsf {SNR}} P^{'(2)}}{{\mathsf {SNR}} P^{'(3:4)}+{\mathsf {SNR}}^{\alpha} (K-1)P^{'(1:4)}+1} \right),\label{2}\\
R^{(1)} &\le& \log \left( \frac{1}{K-1}+\frac{{\mathsf {SNR}}^{\alpha} P^{'(1)} }{{\mathsf {SNR}} P^{'(3:4)}+{\mathsf {SNR}}^{\alpha} (K-1)P^{'(2:4)}+1} \right),\label{3}\\
R^{(3)} &\le& \log \left(\frac{1}{K}+\frac{{\mathsf {SNR}}P^{'(3)}}{{\mathsf {SNR}}P^{'(4)}+{\mathsf {SNR}}^{\alpha}  (K-1)P^{'(3:4)}+1} \right),\label{4}\\
R^{(2)} &\le& \log \left(\frac{1}{K}+\frac{{\mathsf {SNR}}^{\alpha} P^{'(2)}}{{\mathsf {SNR}}P^{'(4)}+{\mathsf {SNR}}^{\alpha} (K-1)P^{'(3:4)}+1} \right),\label{5}\\
R^{(4)} &\le& \log \left( 1+\frac{{\mathsf {SNR}} P^{'(4)} }{{\mathsf {SNR}}^{\alpha} (K-1)P^{'(3:4)}+1} \right),\label{6}\\
R^{(2)} &\le& 2 C_{FB}-\log K,\label{11.1}\\
R^{(3)} &\le& 2 C_{FB}-\log K,\label{11.2}\\
R^{(5)} &\le& \log \left( 1+\frac{{\mathsf {SNR}} P^{''(5)}}{{\mathsf {SNR}}(P^{''(2)}+P^{''(6)})+{\mathsf {SNR}}^{\alpha}  (K-1)(P^{''(2)}+P^{''(5:6)})+1-{\mathsf {SNR}}^{\alpha} P^{''(2)} } \right),\label{7}\\
R^{(2)} &\le& \log \left(\frac{1}{K-1}+\frac{\left(\sqrt{\mathsf {SNR}}\left(\frac{1}{K-1}\right)+\sqrt{{\mathsf {SNR}}^{\alpha}}\left(\frac{K-2}{K-1}\right)\right)^{2} P^{''(2)}}{{\mathsf {SNR}}P^{''(6)}+{\mathsf {SNR}}^{\alpha} (K-1) P^{''(5:6)}+1} \right),\label{8}\\
R^{(5)} &\le& \log \left(\frac{1}{K-1}+\frac{{\mathsf {SNR}}^{\alpha}  P^{''(5)} }{{\mathsf {SNR}} P^{''(6)}+{\mathsf {SNR}}^{\alpha} (K-1) P^{''(6)}+1} \right),\label{9}\\
R^{(6)} &\le& \log \left( 1+\frac{{\mathsf {SNR}}P^{''(6)}}{{\mathsf {SNR}}^{\alpha} (K-1)P^{''(6)}+1} \right),\label{10}
\end{eqnarray}
for any non-negative set of power values that satisfy $P^{'(1:4)}=1$, $P^{''(2)}+P^{''(5:6)}=1$, and ${\mathsf {SNR}}P^{'(3)}={\mathsf {SNR}}^{\alpha}P^{'(2)}$.
\end{theorem}

\begin{proof}
We take $M_1=\{M_1^{(1)}, M_1^{(2)}, M_1^{(3)}, M_1^{(4)}, M_1^{(5)}, M_1^{(6)}\}$ as the messages to be transmitted by the first transmitter. In order to encode $M_1^{(i)}$, for $i\in \{1,\ldots, 6\}$, we use the common quantization lattice  but different channel coding lattices ($\Lambda_{c_i}=\Lambda_{c}$, $i = 1,\ldots, 6$). The different codebooks ${\mathcal C_i} = \Lambda_{f_i} \cap {\mathcal{V}}_{\Lambda_{c}}$ are assumed to be of size $2^{T R^{(i)}}$. Let $s_1^{(i)}$ be the lattice codeword in $\Lambda_{f_i} \cap {\mathcal{V}}_{\Lambda_{c}}$ to which $M_1^{(i)}$ is mapped, and take $X_1^{(i)}={[s_1^{(i)}-d_1^{(i)}]} \ \mathrm{mod} \ \Lambda_{c}$ where $d_1^{(i)}$ is a random dither uniformly distributed over ${\mathcal{V}}_{\Lambda_{c}}$, and shared between all users in the network. We also take $X_1=\{X_1^{(1)}, X_1^{(2)}, X_1^{(3)}, X_1^{(4)}, X_1^{(5)}, X_1^{(6)}\}$ as the set of signals that the first user transmits during two consecutive time-slots.

The encoded symbol $X_1^{(i)}$, for $i\in \{1, 2, 3, 4, 5, 6\}$, is of rate $R^{(i)}$ using the lattice codes, for $i\in \{1, 2, 3, 4, 5, 6\}$. The overall rate is thus  $R=\frac{R^{(1:6)}}{2}$. Let $P^{'(i)}$ be the power attenuation of $X_1^{(i)}$ transmitted in the first round, and  $P^{''(i)}$ be the power attenuation of the $X_1^{(i)}$ transmitted in the second round. The power allocations during the two rounds are chosen as
\begin{eqnarray}
P^{'(1)} = \frac{\mu^{(1)}}{\mu^{(1:4)}}, P^{'(2)} = \frac{\mu^{(2)}}{\mu^{(1:4)}}, P^{'(3)} = \frac{\mu^{(3)}}{\mu^{(1:4)}}, P^{'(4)} = \frac{\mu^{(4)}}{\mu^{(1:4)}},P^{'(5)} =P^{'(6)} =0,\nonumber\\
P^{''(1)} =0, P^{''(2)} = \frac{\mu^{(2)}}{\mu^{(2)}+\mu^{(5:6)}}, P^{''(3)} =P^{''(4)} =0, P^{''(5)} = \frac{\mu^{(5)}}{\mu^{(2)}+\mu^{(5:6)}}, P^{''(6)} = \frac{\mu^{(6)}}{\mu^{(2)}+\mu^{(5:6)}}.
\end{eqnarray}

{\bf Transmission in the first time-slot:} In the first time-slot, the $j^{\text{th}}$ transmitter transmits $\sum_{i=1}^4\sqrt{P^{'(i)}}X_j^{(i)}$, where $X_j^{(i)}$ is of length $T$, for $j \in \{1, \cdots, K\}$.

{\bf Decoding and feedback:} The first receiver receives
\begin{equation}
Y^{(1)}_1=\sqrt{\mathsf {SNR}}\sum_{i=1}^4\sqrt{P^{'(i)}}{X_1^{(i)}}+\sqrt{\mathsf {INR}}\sum\limits_{j\neq 1}{\sum_{i=1}^4\sqrt{P^{'(i)}}X_j^{(i)}}+Z^{(1)}_1.
\end{equation}
The receiver first decodes $X^{(1)}_1$, and consequently $s^{(1)}_1$, by treating the rest of the signals as noise. Due to the rate constraint \eqref{1}, $X^{(1)}_1$ can be decoded. After cancelling the signals containing $X^{(1)}_1$, then $X^{(2)}_1$, and consequently $s^{(2)}_1$, can further be decoded by treating the remaining signals as noise due to \eqref{2}. After cancelling $X^{(1)}_1$ and $X^{(2)}_1$, the receiver obtains the lattice point $\sum\limits_{i\neq 1}{X_i^{(1)}}$ as the sum of $(K-1)$ lattice points which are all received at the same power level, by treating all the other signals as noise. The signal power is $\mathsf{INR}{P^{'(1)}}$ and the interference plus noise power is
\begin{equation}
1+\mathsf{SNR}P^{'(3:4)} + \mathsf{INR}(K-1)P^{'(2:4)}.
\end{equation}
The lattice point can be obtained if \eqref{3} holds which can be seen using Lemma \ref{sjkhgsjkdh2} (with $X_2^{(1)},\dots,X_K^{(1)}$ as the $(K-1)$ signals which are all received at the same power level). Then, the residual signal is
\begin{align}
&\sqrt{\mathsf {SNR}}\sum_{i=3}^4\sqrt{P^{'(i)}}{X_1^{(i)}}+\sqrt{\mathsf {INR}}\sum\limits_{j\neq 1}{\sum_{i=2}^4\sqrt{P^{'(i)}}X_j^{(i)}}+Z^{(1)}_1 \nonumber\\
&= \sqrt{\mathsf {SNR}}\sqrt{P^{'(3)}}{X_1^{(3)}}+\sqrt{\mathsf {INR}}\sqrt{P^{'(2)}}\sum\limits_{j\neq 1}X_j^{(2)}+\sqrt{\mathsf {SNR}}\sqrt{P^{'(4)}}{X_1^{(4)}}+\sqrt{\mathsf {INR}}\sum\limits_{j\neq 1}{\sum_{i=3}^4\sqrt{P^{'(i)}}X_j^{(i)}}+Z^{(1)}_1.
\end{align}
Let $I_1 \triangleq X^{(3)}_1+ \sum\limits_{j\neq 1}{X_j^{(2)}}$. We can obtain $I_1$ treating $\sqrt{\mathsf {SNR}}\sqrt{P^{'(4)}}{X_1^{(4)}}+\sqrt{\mathsf {INR}}\sum\limits_{j\neq 1}{\sum_{i=3}^4\sqrt{P^{'(i)}}X_j^{(i)}}$ as noise if  \eqref{4} and \eqref{5} hold using Lemma  \ref{sjkhgsjkdh2}  (with  $X^{(3)}_1$, $X_2^{(2)},\dots,X_K^{(2)}$ as the $K$ signals) and ${\mathsf{SNR}}{P^{'(3)}}={\mathsf{INR}}{P^{'(2)}}$. After decoding $I_1$, it is sent back to the transmitter. It can be verified using Lemma \ref{sjk2} that the rate of the feedback signal is smaller than the feedback capacity if  \eqref{11.1} and \eqref{11.2} hold. After cancelling $I_1$, and then $X^{(4)}_1$, $s^{(4)}_1$ can be obtained due to  \eqref{6}.

{\bf Transmission in the second time-slot:} For the first transmitter, ${X_1^{(3)}}$ is known and $I_1$ is given from the feedback, and consequently the first transmitter obtains $\sum\limits_{j\neq 1}X_j^{(2)}$. Using this, it transmits
\begin{equation}
\frac{\sqrt{P^{''(2)}}}{K-1}\sum\limits_{j\neq 1}{X_j^{(2)}}+\sum_{i=5}^{6}{\sqrt {P^{''(i)}}}X_1^{(i)}.
\end{equation}
In general, the $k^{th}$ transmitter, $\forall k\in \{1, \cdots, K\}$, transmits
\begin{equation}
\frac{\sqrt{P^{''(2)}}}{K-1}\sum\limits_{j\neq k}{X_j^{(2)}}+\sum_{i=5}^{6}{\sqrt {P^{''(i)}}}X_k^{(i)}.
\end{equation}

{\bf Decoding:} The first receiver receives
\begin{equation}
Y^{(2)}_1=\sqrt{\mathsf {SNR}}\left(\frac{{\sqrt {P^{''(2)}}}}{K-1}\sum\limits_{j\neq 1}{X_j^{(2)}}+ \sum_{i=5}^{6}{\sqrt {P^{''(i)}}}X_1^{(i)}\right)+ \\ \sqrt{\mathsf {INR}}\sum\limits_{j\neq 1}{\left(\frac{P^{''(2)}}{K-1}\sum\limits_{i\neq j}{X_i^{(2)}}+ \sum_{i=5}^{6}{\sqrt {P^{''(i)}}}X_j^{(i)}\right)}+Z^{(2)}_1.
\end{equation}
Based on this, the receiver cancels the signal $X_1^{(2)}$ and obtains
\begin{equation}
\sqrt{\mathsf {SNR}}\left(\frac{{\sqrt {P^{''(2)}}}}{K-1}\sum\limits_{j\neq 1}{X_j^{(2)}}+ \sum_{i=5}^{6}{\sqrt {P^{''(i)}}}X_1^{(i)}\right)+ \\ \sqrt{\mathsf {INR}}\frac{P^{''(2)}(K-2)}{K-1} \sum\limits_{j\neq 1}{X_j^{(2)}} + \sqrt{\mathsf {INR}}\sum\limits_{j\neq 1}\sum_{i=5}^{6}{\sqrt {P^{''(i)}}}X_j^{(i)}+Z^{(2)}_1.
\end{equation}
From this residual signal, $X^{(5)}_1$ can be decoded by treating the other signals as noise due to \eqref{7}. Let $I_2 \triangleq \sum\limits_{j\neq 1}{X_j^{(2)}}$. Note that $I_2$ is a lattice point in ${\mathcal C}_2$, and thus we can obtain $I_2$ treating ${\sqrt {P^{''(6)}}}X_1^{(6)}+  \sqrt{\mathsf {INR}}\sum\limits_{j\neq 1}\sum_{i=5}^{6}{\sqrt {P^{''(i)}}}X_j^{(i)}$ as noise. Since the term $I_2$ is a lattice point in ${\mathcal C}_2$ which is a codebook of rate $R_2$, it can be obtained if  \eqref{8} holds for $R_2$, which follows from Lemma \ref{sjkhgsjkdh2}  (with  $X_2^{(2)},\dots,X_K^{(2)}$ as the $(K-1)$ signals). From  $I_1$ and $I_2$, $X_1^{(3)}$ can be obtained since it is the difference of the two. Further, $\sum\limits_{j\neq 1}{X_j^{(5)}}$ can be obtained after cancelling $I_2$ due to equation \eqref{9} and Lemma \ref{sjkhgsjkdh2}  (with $X_2^{(5)},\dots,X_K^{(5)}$ as the $(K-1)$ signals). After cancelling  $\sum\limits_{j\neq 1}{X_j^{(5)}}$, the residual signal is
\begin{equation}
{\sqrt {P^{''(6)}}}X_1^{(6)}+  \sqrt{\mathsf {INR}}\sum\limits_{j\neq 1}{\sqrt {P^{''(6)}}}X_j^{(6)}+Z^{(2)}_1.
\end{equation}
From this, $X^{(6)}_1$ can be decoded by treating $X_j^{(6)}$, $j\ne 1$ as noise due to \eqref{10}.
\end{proof}

The next result gives the symmetric achievable rate for the strong interference regime ($\alpha\ge 2$).

\begin{theorem}\label{thm_gauss3} For $\alpha \ge 2$, the symmetric rate of $R_{sym}=\frac{R^{(1:3)}}{2}$ is achievable, for any $R^{(1)}, \cdots, R^{(3)}$ satisfying
\begin{eqnarray}
R^{(1)} &\le& \log \left(\frac{1}{K-1} + \frac{{\mathsf {INR}} P^{'(1)} }{{\mathsf {SNR}} P^{'(1:2)}+{\mathsf {SNR}}^{\alpha}(K-1)P^{'(2)}+1} \right),\label{1s}\\
R^{(2)} &\le& \log \left(\frac{1}{K-1} + \frac{{\mathsf {INR}} P^{'(2)} }{{\mathsf {SNR}} P^{'(1:2)}+1 } \right),\label{2s}\\
R^{(1)} &\le& \log \left( 1+\frac{{\mathsf {SNR}} P^{'(1)} }{{\mathsf {SNR}}P^{'(2)}+1 } \right),\label{3s}\\
R^{(2)} &\le& 2 C_{FB}-\log{(K-1)},\label{7s}\\
R^{(3)} &\le& \log \left(\frac{1}{K-1} + \frac{{\mathsf {SNR}}^{\alpha} P^{''(3)} }{{\mathsf {SNR}}^{\alpha} P^{''(2)}+{\mathsf {SNR}}P^{''(3)}+1} \right),\label{4s}\\
R^{(2)} &\le& \log \left(1+ \frac{{\mathsf {SNR}}^{\alpha} P^{''(2)}}{{\mathsf {SNR}}P^{''(3)}+1} \right),\label{5s}\\
R^{(3)} &\le& \log \left( 1+\frac{{\mathsf {SNR}} P^{''(3)}}{1} \right),\label{6s}
\end{eqnarray}
for any non-negative set of power values that satisfy $P^{'(1:2)}=1$, and $P^{''(2:3)}=1$. 
\end{theorem}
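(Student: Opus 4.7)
The plan is to emulate, with appropriate modifications, the constructions used in the proofs of Theorems~\ref{thm_gauss1} and~\ref{thm_gauss2}. I would split each user's message across two consecutive channel uses into three unit-power lattice sub-codewords $X_k^{(1)}, X_k^{(2)}, X_k^{(3)}$ of rates $R^{(1)}, R^{(2)}, R^{(3)}$, all drawn from the common nested lattice codebook $\mathcal{C}=\Lambda_f\cap\nu_c$ so that any modular sum of transmitted codewords is again a codeword. Since the overall symmetric rate is $R^{(1:3)}/2$, it suffices to give a two-slot transmission, feedback, and decoding strategy under which every receiver recovers its intended triple whenever \eqref{1s}--\eqref{6s} hold. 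The power allocations I would use are $P^{'(1)}=\mu^{(1)}/\mu^{(1:2)}$, $P^{'(2)}=\mu^{(2)}/\mu^{(1:2)}$, $P^{'(3)}=0$ in the first slot and $P^{''(1)}=0$, $P^{''(2)}=\mu^{(2)}/\mu^{(2:3)}$, $P^{''(3)}=\mu^{(3)}/\mu^{(2:3)}$ in the second, and (in contrast with the previous two regimes) no explicit alignment relation among the $\mu^{(i)}$'s is required.

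In the first slot, transmitter $k$ broadcasts $\sqrt{P^{'(1)}}X_k^{(1)}+\sqrt{P^{'(2)}}X_k^{(2)}$. Since $\alpha\ge 2$ makes $\mathsf{INR}$ dominate $\mathsf{SNR}$, I would have receiver~$1$ decode successively: (i)~the lattice sum $\sum_{j\ne 1}X_j^{(1)}$, received with squared coefficient $\mathsf{INR}\cdot P^{'(1)}$ against the combined noise of $X_1^{(1:2)}$, the interferers' $X_j^{(2)}$, and the Gaussian term; (ii)~after cancellation, the lattice sum $\sum_{j\ne 1}X_j^{(2)}$ against only $X_1^{(1:2)}$ plus the Gaussian term; and (iii)~its own $X_1^{(1)}$ over the direct link against $X_1^{(2)}$ plus noise. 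Substituting the chosen $P^{'(i)}$ and clearing the common factor $\mu^{(1:2)}$ from numerator and denominator reproduces exactly \eqref{1s}, \eqref{2s}, and \eqref{3s}. No identity of the form $\mathsf{SNR}\mu^{(i)}=\mathsf{SNR}^{\alpha}\mu^{(j)}$ is needed here, because at each decoding stage the target is already the largest-energy remaining component.

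At the end of slot~$1$ receiver~$1$ holds the lattice point $I\triangleq\sum_{j\ne 1}X_j^{(2)}$ but still lacks $X_1^{(2)}$ and $X_1^{(3)}$. I would have it feed $I$ back to transmitter~$1$, which is feasible because $I$ has rate $R^{(2)}$ and the two-slot feedback budget is $2C_{FB}$, i.e.\ constraint \eqref{7s}. In slot~$2$ transmitter~$k$ then sends $\frac{\sqrt{P^{''(2)}}}{K-1}\sum_{j\ne k}X_j^{(2)}+\sqrt{P^{''(3)}}X_k^{(3)}$, which it can form from its received feedback. Expanding $\sum_{j\ne 1}\sum_{i\ne j}X_i^{(2)}$ at receiver~$1$ isolates $X_1^{(2)}$ with multiplicity $K-1$ and every other $X_i^{(2)}$ with multiplicity $K-2$, so $X_1^{(2)}$ appears with net coefficient $\sqrt{\mathsf{INR}}\sqrt{P^{''(2)}}$ while the component on $\sum_{i\ne 1}X_i^{(2)}$ is a known scalar multiple of $I$ and can be subtracted exactly. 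The residual then contains $X_1^{(2)}$, $X_1^{(3)}$, and the lattice sum $\sum_{j\ne 1}X_j^{(3)}$; successive decoding of $\sum_{j\ne 1}X_j^{(3)}$ via $\mathsf{INR}$, then $X_1^{(2)}$ via $\mathsf{INR}$, then $X_1^{(3)}$ via $\mathsf{SNR}$ produces exactly \eqref{4s}, \eqref{5s}, and \eqref{6s}, and by symmetry every other receiver recovers its triple under the same conditions.

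The main obstacle will be the multiplicity bookkeeping in the slot-$2$ superposition: one must verify that the $1/(K-1)$ scaling on the forwarded signal, together with the counts $K-1$ and $K-2$ coming from $\sum_{j\ne 1}\sum_{i\ne j}$ over the cross links, leaves $X_1^{(2)}$ with precisely an $\mathsf{INR}$-strength coefficient and matches the coefficient on $\sum_{i\ne 1}X_i^{(2)}$ against the receiver's stored $I$, so that the cancellation is exact $\pmod{\Lambda_c}$. Once that algebra is carried out, each SINR evaluation is mechanical, and the claimed rates are achieved with vanishing error probability by the goodness of $\Lambda_f$ for AWGN channel coding established in \cite{Lattice3}.
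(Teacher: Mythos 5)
Your proposal is correct and follows essentially the same route as the paper's own proof: identical three-way message split and power allocations, the same first-slot successive decoding of $\sum_{j\neq 1}X_j^{(1)}$, then $\sum_{j\neq 1}X_j^{(2)}$, then $X_1^{(1)}$ (yielding \eqref{1s}--\eqref{3s}), feedback of the lattice sum $I_2$ under \eqref{7s}, the same second-slot superposition $\frac{\sqrt{P^{''(2)}}}{K-1}\sum_{j\neq k}X_j^{(2)}+\sqrt{P^{''(3)}}X_k^{(3)}$, and the same final decoding order giving \eqref{4s}--\eqref{6s}. Your multiplicity bookkeeping (coefficient $K-1$ on $X_1^{(2)}$ and $K-2$ on $\sum_{i\neq 1}X_i^{(2)}$, the latter cancelled exactly using the stored $I_2$) is precisely the computation the paper performs.
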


\begin{proof}
We take $M_1=\{M_1^{(1)}, M_1^{(2)}, M_1^{(3)}\}$ as the messages to be transmitted by the first transmitter. In order to encode $M_1^{(i)}$, for $i\in \{1, 2, 3\}$, we use the common quantization lattice ($\Lambda_{c_i}=\Lambda_{c}$, $i = 1, 2, 3$) but different channel coding lattices ($\Lambda_{f_i}$, $i = 1, 2, 3$). The different codebooks ${\mathcal C_i} = \Lambda_{f_i} \cap {\mathcal{V}}_{\Lambda_{c}}$ are assumed to be of size $2^{T R^{(i)}}$. Let $s_1^{(i)}$ be the lattice codeword in $\Lambda_{f_i} \cap {\mathcal{V}}_{\Lambda_{c}}$ to which $M_1^{(i)}$ is mapped, and take $X_1^{(i)}={[s_1^{(i)}-d_1^{(i)}]} \ \mathrm{mod} \ \Lambda_{c}$ where $d_1^{(i)}$ is a random dither uniformly distributed over ${\mathcal{V}}_{\Lambda_{c}}$, and shared between all users in the network. We also take $X_1=\{X_1^{(1)}, X_1^{(2)}, X_1^{(3)}\}$ as the set of signals that the first user transmits during two consecutive time-slots.

The encoded symbol $X_1^{(i)}$, for $i\in \{1, 2, 3\}$, is of rate $R^{(i)}$ using the lattice codes, for $i\in \{1, 2, 3\}$. The overall rate is thus  $R=\frac{R^{(1:3)}}{2}$. Let $P^{'(i)}$ be the power attenuation of $X_1^{(i)}$ transmitted in the first round, and  $P^{''(i)}$ be the power attenuation of the $X_1^{(i)}$ transmitted in the second round. The power allocations during the two rounds are chosen as
\begin{eqnarray}
P^{'(1)} = \frac{\mu^{(1)}}{\mu^{(1:2)}}, P^{'(2)} = \frac{\mu^{(2)}}{\mu^{(1:2)}}, P^{'(3)}=0,\nonumber\\
P^{''(1)}=0, P^{''(2)} = \frac{\mu^{(2)}}{\mu^{(2:3)}}, P^{''(3)} = \frac{\mu^{(3)}}{\mu^{(2:3)}}.
\end{eqnarray}

{\bf Transmission in the first time-slot:} In the first time-slot, the $j^{th}$ transmitter, $\forall j\in \{1, \ldots, K\}$, transmits $\sum_{i=1}^2\sqrt{P^{'(i)}}X_j^{(i)}$, where $X_j^{(i)}$ is of length $T$, for $j \in \{1, \cdots, K\}$.

{\bf Decoding and feedback:}  The first receiver receives
\begin{eqnarray}
Y^{(1)}_1=\sqrt{\mathsf {SNR}}\sum_{i=1}^2\sqrt{P^{'(i)}}X_1^{(i)}+\sqrt{\mathsf {INR}}\sum\limits_{j\neq 1}\sum_{i=1}^2\sqrt{P^{'(i)}}X_j^{(i)}+Z^{(1)}_1.
\end{eqnarray}

Let $I_1 \triangleq \sum\limits_{j\neq 1}{X_j^{(1)}}$ and $I_2 \triangleq \sum\limits_{j\neq 1}{X_j^{(2)}}$. Note that $I_1$ is a lattice point in ${\mathcal C}_1$, and thus we can obtain $I_1$ treating the rest of the signals as noise if \eqref{1s} holds, which follows from Lemma \ref{sjkhgsjkdh2}  (with $X_2^{(1)},\dots,X_K^{(1)}$ as the $(K-1)$ signals which are all received at the same power level). Further, $I_2$ is a lattice point in ${\mathcal C}_2$, and  we can obtain $I_2$ treating the rest of the signals as noise if \eqref{2s} holds, which follows from  Lemma \ref{sjkhgsjkdh2} (with $X_2^{(2)},\dots,X_K^{(2)}$ as the $(K-1)$ signals which are all received at the same power level). After cancelling $I_2$, the residual signal is
\begin{eqnarray}
\sqrt{\mathsf {SNR}}\sum_{i=1}^2\sqrt{P^{'(i)}}X_1^{(i)}+Z^{(1)}_1,
\end{eqnarray}
from which $X^{(1)}_1$ can be obtained by treating $X^{(2)}_1$ as noise due to \eqref{3s}.

Also, after obtaining $I_2$, it is sent back to the transmitter. It can be verified using Lemma \ref{sjk2} that the rate of the feedback signal is smaller than the feedback capacity if  \eqref{7s} holds.

{\bf Transmission in the second time-slot:} The first transmitter has received $I_2$ and transmits \begin{eqnarray}
\frac{\sqrt{P^{''(2)}}}{K-1}I_2+\sqrt{P^{''(3)}}X_1^{(3)}.
\end{eqnarray}
In general, the $j^{\text{th}}$ transmitter transmits
\begin{eqnarray}
\frac{\sqrt{P^{''(2)}}}{K-1}\sum\limits_{i\neq j}{X_i^{(2)}}+\sqrt{P^{''(3)}}X_j^{(3)}.
\end{eqnarray}

{\bf Decoding:} The first receiver receives
\begin{eqnarray}
Y^{(2)}_1=\sqrt{\mathsf {SNR}}\left(\frac{\sqrt{P^{''(2)}}}{K-1}I_2+\sqrt{P^{''(3)}}X_1^{(3)}\right)+\\\sqrt{\mathsf {INR}}\sum\limits_{j\neq 1}{\left(\frac{\sqrt{P^{''(2)}}}{K-1}\sum\limits_{i\neq j}{X_i^{(2)}}+\sqrt{P^{''(3)}}X_j^{(3)}\right)}+Z^{(2)}_1.
\end{eqnarray}
Since the receiver knows $I_2$, it can be subtracted to get the residual signal
\begin{eqnarray}
\sqrt{\mathsf {INR}}{\sqrt{P^{''(2)}}}{X_1^{(2)}}+\sqrt{\mathsf {SNR}}\sqrt{P^{''(3)}}X_1^{(3)} +\sqrt{\mathsf {INR}}\sqrt{P^{''(3)}}\sum\limits_{j\neq 1}X_j^{(3)}+Z^{(2)}_1.
\end{eqnarray}
From this, $\sum\limits_{j\neq 1}X_j^{(3)}$ can be obtained if \eqref{4s} holds, which  follows from Lemma \ref{sjkhgsjkdh2} (with $X_2^{(3)},\dots,X_K^{(3)}$ as the $(K-1)$ signals which are all received at the same power level). Afterwards, the residual signal is
\begin{eqnarray}
\sqrt{\mathsf {INR}}{\sqrt{P^{''(2)}}}{X_1^{(2)}}+\sqrt{\mathsf {SNR}}\sqrt{P^{''(3)}}X_1^{(3)}+Z^{(2)}_1.
\end{eqnarray}
Then, $X^{(2)}_1$ can be decoded by treating $X_1^{(3)}$ as noise due to \eqref{5s}. Finally, after cancelling $X_1^{(2)}$, then $X^{(3)}_1$ can be decoded due to \eqref{6s}.
\end{proof}

The following corollary improves the achievability region in the above theorems for the case of $K=2$.
\begin{corollary}
For the case of two-user channel ($K=2$):
\begin{itemize}
\item Theorem \ref{thm_gauss1} without extra $\log{K}$ terms  in equations \eqref{7w1}-\eqref{7w2} still hold.
\item Theorem \ref{thm_gauss2} without extra $\log{K}$ terms  in equations \eqref{11.1}-\eqref{11.2} still hold.
\end{itemize}
More formally, for the case of two-user channel ($K = 2$), the region given in Theorem 2 is still achievable if we replace
\begin{eqnarray}
R^{(1)}  &\le& 2 C_{FB},\nonumber\\
R^{(3)}  &\le& 2 C_{FB},\nonumber
\end{eqnarray}
with equations \eqref{7w1}-\eqref{7w2}, and the region given in Theorem 3 is still achievable if we replace
\begin{eqnarray}
R^{(2)} &\le& 2 C_{FB},\nonumber\\
R^{(3)} &\le& 2 C_{FB}.\nonumber
\end{eqnarray}
with equations \eqref{11.1}-\eqref{11.2}.
\end{corollary}
\begin{proof}
Here we only provide the proof for the statement on the case of $K=2$ for Theorem \ref{thm_gauss1}. The statement regarding Theorem \ref{thm_gauss2} can be shown similarly. Since we have set ${\mathsf{SNR}}{P^{'(3)}}={\mathsf{INR}}{P^{'(1)}}$,  then $X^{(3)}_1 + X_2^{(1)}$ is a lattice point. We consider a slightly modified achievability strategy than that in Theorem \ref{thm_gauss1},  where after receiver $1$ derives $X^{(3)}_1 + X_2^{(1)}$ by treating other codewords as noise, instead of sending back $X^{(3)}_1 + X_2^{(1)}$ as in Theorem \ref{thm_gauss1}, we only feed back $[X^{(3)}_1 + X_2^{(1)}] \ {\Lambda_{c}}$ to transmitter $1$. The rate of the feedback is lower than the capacity of the feedback link, $R^{(i)} \le 2 C_{FB}$, $i=1,3$. Then, transmitter $1$, given $[X^{(3)}_1 + X_2^{(1)}] \ {\Lambda_{c}}$ and  $X^{(3)}_1$, can find $X_2^{(1)}$ and the rest of the strategy is the same as that in Theorem \ref{thm_gauss1}.
\end{proof}

{\bf Proof.} Here we only provide the proof for the statement on the case of $K=2$ for Theorem 2. The statement regarding Theorem 3 can be shown similarly. Since we have set ${\mathsf{SNR}}{P^{'(3)}}={\mathsf{INR}}{P^{'(1)}}$, then $X^{(3)}_1 + X_2^{(1)}$ is a lattice point. We consider a slightly modified achievability strategy than that in Theorem 2,  where after receiver $1$ derives $X^{(3)}_1 + X_2^{(1)}$ by treating other codewords as noise, instead of sending back $X^{(3)}_1 + X_2^{(1)}$ as in Theorem 2, we only feed back $[X^{(3)}_1 + X_2^{(1)}] \ {\Lambda_{c}}$ to transmitter $1$. The rate of the feedback is lower than the capacity of the feedback link, $R^{(i)} \le 2 C_{FB}$, $i=1,3$. Then, transmitter $1$, given $[X^{(3)}_1 + X_2^{(1)}] \ {\Lambda_{c}}$ and  $X^{(3)}_1$, can find $X_2^{(1)}$ and the rest of the strategy is the same as that in Theorem 2.$\hspace{5in}\blacksquare$

\begin{remark}
Based on Lemma 1 of \cite{nazer2012successive}, as long as constraints (8)-(15), (26)-(37), and (47)-(53) hold in the statements of Theorems 2, 3, and 4, respectively, in all places that sum of codewords are declared decodable over modulo algebra in proofs of these theorems, then consequently sum of codewords are decodable over reals, too.
\end{remark}


\subsubsection{A Conjectured Upper Bound}

According to Theorem 1 of \cite{Etkin}, an upper bound on the symmetric capacity without feedback is given by
\begin{eqnarray}\label{sefr}
{R}^u_{sym,0}=\min\left\{\log(1+{\mathsf{SNR}}),\log\left(1+{\mathsf{INR}}+\frac{{\mathsf{SNR}}}{1+\mathsf{INR}}\right)\right\}.
\end{eqnarray}
Moreover, according to Section VI of \cite{Mohajer}, an upper bound on the symmetric capacity with infinite feedback is given by
\begin{eqnarray}\label{binaha}
{R}^u_{sym,\infty}=\frac{1}{2}\log\left(1+\frac{{\mathsf{SNR}}}{1+\mathsf{INR}}\right)+
\frac{1}{2}\log\left(1+{\mathsf{SNR}}+{\mathsf{INR}}\right)+\frac{K-1}{2}+\log K.
\end{eqnarray}
We conjecture that the following upper bound holds for a $K$-user symmetric Gaussian IC with rate-limited feedback
\begin{equation}\label{conj}
{R}^u_{sym}=\min\left\{{R}^u_{sym,\infty},{R}^u_{sym,0}+C_{FB}\right\}.
\end{equation}
Note that the conjecture holds true for $K=2$ as shown in \cite{Vahid}.

The next result shows that the achievable symmetric rate given in the last section is within a constant number of bits to the conjectured upper bound ${R}^u_{sym}$ for a particular choice of the parameters for each interference regime.

\begin{theorem}\label{thm_gauss}
For the $K$-user symmetric Gaussian IC with rate-limited feedback with $\frac{\mathsf{INR}}{\mathsf{SNR}}\notin \left(\frac{1}{2},2\right)$ and $\mathsf{SNR}, \mathsf{INR} \ge 1$, there is an achievability scheme that achieves a symmetric rate within a constant $L$ bits to ${R}^u_{sym}$, where
\begin{eqnarray}\label{distance}
L&=&\max\left\{\frac{1}{2}\log\left(2304\left(K-1\right)^2{K^2}\left(K+
{\frac{1}{3}}\right)
\left(K+{\frac{2}{3}}\right)^2\left(K+2\right)^2\left(K
+{\frac{11}{4}}\right)\right),\log3+16+\log K^3 \right\}\nonumber\\
&&+\frac{K-1}{2}.
\end{eqnarray}
\end{theorem}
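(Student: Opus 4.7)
The plan is to treat each of the three interference regimes of Theorems~\ref{thm_gauss1}, \ref{thm_gauss2}, \ref{thm_gauss3} separately and, in each regime, to furnish an explicit choice of the free power-allocation parameters $\mu^{(i)}$ so that the resulting sum $\frac{1}{2}\sum_i R^{(i)}$ matches the conjectured upper bound $R^u_{sym}$ up to an additive constant depending only on $K$. Within each regime, I would further split into two sub-cases according to which branch of $R^u_{sym}=\min\{R^u_{sym,\infty},\, R^u_{sym,0}+C_{FB}\}$ is active: (i) the ``feedback-rich'' case in which $R^u_{sym,\infty}$ is the binding term (so the feedback constraints \eqref{7w1}--\eqref{7w2}, \eqref{11.1}--\eqref{11.2}, \eqref{7s} are not tight), and (ii) the ``feedback-limited'' case in which $R^u_{sym,0}+C_{FB}$ is binding, so the $\mu^{(i)}$ must be scaled so that the fed-back lattice point has rate exactly $2C_{FB}$.

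The guiding principle for choosing $\mu^{(i)}$ is to lift the bit-level assignment used in the deterministic examples of Section~\ref{sec:det} to the Gaussian model. Concretely, in the very weak regime I would set the three streams $X_1^{(1)},X_1^{(2)},X_1^{(3)}$ transmitted in the first slot at power levels proportional to $\mathsf{SNR}^{-\alpha}$, $\mathsf{SNR}^{-\alpha}-\mathsf{SNR}^{-(1-\beta)}$, and $\mathsf{SNR}^{-1}$ (or their feedback-limited analogues), and $X_1^{(4)}$ in the second slot at a constant-order power, always preserving the alignment constraint $\mathsf{SNR}\mu^{(3)}=\mathsf{SNR}^{\alpha}\mu^{(1)}$ so that the lattice-combined symbol $I$ is decodable from a single codebook. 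Analogous choices, derived from the weak and very-strong deterministic examples respectively, work for Theorems~\ref{thm_gauss2} and \ref{thm_gauss3}; in each case the split between ``private'' and ``common'' (feedback-forwarded) power follows the allocations in the deterministic figures, up to constant factors absorbing $K-1$ interferers.

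With the $\mu^{(i)}$ fixed, each constraint in the respective theorem becomes an explicit expression in $\mathsf{SNR}$, $\alpha=\log\mathsf{INR}/\log\mathsf{SNR}$, $\beta$, and $K$. I would then use elementary bounds of the form $\log\frac{a\,\mathsf{SNR}}{b+c\,\mathsf{INR}(K-1)+d}\ge \log(\mathsf{SNR}/\mathsf{INR})-\log\bigl(c(K-1)+o(1)\bigr)$ together with $\log(1+x)\ge \log x$ to replace each $R^{(i)}$ by a clean expression matching the corresponding ``deterministic'' level in \eqref{r-d}. Summing the $R^{(i)}$ and halving yields an achievable rate of the form $R^u_{sym}-\Delta(K)$, where $\Delta(K)$ is a finite sum of terms of the type $\log K$, $\log(K-1)$, $\log(K+2)$, etc., plus the $\frac{K-1}{2}$ coming from the gap between $R^u_{sym,\infty}$ in \eqref{binaha} and its genie-aided lower bound; collecting all these constants yields the explicit bound $L$ in \eqref{distance}. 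A separate comparison against $R^u_{sym,0}+C_{FB}$ in the feedback-limited sub-case produces the second branch of the maximum in \eqref{distance}, namely $\log 3+16+\log K^3$, inherited from the no-feedback bound of Theorem~1 of \cite{Etkin} and the standard $\log K$ penalty for $K$-user alignment.

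The main obstacle is the feedback-limited sub-case. There the constraints \eqref{7w1}--\eqref{7w2} (and their weak/strong analogues) force the power levels $\mu^{(1)},\mu^{(3)}$ to be shrunk to meet a rate budget of $2C_{FB}$ per fed-back stream, and one has to re-verify that the remaining decoding inequalities \eqref{1w}--\eqref{6w} still admit the desired symmetric rate once this shrinkage is performed. A second technical difficulty is ensuring that when $\alpha$ crosses the boundaries $1/2$ and $2/3$, the piecewise-constructed schemes still yield a continuous gap to $R^u_{sym}$, which requires carefully checking the edge $\alpha=1/2$ (and, for strong interference, $\alpha=2$) and taking the maximum constant across the regimes. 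Once these two points are handled, the remaining bookkeeping is routine and results in the constant $L$ displayed in \eqref{distance}.
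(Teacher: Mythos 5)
Your plan for the three regimes $\alpha\le \frac{1}{2}$, $\frac{1}{2}<\alpha\le \frac{2}{3}$, and $\alpha\ge 2$ is essentially the paper's approach: explicit choices of $\mu^{(i)}$, term-by-term lower bounds on the rate constraints, and a gap computation split into a feedback-limited comparison against ${R}^u_{sym,0}+C_{FB}$ and a feedback-rich comparison against ${R}^u_{sym,\infty}$. (One cosmetic difference: the paper does not choose different $\mu^{(i)}$ per sub-case; it embeds terms of the form $\min\{2^{2C_{FB}},\cdot\}$ directly into a single choice of $\mu^{(i)}$, which automatically makes the feedback constraints \eqref{7w1}--\eqref{7w2}, \eqref{11.1}--\eqref{11.2}, \eqref{7s} redundant, and only the gap analysis splits into the two sub-cases.)

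The genuine gap is your treatment of the regimes $\frac{2}{3}<\alpha<1$ and $1<\alpha<2$. The hypothesis $\frac{\mathsf{INR}}{\mathsf{SNR}}\notin\left(\frac{1}{2},2\right)$ excludes only channels where $\mathsf{INR}$ and $\mathsf{SNR}$ are within a factor of two of each other; it does not exclude these exponent ranges (e.g., $\alpha=0.8$ gives $\mathsf{INR}/\mathsf{SNR}=\mathsf{SNR}^{-0.2}\le\frac{1}{2}$ for large $\mathsf{SNR}$), so the theorem must cover them, yet your proposal supplies no scheme and no gap bound there --- your continuity check at the boundary points $\alpha=\frac{1}{2},\frac{2}{3},2$ does not reach into the interior of $(\frac{2}{3},2)$. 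Correspondingly, you misattribute the second branch of the maximum in \eqref{distance}: $\log 3+16+\log K^3$ does not arise from the feedback-limited sub-case of the new schemes (those sub-cases produce the bounds \eqref{weakbound1}, \eqref{medbound1}, \eqref{strongbound1}, all of which live inside the first branch of the maximum). It is exactly $\frac{1}{2}\log 9+16+3\log K$, the constant by which the perfect-feedback upper bound (Theorem 3 of \cite{Mohajer}) and the no-feedback achievable rate (Theorem 1 of \cite{Ordentlich}) differ for $\frac{2}{3}<\alpha<1$; since ${R}^u_{sym}\le {R}^u_{sym,\infty}$, the no-feedback scheme is already within that constant of the conjectured bound \eqref{conj} in this regime (and within $\frac{1}{2}\log 6+6+\frac{K-1}{2}+\log K$ bits for $1<\alpha<2$). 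Without invoking these external results to dispose of the middle regimes, the statement of the theorem --- and in particular the specific constant $L$ --- cannot be obtained from your argument.
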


\begin{proof}
The detailed proof for this result is provided in Appendix \ref{apdx_gap}. The parameters $\mu^{(i)}$ of the achievability scheme that are chosen for this result are as follows.

Case 1 ($\alpha \le \frac{1}{2}$): We take $\mu^{(1)} = \frac{1}{2{\mathsf {INR}}} \min \{2^{2 C_{FB}},{\mathsf {INR}}-1\}$, $\mu^{(2)} =\frac{1}{{\mathsf {INR}}}-\frac{1}{2{\mathsf {SNR}}} \min \{2^{2 C_{FB}},{\mathsf {INR}}-1\}$, and $\mu^{(4)} = \frac{1}{{\mathsf {INR}}}$ in Theorem \ref{thm_gauss1}.

Case 2 ($\frac{1}{2} \le \alpha \le \frac{2}{3}$): We take $\mu^{(4)} = \frac{1}{4{\mathsf {INR}}} \max \{2^{-2 C_{FB}},\frac{{\mathsf {INR}}^3}{{\mathsf {SNR}}^2}\},$ $\mu^{(6)} = \mu^{(3)} = \frac{1}{3\mathsf {INR}}-\frac{1}{4{\mathsf {INR}}} \max \{2^{-2 C_{FB}},\frac{{\mathsf {INR}}^3}{{\mathsf {SNR}}^2}\},$ $\mu^{(1)} = 1 - \mu^{(2:4)},$ and $\mu^{(5)} = 1 - \mu^{(2)}- \mu^{(6)}$ in Theorem \ref{thm_gauss2}.

Case 3 ($2 \le \alpha$): We take $\mu^{(2)} = \frac{{\mathsf{SNR}}}{2\mathsf{INR}}\min \{2^{2 C_{FB}},\frac{{\mathsf {INR}}}{{\mathsf {SNR}}^{2}}\},$ and $\mu^{(1)}= \mu^{(3)} = 1-\mu^{(2)}$ in Theorem \ref{thm_gauss3}.

The rest of the proof follows by simple manipulations of the gap, and is thus omitted. The reader can see the detailed steps in \cite{arxivver}.
\end{proof}


\begin{remark}
For the special cases of no feedback and infinite feedback, $R_{sym}^u$ in \eqref{conj} becomes the true symmetric upper bounds given in \cite{Etkin} and \cite{Mohajer}, respectively. Furthermore, the achievability schemes in \cite{Ordentlich} and \cite{Mohajer} achieve symmetric rates within constant gaps of $9+\log(K^2)$ and $\frac{1}{2}\log(16K^4(K+1))+\frac{K-1}{2}$ bits to the corresponding upper bounds, for no feedback and infinite feedback, respectively. Although these gaps are tighter, they are only for the two extreme cases.
\end{remark}

\subsubsection{Numerical Results}

We now provide numerical results on symmetric rate of the $K$-user symmetric Gaussian IC with limited feedback. In Fig. \ref{fig:subfigureExample1}, we consider three different values of $\alpha$ corresponding to the three interference regions - very weak, weak and strong interferences, and plot the symmetric rate as a function of ${\mathsf {SNR}}$ for $K=3$. It is seen that the achievable symmetric rate increases with the feedback capacity.

\begin{figure}[htbp]
\centering
\subfigure[Very weak interference with $\alpha=\frac{1}{4}$.]{
	\includegraphics[width=8.5cm]{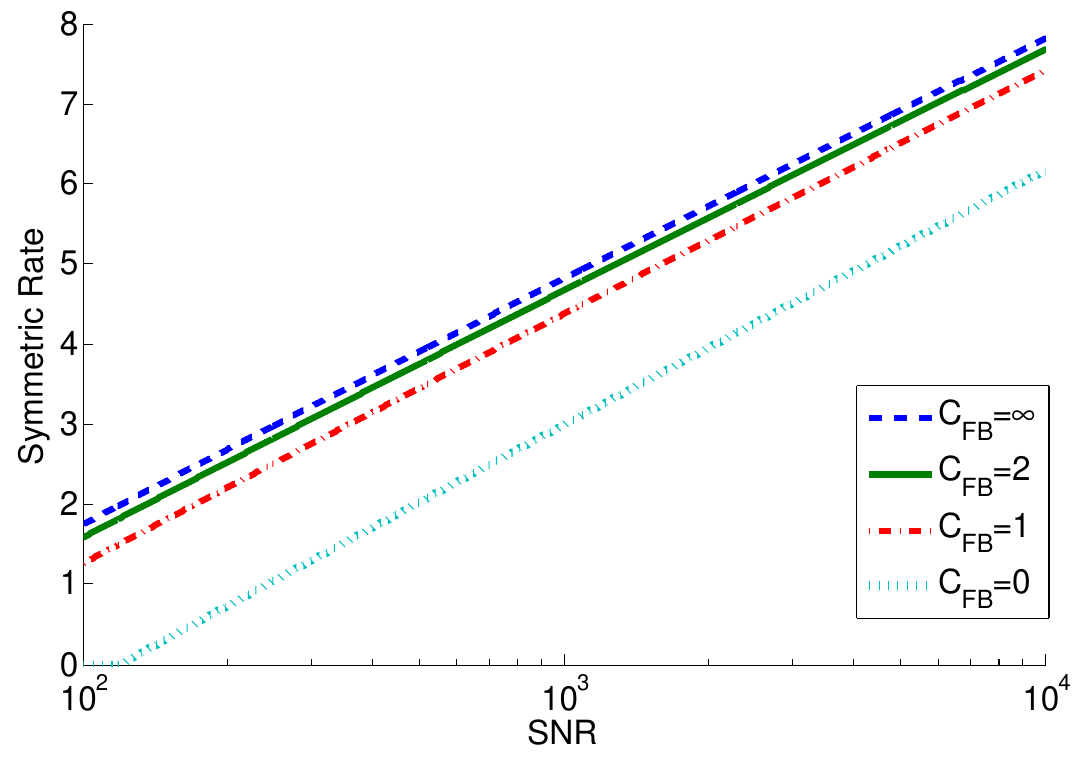}
    \label{fig:subfig1}
}
\subfigure[Weak interference with $\alpha=\frac{7}{12}$.]{
	\includegraphics[width=8.5cm]{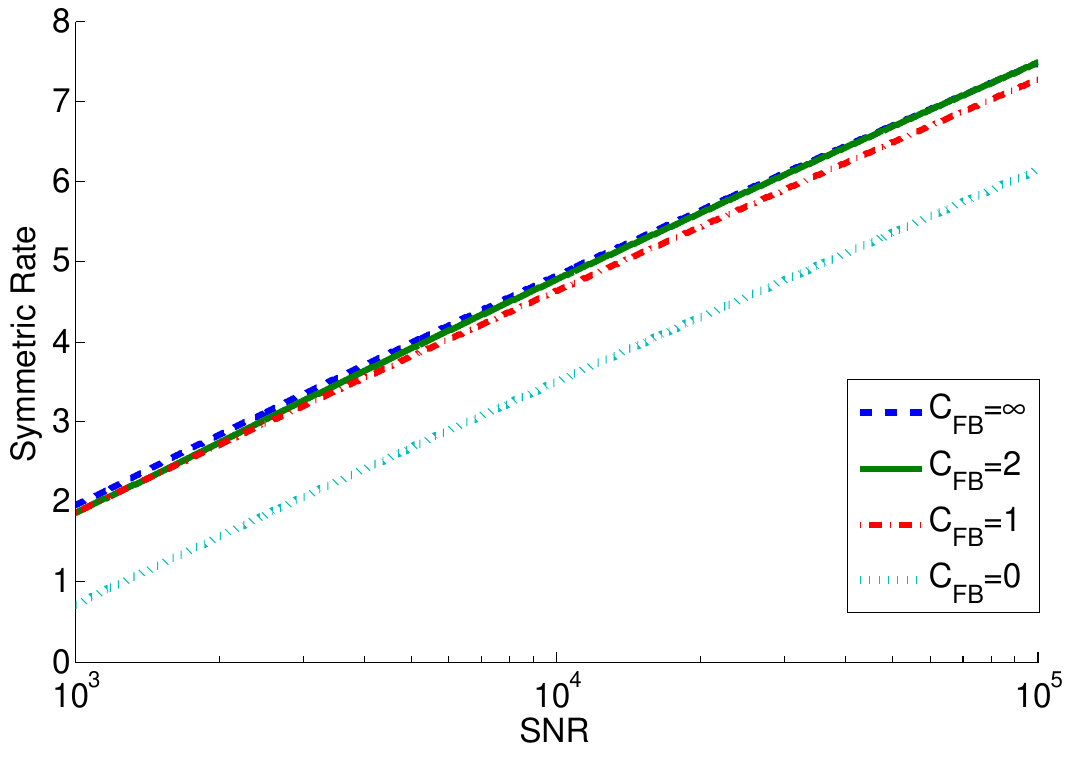}
    \label{fig:subfig2}
}
\subfigure[Strong interference with $\alpha=\frac{5}{2}$.]{
	\includegraphics[width=8.5cm]{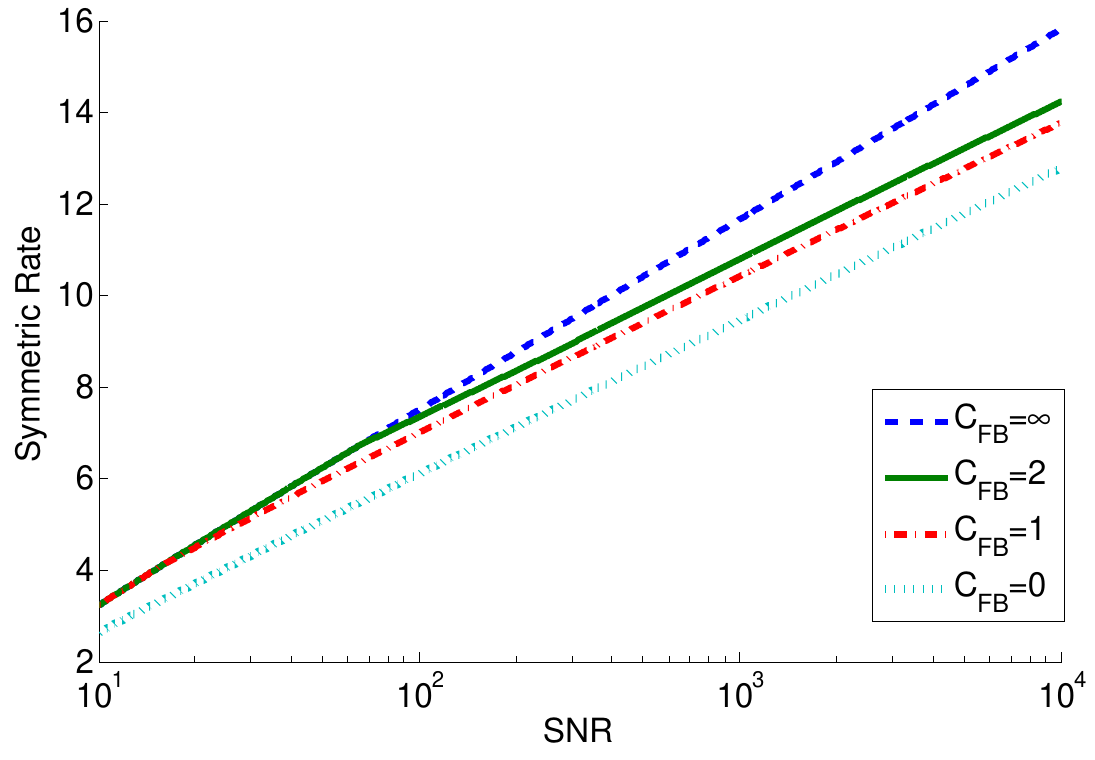}
    \label{fig:subfig3}
}
\caption[Optional caption for list of figures]{Achievable symmetric rate as a function of ${\mathsf {SNR}}$ for $K=3$.}
\label{fig:subfigureExample1}
\end{figure}


We next consider the special case of no feedback and compare the achievable rate of our scheme to that of the scheme in \cite{Ordentlich}. We let  $C_{FB}=0$, and consider some values of $\alpha$ corresponding to the different interference regimes. Note that in this case, the conjectured upper bound in \eqref{distance} becomes the upper bound in \cite{Etkin}. The achievable rate of our proposed scheme and that of the scheme in \cite{Ordentlich} as well as the upper bound, are plotted in Fig. \ref{fig:fignofeed} for $K=3$. We note that the proposed achievable symmetric rate is better than that in \cite{Ordentlich} for the parameters considered in weak and strong interference regimes. Also although for the case of very weak interference the achievable rate in \cite{Ordentlich} is higher, the slope of our scheme is higher. We can compare the constant gaps between the upper and lower bounds given in Theorem 1 of \cite{Ordentlich} for $C_{FB}=0$ and those given in Appendix \ref{apdx_gap} of this paper for general $C_{FB}$, for the parameters of Fig. \ref{fig:fignofeed}: for the very weak, weak, and strong interference regimes, the gaps of \cite{Ordentlich} are 3 bits, 11 bits, and 2 bits, respectively; for our scheme with $C_{FB}=0$, the gaps are 5.02 bits, 13.7 bits, and 4.45 bits, respectively, according to \eqref{weakbound1}, \eqref{medbound1}, and \eqref{strongbound1}, respectively.

\begin{figure}[htbp]
\centering
\subfigure[Very weak interference with $\alpha=\frac{1}{4}$.]{
	\includegraphics[width=8.5cm]{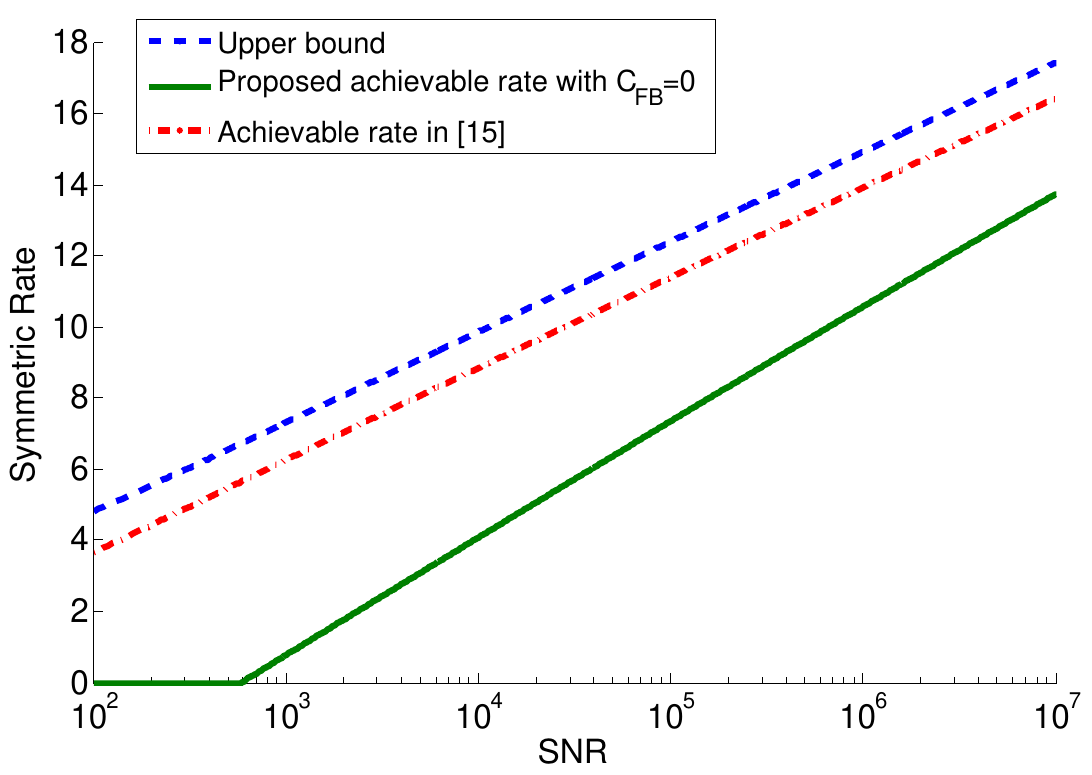}
    \label{fig:subfig1}
}
\subfigure[Weak interference with $\alpha=\frac{7}{12}$.]{
	\includegraphics[width=8.5cm]{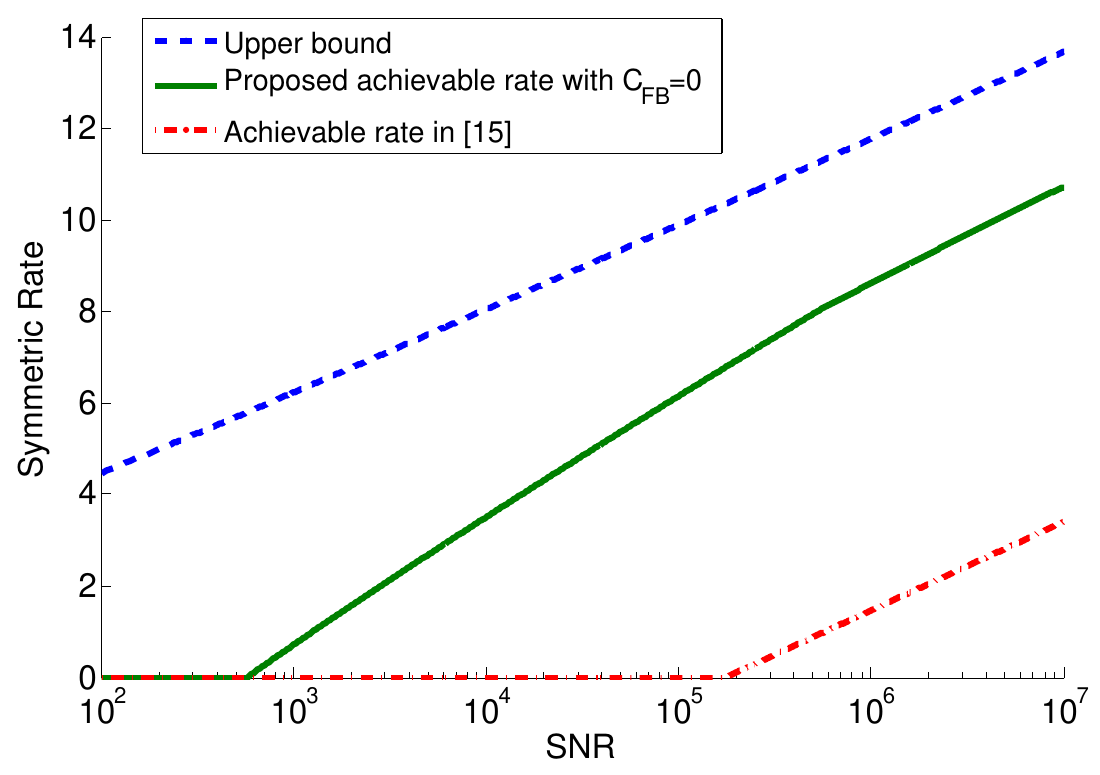}
    \label{fig:subfig2}
}
\subfigure[Strong interference with $\alpha=\frac{5}{2}$.]{
	\includegraphics[width=8.5cm]{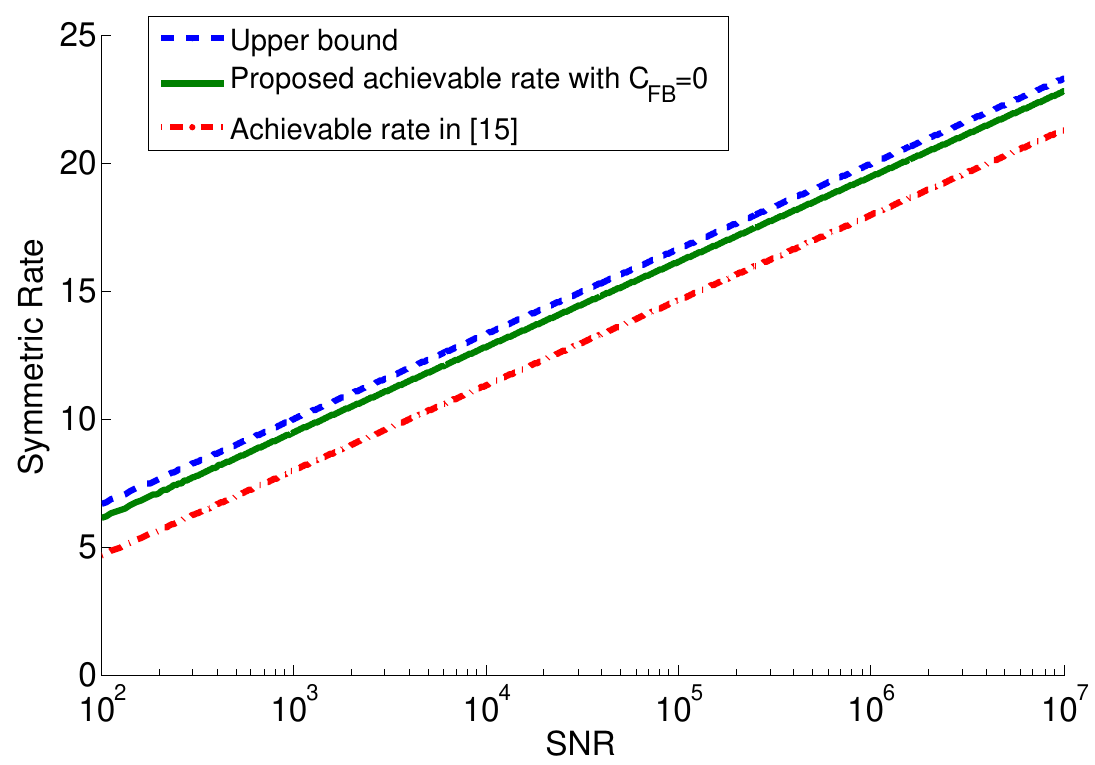}
    \label{fig:subfig3}
}
\caption[Optional caption for list of figures]{Comparison of our results with that in \cite{Ordentlich} and \cite{Etkin} for the case of no feedback and $K=3$.}
\label{fig:fignofeed}
\end{figure}



We next consider the special case of infinite feedback. In this case, the conjectured upper bound in \eqref{conj} becomes the upper bound in \cite{Mohajer}. In Fig. \ref{fig:figinffeed}, we compare the achievable rate of the proposed scheme when $C_{FB}=\infty$ to that of the scheme in \cite{Mohajer} for some values of $\alpha$ corresponding to the different interference regimes for $K=3$. We note that the proposed achievable symmetric rate is better than the achievable rate in \cite{Mohajer} for the parameters considered in strong and very weak interference regimes. For the weak interference regime, our achievability is better for high $\mathsf{SNR}$ as compared to that in \cite{Mohajer} and the slope of our scheme is higher. We can also compare the constant rate gaps for $C_{FB}=\infty$ in \cite{Mohajer} and our constant gaps for general $C_{FB}$. In particular, according to the proof of Theorem 1 (Section V and Section VI) of \cite{Mohajer}, for the very weak, weak, and strong interference regimes, the gaps are 7.17 bits, 7.17 bits, and 4.38 bits, respectively; and our corresponding constant gaps are 9.84 bits, 15.49 bits, and 7.62 bits, respectively, according to \eqref{weakbound2}, \eqref{medbound2}, and \eqref{strongbound2},  respectively.

\begin{figure}[htbp]
\centering
\subfigure[Very weak interference with $\alpha=\frac{1}{4}$.]{
	\includegraphics[width=8.5cm]{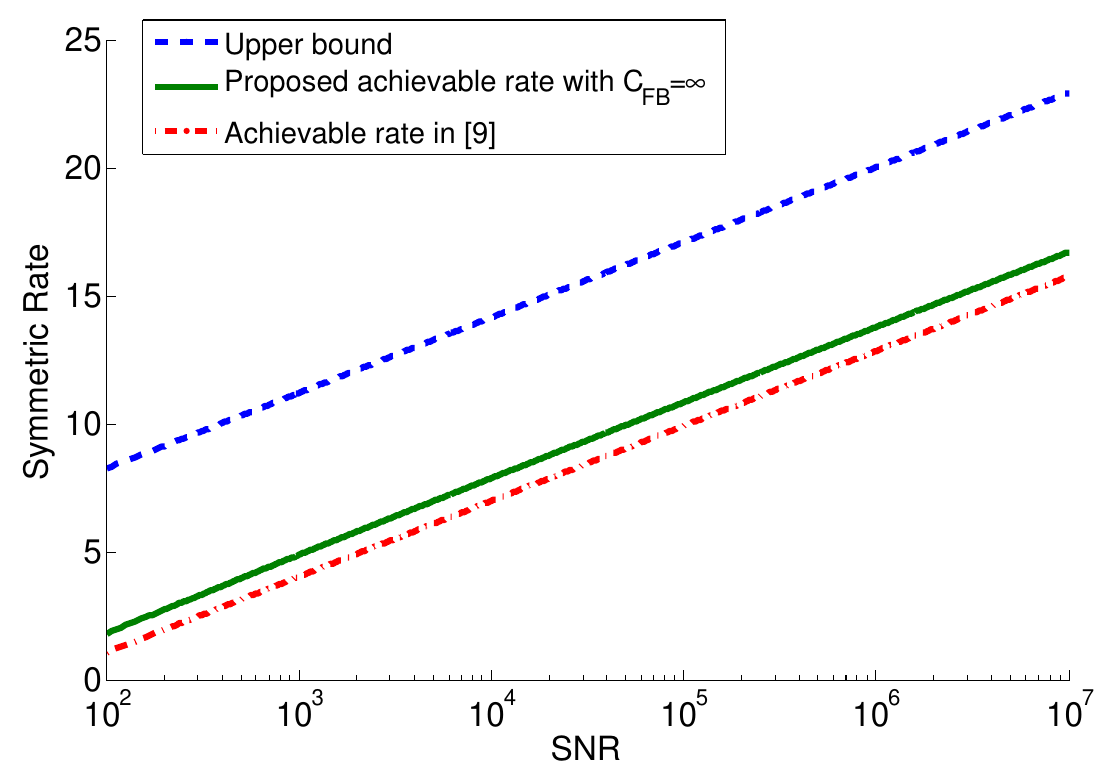}
    \label{fig:subfig1}
}
\subfigure[Weak interference with $\alpha=\frac{7}{12}$.]{
	\includegraphics[width=8.5cm]{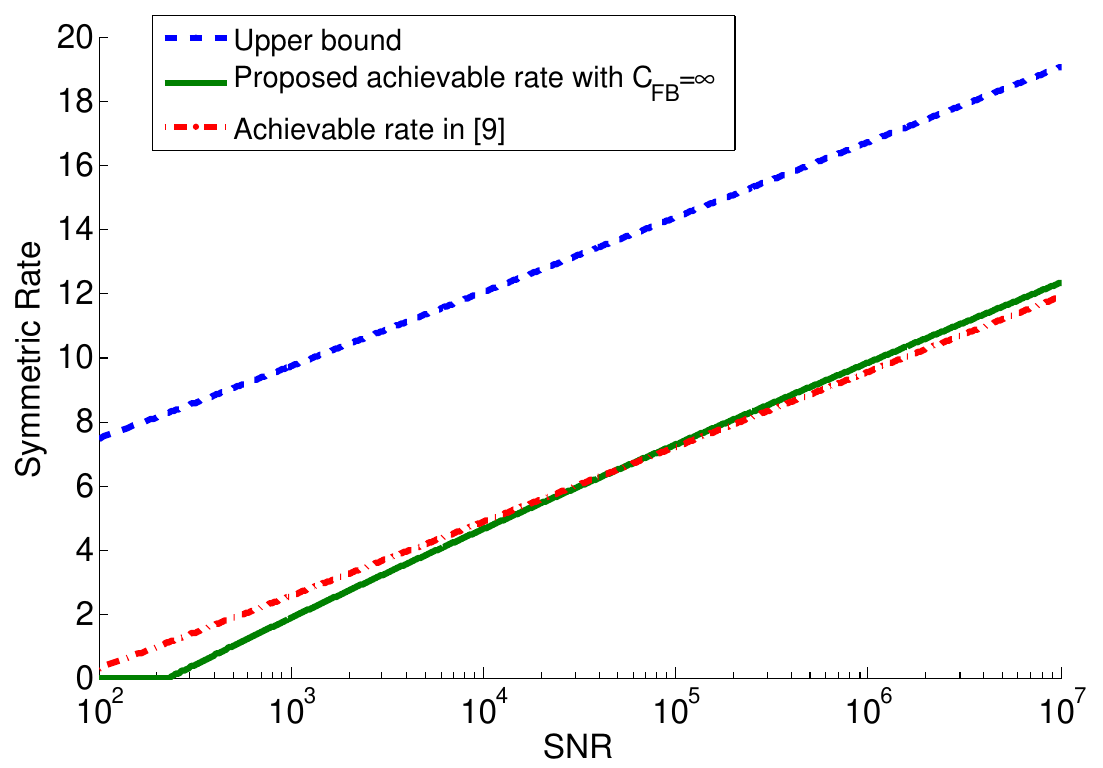}
    \label{fig:subfig2}
}
\subfigure[Strong interference with $\alpha=\frac{5}{2}$.]{
	\includegraphics[width=8.5cm]{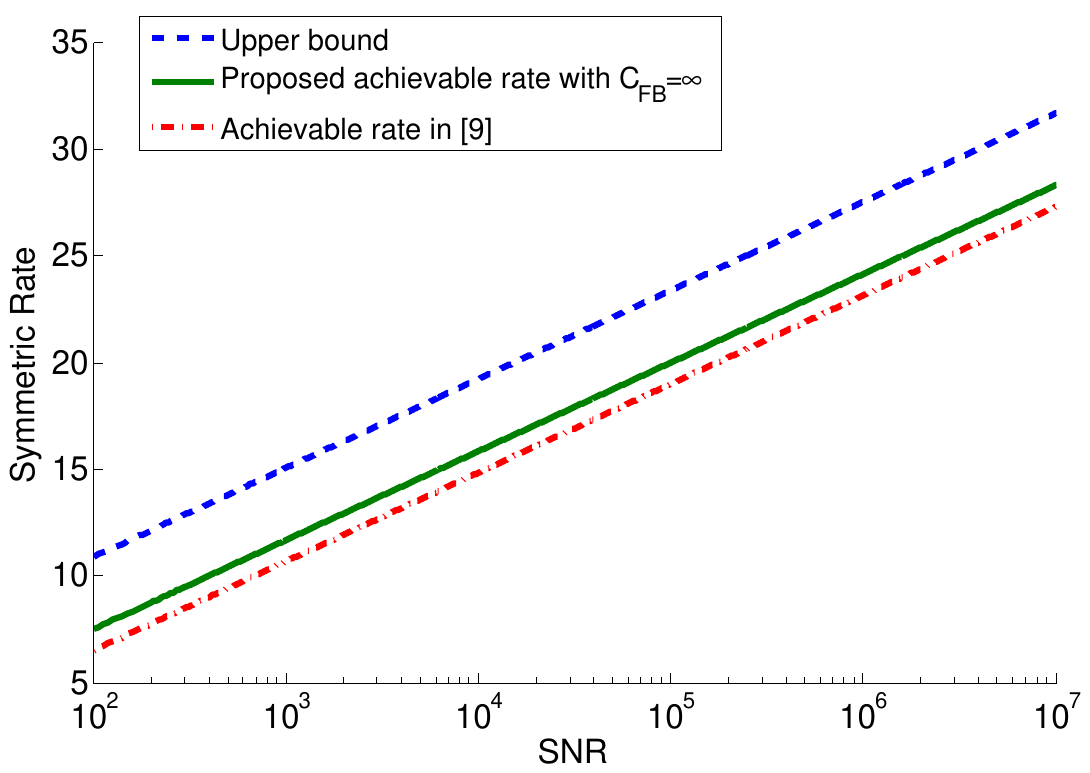}
    \label{fig:subfig3}
}
\caption[Optional caption for list of figures]{Comparison of our results with that in \cite{Mohajer} for the case of infinite feedback and $K=3$.}
\label{fig:figinffeed}
\end{figure}

Finally we consider the special case of two-user IC with limited feedback. We set $K=2$, $C_{FB}=1$. In Fig. \ref{fig:subfigureExample5}, we compare our achievable symmetric rate with that in \cite{Vahid} in different interference regimes. In this case, the conjectured upper bound in \eqref{conj} becomes the true upper bound in \cite{Vahid}. It is seen that our rate is better in the strong interference regime. And for the other two regions, our scheme has higher slopes and are better at high ${\mathsf {SNR}}$. We can also compare the constant gaps between the upper and lower bounds given in Appendix D of \cite{Vahid} for $K=2$ and those given in Appendix \ref{apdx_gap} of this paper for general $K$, for the parameters of Fig. \ref{fig:subfigureExample5}: for the very weak and strong interference regimes, the gaps of \cite{Vahid} are 9.6 bits and 7 bits, respectively; and our corresponding gaps are 5.27 bits and 4.6 bits, respectively. Hence our bounds are tighter in these two regimes. For the weak interference regime, i.e., $1/2<\alpha<2/3$ as will be noted in Section \ref{xxlir}, the achievable rate in \cite{Vahid} is actually not within a constant gap to the upper bound; whereas our proposed achievability scheme achieves a symmetric rate that is within 21.085 bits to the upper bound.


\begin{figure}[htbp]
\centering
\subfigure[Very weak interference with $\alpha=\frac{1}{4}$.]{
	\includegraphics[width=8.5cm]{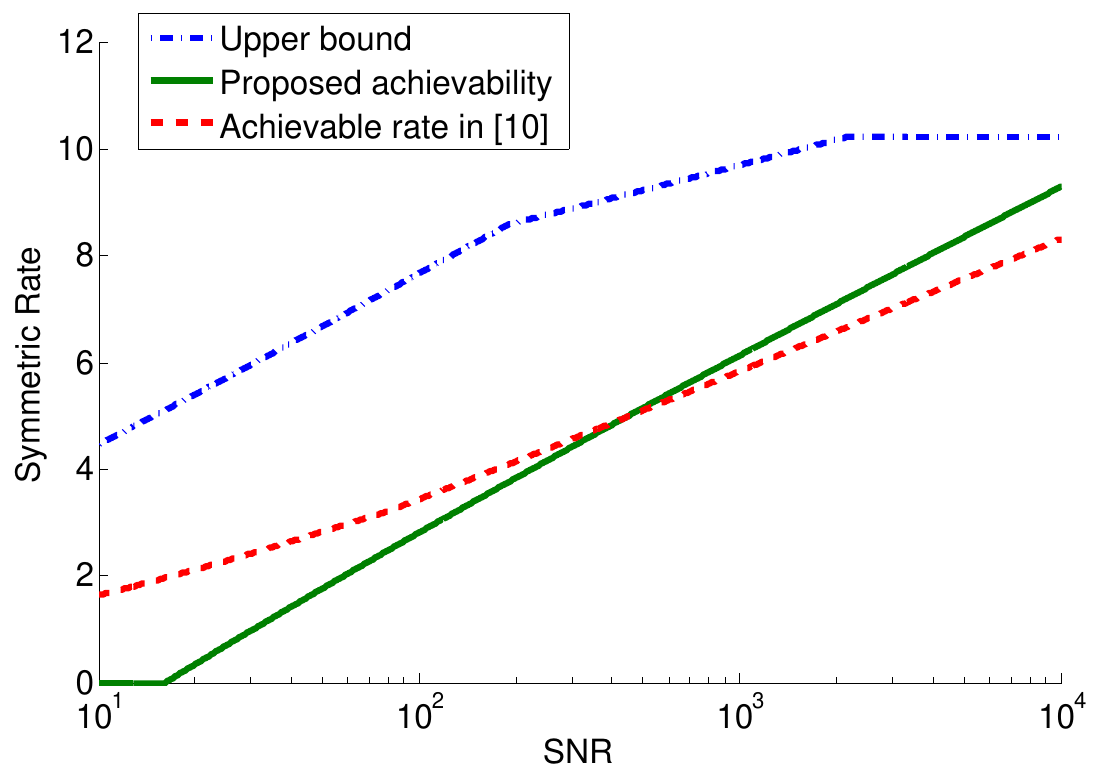}
    \label{fig:subfig1}
}
\subfigure[Weak interference with $\alpha=\frac{7}{12}$.]{
	\includegraphics[width=8.5cm]{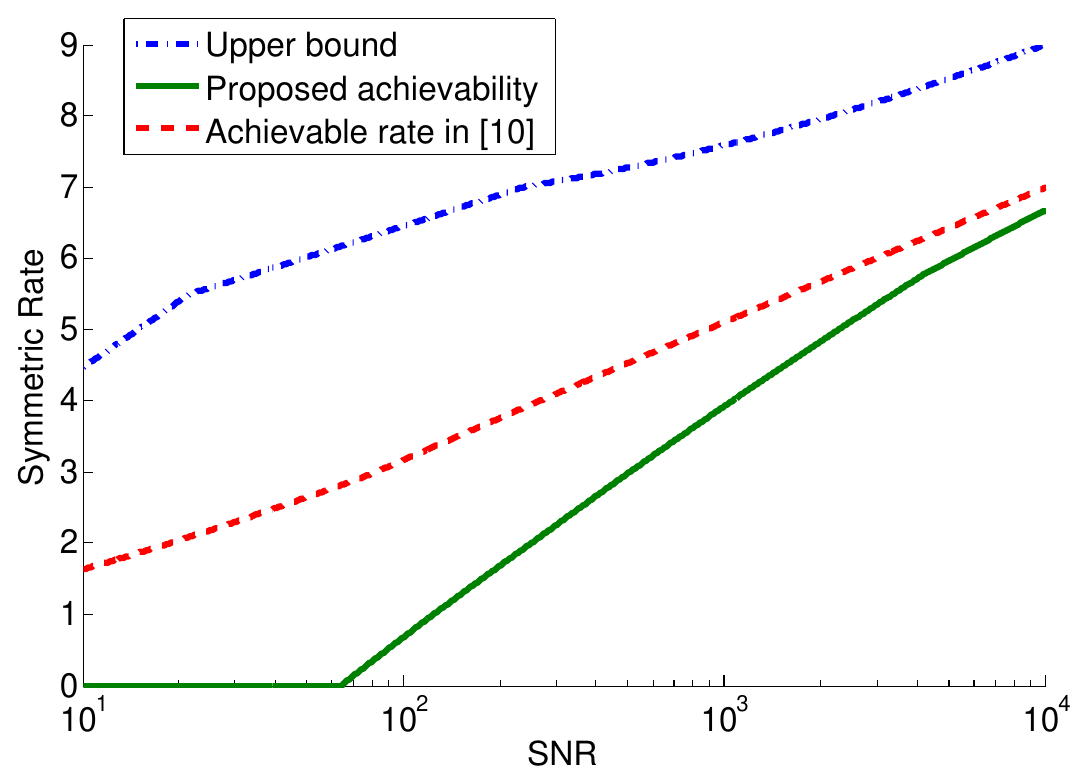}
    \label{fig:subfig2}
}
\subfigure[Strong interference with $\alpha=\frac{5}{2}$.]{
	\includegraphics[width=8.5cm]{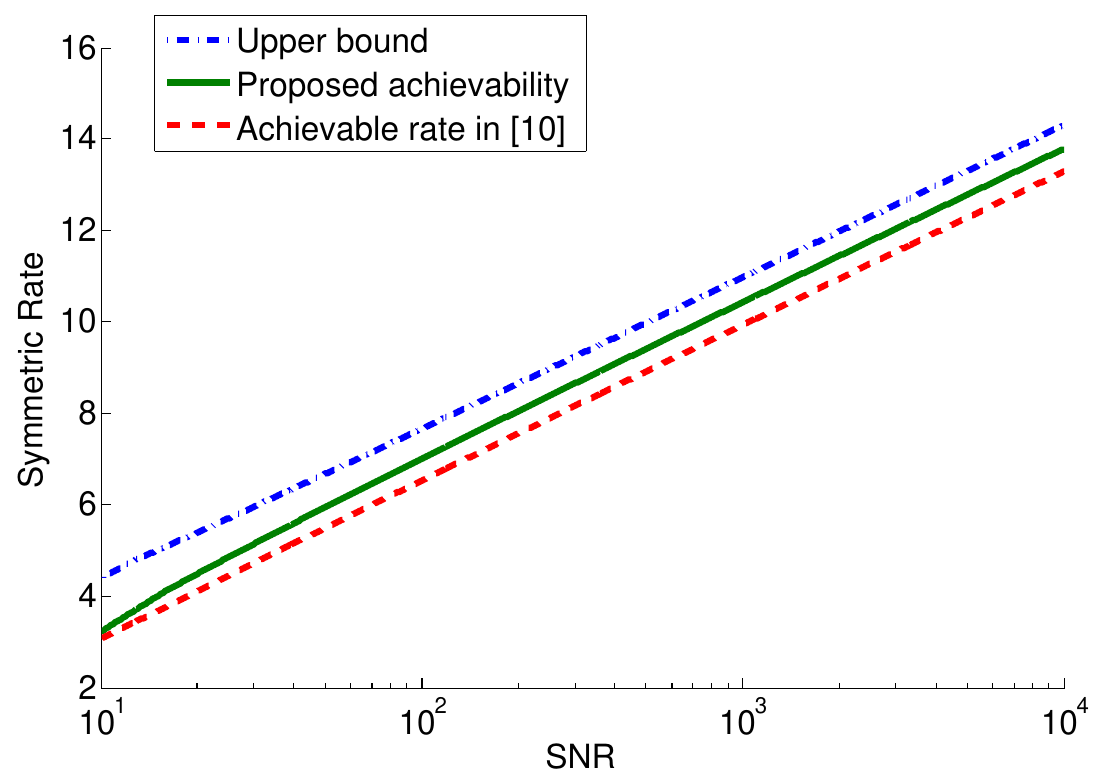}
    \label{fig:subfig2}
}
\caption[Optional caption for list of figures]{Comparison of the proposed achievability scheme with that in \cite{Vahid} for the two-user case, $K=2$, $C_{FB}=1$.}
\label{fig:subfigureExample5}
\end{figure}

\subsubsection{Achievable Symmetric ${\mathsf {GDoF}}$}

The symmetric ${\mathsf {GDoF}}$ characterize the ratio of the symmetric capacity to $\log \mathsf{SNR}$ as $\mathsf{SNR}$ goes to infinity, i.e., ${\mathsf {GDoF}}=\lim_{{\mathsf{SNR}}\to\infty}\frac{C_{sym}}{\log{\mathsf{SNR}}}$. Recall that ${\alpha}={\frac{{\log \mathsf{INR}}}{{\log \mathsf{SNR}}}\ }$ and ${\beta}={\frac{C_{FB}}{{\log \mathsf{SNR}}}}$. We have the following result.

\begin{theorem}\label{thm_gdof}
The symmetric ${\mathsf {GDoF}}$ of a  $K$-user symmetric Gaussian IC with rate-limited feedback satisfies
\begin{eqnarray}\label{gdof}
{\mathsf {GDoF}}_{sym}\ge\left\{ \begin{array}{ll}
\min\{1-\alpha+\beta,1-\frac{\alpha}{2}\},& \text{ if }0\le \alpha \le \frac{1}{2}, \\
\min\{\alpha+\beta,1-\frac{\alpha}{2}\},& \text{ if }\frac{1}{2}\le \alpha \le \frac{2}{3}, \\
1-\frac{\alpha}{2},& \text{ if }\frac{2}{3}\le \alpha < 1, \\
\text{not well defined},& \text{ if } \ \ \ \ \ \ \alpha=1, \\
\frac{\alpha}{2},& \text{ if } 1 < \alpha \le 2, \\
\min\{1+\beta,\frac{\alpha}{2}\},& \text{ if } 2 \le \alpha.
\end{array}
\right.
\end{eqnarray}
\end{theorem}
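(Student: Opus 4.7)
The plan is to compute the achievable symmetric GDoF by normalizing the rates in Theorems \ref{thm_gauss1}--\ref{thm_gauss3} by $\log \mathsf{SNR}$ and letting $\mathsf{SNR} \to \infty$ with $\alpha$ and $\beta$ fixed. For $\alpha \le 1/2$, $1/2 \le \alpha \le 2/3$, and $\alpha \ge 2$, the cleanest route is to invoke Theorem \ref{thm_gauss}, which guarantees the achievable rate is within a constant $L$ bits of $R^u_{sym} = \min\{R^u_{sym,\infty}, R^u_{sym,0} + C_{FB}\}$. Since $L$ does not depend on $\mathsf{SNR}$, the additive gap vanishes under the GDoF normalization, and the achievable GDoF is at least $\lim_{\mathsf{SNR} \to \infty} R^u_{sym}/\log \mathsf{SNR}$.

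I would then evaluate this limit component by component. From \eqref{binaha}, the ratio $R^u_{sym,\infty}/\log \mathsf{SNR}$ tends to $1-\alpha/2$ when $\alpha<1$ and to $\alpha/2$ when $\alpha>1$. From \eqref{sefr}, $R^u_{sym,0}/\log\mathsf{SNR}$ tends to the Etkin--Tse--Wang W-curve, which equals $1-\alpha$ on $[0,1/2]$, $\alpha$ on $[1/2,2/3]$, $1-\alpha/2$ on $[2/3,1]$, $\alpha/2$ on $[1,2]$, and $1$ on $[2,\infty)$; and $C_{FB}/\log\mathsf{SNR} = \beta$ by definition. Taking the minimum of the infinite-feedback piece with the sum of the no-feedback piece and $\beta$ directly reproduces the three nontrivial branches of \eqref{gdof}, namely $\min\{1-\alpha+\beta, 1-\alpha/2\}$, $\min\{\alpha+\beta, 1-\alpha/2\}$, and $\min\{1+\beta, \alpha/2\}$ on the three regimes.

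For the two remaining ranges $2/3 < \alpha < 1$ and $1 < \alpha < 2$, Theorem \ref{thm_gauss} is not applied, but as noted in the discussion preceding Theorem \ref{thm_gauss1} the no-feedback achievability scheme of \cite{Ordentlich} already attains a symmetric rate within a constant of the infinite-feedback upper bound of \cite{Mohajer}. Since feedback cannot exceed the infinite-feedback capacity, the GDoF in these ranges is pinned to $1-\alpha/2$ and $\alpha/2$ respectively, matching \eqref{gdof}; the case $\alpha=1$ is explicitly excluded because the W-curve has its singular peak there and the hypothesis $\mathsf{INR}/\mathsf{SNR}\notin(1/2,2)$ of Theorem \ref{thm_gauss} fails for every finite $\mathsf{SNR}$.

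The main obstacle is not the limit itself but verifying that the parameter assignments $\mu^{(i)}$ prescribed after Theorem \ref{thm_gauss} activate the intended constraints at the correct SNR exponent. One must track whether $\min\{2^{2C_{FB}}, \mathsf{INR}\}$ is governed by the feedback rate or by the channel, which amounts to a case split on the sign of $\alpha - 2\beta$ in the very weak and weak regimes and on the sign of $\alpha - 2 - 2\beta$ in the strong regime; these splits determine whether the feedback constraints $R^{(i)} \le 2C_{FB}$ or the data-channel bounds are binding. Once this bookkeeping is carried out, each branch collapses to exactly the corresponding piece of \eqref{gdof}, completing the argument.
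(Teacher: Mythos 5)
Your proposal is correct and takes essentially the same route as the paper's own proof: invoke the constant-gap result of Theorem \ref{thm_gauss} so that the gap vanishes under the ${\mathsf{GDoF}}$ normalization, evaluate $\lim_{\mathsf{SNR}\to\infty} R^u_{sym}/\log\mathsf{SNR}$ from \eqref{sefr} and \eqref{binaha}, and cover $2/3<\alpha<2$ via the constant gap between the no-feedback achievability of \cite{Ordentlich} and the infinite-feedback bound of \cite{Mohajer} (a step the paper's one-line proof leaves implicit in the discussion preceding Theorem \ref{thm_gauss1}). Two harmless inaccuracies are worth noting: the limit of \eqref{sefr} is $\min\{1,\max(\alpha,1-\alpha)\}$ rather than the full W-curve (on $(2/3,1)$ it equals $\alpha$, not $1-\alpha/2$, and on $(1,2)$ it equals $1$, not $\alpha/2$), which does not affect your conclusion since you only use \eqref{sefr} in the three regimes where the two curves coincide; and the final paragraph's bookkeeping of the $\mu^{(i)}$ assignments is unnecessary, since Theorem \ref{thm_gauss} already encapsulates exactly that verification and can be invoked as a black box, as you do in your first paragraph.
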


\begin{proof}
Since the achievable symmetric rate is within a constant gap to $R^u_{sym}$ in \eqref{conj}, we can write ${\mathsf {GDoF}}_{sym}\ge \frac{R_{sym}}{\log{\mathsf {SNR}} }= \lim_{{\mathsf{SNR}}\to\infty}\frac{R^u_{sym}}{\log{\mathsf {SNR}} }=\min\{{\mathsf {GDoF}}_{sym,\infty},{\mathsf {GDoF}}_{sym,0}+\beta\}$ where ${\mathsf {GDoF}}_{sym,0}$ and ${\mathsf {GDoF}}_{sym,\infty}$ are given in Theorem 3.1 of \cite{Jafar3} and Theorem 1 of \cite{Mohajer}, respectively.
\end{proof}

We note that if we normalize \eqref{r-d} by $n$ and use the definitions of $\alpha = m/n$, $\beta=p/n$, then we obtain \eqref{gdof}, except for $\alpha=1$. There is a discussion on $\alpha=1$ in \cite{Mohajer}. Hence Fig. \ref{fig:Example0d} describes the achievable symmetric ${\mathsf {GDoF}}$ of a $K$-user symmetric Gaussian IC as well.

\subsubsection{Comparison to literature}\label{xxlir}

\cite{Jafar3} and \cite{Mohajer} considered the cases of no feedback and unlimited feedback of $K$-user Gaussian IC, respectively. Also, the impact of rate-limited feedback is introduced and studied for a two-user Gaussian IC in \cite{Vahid}.

Our achievability scheme is different from there in the literature.  Consider the achievability scheme for the two-user symmetric Gaussian IC in \cite{Vahid} for the case of $1/2<\alpha<2/3$. We set the feedback capacity as $C_{FB} = \log\left(\frac{{\mathsf {SNR}}^2}{{\mathsf {INR}}^3}-1\right)$. In this case, the ${\mathsf {GDoF}}$s corresponding to the six terms in Eq.(55) in \cite{Vahid} under the power allocation given by Eq. (84) in \cite{Vahid} are $1-\alpha,0,2\alpha-1,1-\alpha,0,$ and $2\alpha-1$, respectively, with a sum of $2\alpha$. However, the sum ${\mathsf {GDoF}}$ of the achievability scheme which is the sum of these six terms, is claimed in Eq. (87) of \cite{Vahid} to be $2-\alpha=2\alpha+(2-3\alpha)>2\alpha$ which is incorrect. Since in this range of $\alpha$, the upper bound on sum rate satisfies $\lim_{{\mathsf{SNR}}\to\infty}\frac{2 R^u_{sym}}{\log{\mathsf {SNR}} }= 2-\alpha$, the gap between the upper and lower bounds for high ${\mathsf {SNR}}$ is $(2-3\alpha)\log{\mathsf {SNR}}+o(\log{\mathsf {SNR}})$, i.e., it is unbounded. Our proposed achievability scheme when specialized to $K=2$, results in a symmetric rate that is within a constant of 21.085 bits to the symmetric rate upper bound, according to Theorem \ref{thm_gauss}.

Also \cite{Mohajer} treats only the case of perfect feedback, i.e., $C_{FB}=\infty$, whereas we treat the general case of arbitrary $C_{FB}$.

Our proposed conjectured upper bound is the best known upper bound for the special cases; for the $K$-user IC without feedback \cite{Ordentlich}, $K$-user IC with infinite feedback \cite{Mohajer}, and $K=2$ with general $C_{FB}$ \cite{Vahid}. However, it remains open for general $K$ and $C_{FB}$.

\section{Conclusions}
We have developed achievability schemes for symmetric $K$-user interference channels with rate-limited feedback, for both the linear deterministic model, and the Gaussian model. For the deterministic model, the achievable symmetric rate is the minimum of the symmetric capacity with infinite feedback, and the sum of the symmetric capacity without feedback and the amount of symmetric feedback. And for the Gaussian model, the achievable rate is  within a constant gap to the minimum of the  symmetric capacity with infinite feedback, and the sum of the symmetric capacity without feedback and the amount of symmetric feedback. For the Gaussian model, the proposed achievability scheme employs lattice codes to perform  Han-Kobayashi message splitting, interference-decoding, and decode-and-forward. Further, the achievable generalized degrees of freedom (${\mathsf {GDoF}}$) is characterized with rate-limited feedback. It is shown that the per-user ${\mathsf {GDoF}}$ does not depend on the number of users, so that it is the same as that of the two-user interference channel with rate-limited feedback.

We conjecture that the minimum of the upper bound of the symmetric capacity with infinite feedback, and the sum of the upper bound of the symmetric capacity without feedback and the amount of symmetric feedback is an upper bound for the symmetric capacity of the Gaussian IC with rate-limited feedback for any number of users $K$. This conjecture has been shown to hold for the $K$-user IC without feedback in \cite{Ordentlich}, the $K$-user IC with infinite feedback in \cite{Mohajer}, and $K=2$ in \cite{Vahid}. However, it remains open for general $K$ and $C_{FB}$. The achievability for $\alpha=1$ in \cite{Ordentlich,Mohajer} assumes that channel gains are outside an outage set. Investigating whether this outage set shrinks with feedback is an interesting open problem.

\section*{Acknowledgement}
We are grateful to the anonymous referees and the Associate Editor for their valuable comments, which have helped us improve the presentation of this paper substantially.

\begin{appendices}
\section{Proof of  Theorem \ref{thm_det}} \label{apdx_innerdet}

In this section, we prove Theorem \ref{thm_det} by breaking the result into three regimes. We denote that $\overline{a_{i,j}}\triangleq \sum_{k=1,k\neq i}^{K}{a_{k,j}}$

\begin{lemma}\label{lemma-det1}
For the $K$-user linear deterministic IC, a symmetric rate of $n \min\{1-\alpha+\beta,1-\frac{\alpha}{2}\}$ is achievable for $0 \le \alpha \le \frac{1}{2}$. \end{lemma}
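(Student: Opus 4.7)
The plan is to exhibit a length-two block scheme generalizing the $K{=}3$, $n{=}5$, $m{=}2$, $p{=}0.5$ example of Section II to the whole very-weak regime $\alpha\le 1/2$ (equivalently $m\le n-m$). Set $t:=\min\{2p,m\}$, which is a non-negative integer since $2p$ is. Each user will receive $2(n-m)+t$ fresh bits per two-slot block, so the per-slot rate is $R_{sym}=n-m+t/2=\min\{n-m+p,\,n-m/2\}=n\min\{1-\alpha+\beta,\,1-\alpha/2\}$, which is what we must achieve.

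The construction has three phases. In slot 1, transmitter $k$ places $n-m+t$ fresh symbols $a_{k,1},\dots,a_{k,n-m+t}$ on its top $n-m+t$ levels and zeros on the rest; receiver $k$ sees the top $n-m$ levels cleanly, observes the mixtures $a_{k,n-m+i}+\overline{a_{k,i}}$ for $i=1,\dots,t$, and discards the pure-interference tail. In the feedback phase, receiver $k$ sends back those $t$ mixtures, which fits within the $2p$-bit per-block feedback budget, and transmitter $k$ subtracts its own $a_{k,n-m+i}$ to extract $\overline{a_{k,i}}$ for $i=1,\dots,t$. In slot 2, transmitter $k$ loads those $t$ aggregates on its top $t$ levels, leaves levels $t+1,\dots,m$ empty, and places the final $n-m$ fresh symbols $a_{k,n-m+t+1},\dots,a_{k,2(n-m)+t}$ on levels $m+1,\dots,n$ (which fits because $m\le n-m$).

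Decoding at receiver 1 (others follow by symmetry) is done by peeling. Slot 1 immediately gives $a_{1,1},\dots,a_{1,n-m}$. The top $n-m$ positions of slot 2 are also interference-free, so they yield $\overline{a_{1,i}}$ on positions $1,\dots,t$ and $n-2m$ new fresh symbols on positions $m+1,\dots,n-m$. Substituting the $\overline{a_{1,i}}$'s into the slot-1 mixtures recovers $a_{1,n-m+1},\dots,a_{1,n-m+t}$. Finally, at slot-2 positions $n-m+i$ for $i=1,\dots,m$, the received value equals a new fresh symbol plus $\sum_{j\ne 1}\overline{a_{j,i}}$; the aggregate vanishes for $i>t$ (empty levels), and for $i\le t$ a direct double-counting argument collapses it to $(K-1)a_{1,i}+(K-2)\overline{a_{1,i}}\pmod 2$, both summands of which have already been decoded. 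Hence the last $m$ fresh symbols peel off, and in total $2(n-m)+t$ fresh bits are delivered per user per block.

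The main obstacle is the slot-2 analysis. Two conditions must hold simultaneously: the empty band on levels $t+1,\dots,m$ must be wide enough to eliminate uncontrolled interference at receiver positions $n-m+t+1,\dots,n$, while the block of $n-m$ new fresh symbols must still fit on levels $m+1,\dots,n$; both reduce to the single inequality $m\le n-m$, which is precisely the very-weak-interference hypothesis $\alpha\le 1/2$. The other algebraic ingredient is the GF$(2)$ identity $\sum_{j\ne 1}\overline{a_{j,i}}=(K-1)a_{1,i}+(K-2)\overline{a_{1,i}}$, an elementary computation which is nonetheless the key step that lets receiver 1 cancel the lifted interference in slot 2 uniformly in $K$, rather than only for the specific $K=3$ treated in the example.
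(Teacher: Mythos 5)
Your proposal is correct and is essentially the paper's own proof in a different parametrization: your $t=\min\{2p,m\}$ equals $m-l$ for the paper's $l=(m-2p)^{+}$, and the two-slot layout (fresh symbols on the top $n-m+t$ levels, feedback of the $t$ mixed equations, retransmission of the aggregates $\overline{a_{k,i}}$ above an empty band of width $l$ with $n-m$ new fresh symbols below), the peeling decoder, and the cancellation identity $\sum_{j\ne 1}\overline{a_{j,i}}=(K-1)a_{1,i}+(K-2)\overline{a_{1,i}}$ all coincide with the scheme in Appendix \ref{apdx_innerdet}. The rate count of $2(n-m)+t$ bits per two channel uses likewise matches the paper's $2n-m-l$, yielding $n\min\{1-\alpha+\beta,1-\frac{\alpha}{2}\}$.
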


\begin{proof} Define $l\triangleq {(m-2p)}^+$. For the $i^{\text{th}}$ transmitter, $i \in \{1,...,K\}$, we transmit $a_{i,1},...,a_{i,2n-m-l}$ in two transmission slots.

{\bf First Round:}

1. Transmission: In the first round, the $i^{\text{th}}$ transmitter sends $a_{i,1},...,a_{i,n-l}$ on the highest $n-l$ transmission levels, respectively, and nothing on the lowest $l$ transmission levels.

2. Reception: Since $0 \le \alpha \le \frac{1}{2}$, the $i^{\text{th}}$ receiver receives $a_{i,1},...,a_{i,n-m}$ on the highest $n-m$ reception levels, respectively, and $a_{i,n-m+1}+\overline{a_{i,1}},...,a_{i,n-l}+\overline{a_{i,m-l}}$ on the next $m-l$ levels, respectively, and throws away whatever it receives on the last $l$ levels.

{\bf Feedback:}

Receiver $i$ sends back $a_{i,n-m+1}+\overline{a_{i,1}},...,a_{i,n-l}+\overline{a_{i,m-l}}$ over the feedback channel to transmitter $i$ ($m-l$ levels). Since $0\le m-l\le 2p$, the feedback rate is $p$ levels per channel use. With this feedback, transmitter $i$ decodes $\overline{a_{i,1}},...,\overline{a_{i,m-l}}$. Since the feedback does not increase the achievable rate in the statement of the Theorem beyond $p=m/2$, we only use $m/2$ levels of feedback if $p>m/2$.

{\bf Second Round:}

1. Transmission: In the second round, the $i^{\text{th}}$ transmitter sends $\overline{a_{i,1}},...,\overline{a_{i,m-l}}$ on the highest $m-l$ transmission levels, respectively, nothing on the next lower $l$ levels, and new levels of $a_{i,n-l+1},...,a_{i,2n-m-l}$ on the last $n-m$ levels, respectively.

2. Reception: The $i^{\text{th}}$ receiver receives $\overline{a_{i,1}},...,\overline{a_{i,m-l}}$ on the highest $m-l$ levels, nothing on the next $l$ levels, $a_{i,n-l+1},...,a_{i,2n-2m-l}$ on the next $n-2m$ levels, $a_{i,2n-2m-l+1}+(K-2)\overline{a_{i,1}}+(K-1)a_{i,1},...,a_{i,2n-m-2l}+(K-2)\overline{a_{i,m-l}}+(K-1)a_{i,m-l}$ on the next $m-l$ levels,  and $a_{i,2n-m-2l+1},...,a_{i,2n-m-l}$ on the lowest $l$ levels.

{\bf Decoding:}

Decoding by the $i^{\text{th}}$ receiver, $i \in \{1,...,K\}$, is performed as follows. First, $a_{i,1},...,a_{i,n-m}$ are decoded from the highest $n-m$ levels of the  first reception. Then, $\overline{a_{i,1}},...,\overline{a_{i,m-l}}$ are decoded from the highest $m-l$ levels of the second reception. Then, having $\overline{a_{i,1}},...,\overline{a_{i,m-l}}$, the receiver decodes $a_{i,n-m+1},...,a_{i,n-l}$ from $a_{i,n-m+1}+\overline{a_{i,1}},...,a_{i,n-l}+\overline{a_{i,m-l}}$ on the next $m-l$ levels of the  first reception. Then, the receiver decodes $a_{i,n-l+1},...,a_{i,2n-2m-l}$ from the ${(m+1)}^{\text{th}}$ to ${(n-m)}^{\text{th}}$ highest levels of the second reception, respectively. Then, having  $a_{i,1},...,a_{i,m-l}$, and $\overline{a_{i,1}},...,\overline{a_{i,m-l}}$, the receiver decodes $a_{i,2n-2m-l+1},...,a_{i,2n-m-2l}$ from $a_{i,2n-2m-l+1}+(K-2)\overline{a_{i,1}}+(K-1)a_{i,1},...,a_{i,2n-m-2l}+(K-2)\overline{a_{i,m-l}}+(K-1)a_{i,m-l}$ on the next $m-l$ lower levels of the second reception. Finally, the receiver decodes $a_{i,2n-m-2l+1},...,a_{i,2n-m-l}$ from the lowest $l$ levels of the second reception.

{\bf Rate:}

With the above strategy, each user transmits $2n-m-l$ levels in two uses of the channel which proves the lemma because
$\frac{1}{2}(2n-m-l)=\frac{1}{2}(2n-m-(m-2p)^+)=\frac{1}{2}\min\{2n-m,2n-2m+2p\}=\min\{n-\frac{1}{2}m,n-m+p\}=n\min\{1-\frac{\alpha}{2},1-\alpha+\beta\}$.
\end{proof}

\begin{lemma}\label{lemma-det2}
For the $K$-user linear deterministic IC, a symmetric rate of  $n \min\{\alpha+\beta,1-\frac{\alpha}{2}\}$  is achievable for $\frac{1}{2} \le \alpha \le \frac{2}{3}$.
\end{lemma}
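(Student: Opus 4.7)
The plan is to build a two-round scheme in the spirit of Lemma~\ref{lemma-det1} and the example in Fig.~\ref{fig:Example2}. Writing $2R_{sym}=2n-m-l$ with $l\triangleq (2n-3m-2p)^+$, the target is to convey $m-l$ fresh ``common'' bits together with $2(n-m)$ fresh ``private'' bits per user across two channel uses, with at most $2p$ bits per two uses exchanged over the feedback link. Since $\frac{n}{2}\le m\le \frac{2n}{3}$, every transmitter has $m$ top common levels (which interfere with others) and $n-m\le m$ bottom private levels (which do not); the scheme fills the private levels in each round and splits the common budget $m-l$ between the two rounds.

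First, in round~1 the $i^{\text{th}}$ transmitter places fresh common bits on its top $c_1$ levels and fresh private bits on its bottom $n-m$ levels, leaving a gap of $m-c_1$ middle levels unused. With this choice the $i^{\text{th}}$ receiver sees $c_1$ clean intended common bits on top, some standalone aligned interference sums $\overline{a_{i,\cdot}}$, at most $2p$ equations of the form $a_{i,\cdot}+\overline{a_{i,\cdot}}$ in which a fresh intended private bit is mixed with an aligned sum, and some clean intended private bits on the bottom. The receiver then feeds back the mixed equations to its own transmitter, using the full feedback budget of $2p$ bits per two channel uses, and the transmitter subtracts its known private bits to obtain the pure aligned sums $\overline{a_{i,\cdot}}$.

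Second, in round~2 the transmitter again uses the bottom $n-m$ levels for fresh private bits, the top $c_2=m-l-c_1$ common levels for fresh common bits, and places the recovered aligned sums on specific common levels immediately below the fresh common block; the skipped gap is adjusted so that, at each receiver, the round~2 aligned interference appears either at levels where it can be decoded standalone (and subtracted) or coincides with the forwarded sums (which are received on interference-free levels). Each receiver then combines the two rounds: the round~2 clean levels yield the round~2 fresh bits and the forwarded aligned sums directly, subtracting these from the round~1 mixed equations recovers the round~1 private bits that had been corrupted, and the remaining round~1 receptions are already clean. Counting gives $c_1+c_2+2(n-m)=m-l+2(n-m)=2n-m-l=2R_{sym}$.

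The main obstacle will be the combinatorial choice of $c_1$, of the gap sizes in each round, and of the exact levels on which to place the forwarded aligned sums in round~2. One must verify, across both the scarce-feedback regime ($l>0$, rate $m+p$) and the abundant-feedback regime ($l=0$, rate $n-\frac{m}{2}$), that (a)~at most $2p$ mixed equations arise in round~1 so that the feedback channel is not overloaded, (b)~no forwarded aligned sum in round~2 collides with a fresh bit or with another forwarded sum, and (c)~every round-1 mixed equation has a round-2 decoded aligned sum available at the receiver. I plan to verify these conditions by mirroring the pattern of the worked example and then checking the level indices algebraically, analogous to what is done in the proof of Lemma~\ref{lemma-det1}.
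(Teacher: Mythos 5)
Your architecture is the same as the paper's proof of this lemma: the paper also sets $l=(2n-3m-2p)^{+}$, sends $c_1=n-m-l$ fresh common bits on top and $n-m$ fresh private bits on the bottom in round 1 (gap of $2m-n+l$ in between), feeds back the $2n-3m-l=\min\{2n-3m,2p\}\le 2p$ mixed equations, lets the transmitter strip its own private bits to recover the aligned sums $\overline{a_{i,k}}$, and in round 2 sends $c_2=2m-n$ fresh common bits on top, the recovered sums immediately below them, and $n-m$ fresh private bits at the bottom. With this choice your conditions (a) and (c) hold exactly as you anticipate, and your rate count $c_1+c_2+2(n-m)=2n-m-l$ is the paper's.

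The step that would fail is your condition (b). In round 2 the common levels must carry $c_2+(2n-3m-l)=n-m-l$ symbols, while the only collision-free landing zone at an unintended receiver (its own round-2 gap) has width exactly $2m-n=c_2$ and is completely consumed by the other users' fresh common bits; hence all $2n-3m-l$ forwarded sums of the other $K-1$ users necessarily land on the receiver's fresh private bits (receiver levels $m+1$ through $2n-2m-l$), a collision that is unavoidable whenever $p>0$ and $\alpha<\tfrac23$. The paper does not avoid this collision, it cancels it, using an identity your proposal never states and which is the genuinely $K$-user step: the colliding interference on index $k$ is
\begin{equation*}
\sum_{j\neq i}\overline{a_{j,k}}=(K-1)\,a_{i,k}+(K-2)\,\overline{a_{i,k}} \quad (\bmod\ 2),
\end{equation*}
and both ingredients are already available at receiver $i$, since the forwarded indices $k\in\{2m-n+1,\dots,n-m-l\}$ lie inside the round-1 clean common block (giving $a_{i,k}$) and $\overline{a_{i,k}}$ arrives cleanly in round 2 below the fresh common block. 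Your phrase ``coincides with the forwarded sums'' is literally true only for $K=2$; for $K>2$ the interference is a different linear combination, and without this identity your ``subtract and recover'' step has no justification. Note also that the ordering you chose (fresh common on top, forwarded sums below) is forced: if the two blocks were swapped, the colliding interference would instead consist of other users' \emph{fresh} common bits, which the receiver has no way to reconstruct. Once condition (b) is replaced by this cancellation argument, the remaining index bookkeeping goes through exactly as in the paper.
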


\begin{proof} Define $l^{'}\triangleq {(2n-3m-2p)}^+$. For the $i^{\text{th}}$ transmitter, $i \in \{1,...,K\}$, we transmit $a_{i,1},...,a_{i,2n-m-l^{'}}$ in two transmission slots.

{\bf First Round:}

1. Transmission: In the first round, the $i^{\text{th}}$ transmitter sends $a_{i,1},...,a_{i,n-m- l^{'}}$ on the highest $n-m- l^{'}$ transmission levels, nothing on the next lower $2m-n+ l^{'}$ levels, and $a_{i, n-m- l^{'} +1},...,a_{i, 2n-2m- l^{'} }$ on the lowest $n-m$ levels.

2. Reception: Since $\frac{1}{2} \le \alpha \le \frac{2}{3}$, the $i^{\text{th}}$ receiver receives $a_{i,1},...,a_{i,n-m- l^{'}}$ on the highest $n-m- l^{'}$ reception levels, nothing on the next lower $l^{'}$ levels, $\overline{a_{i,1}},...,\overline{a_{i,2m-n}}$ on the next lower $2m-n$ levels, $a_{i, n-m- l^{'} +1}+\overline{a_{i,2m-n+1}},...,a_{i, 3n-4m-2 l^{'} }+\overline{a_{i, n-m- l^{'}}}$ on the next $2n-3m- l^{'} $ levels,  and $a_{i, 3n-4m-2 l^{'} +1},...,a_{i, 2n-2m- l^{'} }$ on the lowest $2m-n+l^{'} $ levels.

{\bf Feedback:}

Receiver $i$ sends back $a_{i, n-m- l^{'} +1}+\overline{a_{i,2m-n+1}},...,a_{i, 2n-3m-2 l^{'} }+\overline{a_{i, n-m- l^{'}}}$ over the feedback channel to transmitter $i$ ($2n-3m- l^{'} $ levels).  Since $0\le 2n-3m-l^{'} \le 2p$, the feedback rate is $p$ levels per channel use. With this feedback, transmitter $i$ decodes $\overline{a_{i,2m-n+1}},...,\overline{a_{i, n-m- l^{'}}}$.

{\bf Second Round:}

1. Transmission: In the second round, the $i^{\text{th}}$ transmitter sends the new signals $a_{i, 2n-2m- l^{'} +1},...,a_{i, n-l^{'}}$ on the highest $2m-n$ transmission levels,  $\overline{a_{i,2m-n+1}},...,\overline{a_{i, n-m- l^{'}}}$ on the next $2n-3m- l^{'} $ levels, nothing on the next lowest $2m-n+l^{'} $ levels,  and the new signals $a_{i, n-l^{'}+1},...,a_{i, 2n-m-l^{'}}$ on the lowest $n-m$ levels.

2. Reception: In this round, the $i^{\text{th}}$ receiver receives $a_{i, 2n-2m- l^{'} +1},...,a_{i, n-l^{'}}$ on the highest $2m-n$ reception levels, $\overline{a_{i,2m-n+1}},...,\overline{a_{i, n-m- l^{'}}}$ on the next $2n-3m- l^{'} $ levels,  nothing on the next lower $l^{'} $ levels, $\overline{a_{i, 2n-2m- l^{'} +1}},...,\overline{a_{i, n-l^{'}}}$ on the next lower $2m-n$ levels, $a_{i, n-l^{'}+1}+(K-2)\overline{a_{i,2m-n+1}}+(K-1) a_{i,2m-n+1},...,a_{i, 3n-3m-2l^{'}}+(K-2)\overline{a_{i,2n-3m-l^{'} }}+(K-1)a_{i, 2n-3m-l^{'} }$ on the next lower $2n-3m-l^{'}$ levels, and $a_{i, 3n-3m-2l^{'}+1},...,a_{i, 2n-m-l^{'}}$ on the lowest $2m-n+l^{'}$ levels.

{\bf Decoding:}

Decoding by the $i^{\text{th}}$ receiver, $i \in \{1,...,K\}$, is performed as follows. First, $a_{i,1},...,a_{i,n-m- l^{'}}$ are decoded from the highest $n-m- l^{'}$ levels of the  first reception. Then, $a_{i, 3n-4m-2 l^{'} +1},...,a_{i, 2n-2m- l^{'} }$ are decoded from the lowest $2m-n+l^{'} $ levels of the first reception. Further, $a_{i, 2n-2m- l^{'} +1},...,a_{i, n-l^{'}}$ are decoded from the highest $2m-n$ levels of the second reception,  and $\overline{a_{i,2m-n+1}},...,\overline{a_{i, n-m- l^{'}}}$ are decoded from the next $2n-3m- l^{'} $ levels of the second reception. Moreover,  $a_{i, 3n-3m-2l^{'}+1},...,a_{i, 2n-m-l^{'}}$ are decoded from the lowest $2m-n+l^{'}$ levels of the first transmission, respectively.

Then, having $\overline{a_{i,2m-n+1}},...,\overline{a_{i, n-m- l^{'}}}$, the receiver decodes  $a_{i, n-m- l^{'} +1},...,a_{i, 2n-3m-2 l^{'}}$ from  $a_{i, n-m- l^{'} +1}+\overline{a_{i,2m-n+1}},...,a_{i, 2n-3m-2 l^{'} }+\overline{a_{i, n-m- l^{'}}}$ in the first reception. Finally, having  $\overline{a_{i,2m-n+1}},...,\overline{a_{i,2n-3m-l^{'}}}$, and $a_{i,2m-n+1},...,a_{i, 2n-3m-l^{'} }$, the receiver decodes $a_{i, n-l^{'}+1},...,a_{i, 3n-3m-2l^{'}}$  from $a_{i, n-l^{'}+1}+(K-2)\overline{a_{i,2m-n+1}}+(K-1) a_{i,2m-n+1},...,a_{i, 3n-3m-2l^{'}}+(K-2)\overline{a_{i,2n-3m-l^{'} }}+(K-1)a_{i, 2n-3m-l^{'} }$ in the second reception.

{\bf Rate:}

With the above strategy, each user transmits $2n-m-l^{'}$ levels in two uses of the channel which proves the lemma because
$\frac{1}{2}(2n-m-l^{'})=\frac{1}{2}(2n-m-(2n-3m-2p)^+)=\frac{1}{2}\min\{2n-m,2m+2p\}=\min\{n-\frac{1}{2}m,m+p\}=n\min\{1-\frac{\alpha}{2},\alpha+\beta\}$.
\end{proof}

\begin{lemma}\label{lemma-det3}
For the $K$-user linear deterministic IC, a symmetric rate of $n \min\{1+\beta,\frac{\alpha}{2}\}$  is achievable for $ \alpha \ge 2$.
\end{lemma}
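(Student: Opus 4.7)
The plan is to adapt the two-round block scheme of Lemmas~\ref{lemma-det1} and \ref{lemma-det2} to the very strong regime $\alpha\ge 2$ (equivalently $m\ge 2n$). Set $f\triangleq\min\{2p,m-2n\}$ so that $2n+f=\min\{2n+2p,m\}=2R_{sym}$, the total number of fresh symbols to be delivered per user in two channel uses. Two features make this regime clean: the direct-link shift $D^{m-n}$ is so large that the direct receive window $\{m-n+1,\dots,m\}$ lies strictly below the interference support $\{1,\dots,n+f\}\subseteq\{1,\dots,m-n\}$, and the top $n$ transmit levels are the only ones that reach the intended receiver at all. I will exploit both features below.

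In the first round transmitter~$i$ places the fresh bits $a_{i,1},\dots,a_{i,n+f}$ on its top $n+f$ transmit levels. Receiver~$i$ then sees, on its top $n+f$ levels, the pure peer-sum equations $\overline{a_{i,1}},\dots,\overline{a_{i,n+f}}$, nothing on the middle levels, and the intended bits $a_{i,1},\dots,a_{i,n}$ clean on its bottom $n$ levels; it reads off those $n$ bits directly and feeds back the $f$ peer-sums $\overline{a_{i,n+1}},\dots,\overline{a_{i,n+f}}$, which fits in the two-round feedback budget $2p$ with any surplus discarded in the style of Lemma~\ref{lemma-det1}. In the second round transmitter~$i$ sends the fresh bits $a_{i,n+f+1},\dots,a_{i,2n+f}$ on its top $n$ transmit levels together with the relayed peer-sums $\overline{a_{i,n+1}},\dots,\overline{a_{i,n+f}}$ on transmit levels $n+1,\dots,n+f$, leaving the rest empty. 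The new fresh bits arrive unspoiled on the bottom $n$ levels of receiver~$i$ and are read off immediately. On receive levels $n+1,\dots,n+f$, transmitter~$i$'s own relays contribute nothing directly because transmit levels above $n$ are shifted out of range, so the receiver sees only $\sum_{k\ne i}\overline{a_{k,n+j}}=(K-1)\,a_{i,n+j}+(K-2)\,\overline{a_{i,n+j}}$, and since $\overline{a_{i,n+j}}$ is already known from round~$1$ the unknown $a_{i,n+j}$ is recovered exactly as in the $K=3$ illustration of Figure~\ref{fig:Example3}. Summing the two rounds gives $R_{sym}=(2n+f)/2=\min\{n+p,m/2\}=n\min\{1+\beta,\alpha/2\}$.

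The main obstacle is verifying that this relay alignment produces a uniquely solvable equation for every $a_{i,n+j}$, which is where the full symmetry of the $K$-user interference graph is essential: the combination $\sum_{k\ne i}\overline{a_{k,n+j}}$ collapses to a fixed coefficient of $a_{i,n+j}$ plus a known multiple of $\overline{a_{i,n+j}}$, and this coefficient must be invertible in the underlying arithmetic. The cap $f\le 2p$ also handles the sub-regime $m<2n+2p$, in which the rate ceiling is $m/2$ rather than $n+p$, in exactly the same fashion as Lemma~\ref{lemma-det1} caps the usable feedback at $m/2$ when $p>m/2$. Once these two points are in place the decoding order and the rate count are structurally identical to those in the two preceding lemmas.
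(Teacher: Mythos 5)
Your proposal is essentially the paper's own proof: your $f=\min\{2p,\,m-2n\}$ equals the paper's $m-2n-l''$ with $l''=(m-2n-2p)^+$, and the two-round transmission of $n+f$ then $n$ fresh bits, the $f$ fed-back peer-sums $\overline{a_{i,n+1}},\dots,\overline{a_{i,n+f}}$, their relay placement on transmit levels $n+1,\dots,n+f$, the decoding via $(K-1)a_{i,n+j}+(K-2)\overline{a_{i,n+j}}$, and the rate count $\tfrac{1}{2}(2n+f)=n\min\{1+\beta,\tfrac{\alpha}{2}\}$ all coincide with the paper's construction. The invertibility caveat you flag is not a point of difference, since it applies equally to the paper's own argument (over $\mathsf{F}_2$ the coefficient $K-1$ vanishes whenever $K$ is odd, a subtlety the paper leaves unaddressed).
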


\begin{proof} Define $l^{''}\triangleq {(m-2n-2p)}^+$. For the $i^{\text{th}}$ transmitter, $i \in \{1,...,K\}$, we transmit $a_{i,1},...,a_{i,m-l^{''}}$ in two transmission slots.

{\bf First Round:}

1.Transmission: In the first round, the $i^{\text{th}}$ transmitter sends $a_{i,1},...,a_{i,m - n - l^{''}}$ on the highest $ m-n- l^{''}$ transmission levels, respectively, and nothing on the lower $n+ l^{''}$ levels.

2. Reception: Since $ \alpha \ge 2$, the $i^{\text{th}}$ receiver receives $\overline{a_{i,1}},...,\overline{a_{i,m-n-l^{''}}}$ on the highest $ m-l^{''}$ reception levels, nothing on the next lower $ l^{''}$ levels, and $a_{i,1},...,a_{i,n}$  on the lowest $n $ levels.

{\bf Feedback:}

Receiver $i$ sends back $\overline{a_{i,n+1}},...,\overline{a_{i, m-n-l^{''}}}$ over the feedback channel to the $i^{\text{th}}$ transmitter ($m-2n-l^{''}$ levels). Since $0\le m-2n-l^{''} \le 2p$, the feedback rate is $p$ levels per channel use.

{\bf Second Round:}

1.Transmission: In the second round, the $i^{\text{th}}$ transmitter sends new levels $a_{i, m - n - l^{''}+1},...,a_{i,m - l^{''}}$  on the highest $n$ transmission levels,  $\overline{a_{i,n+1}},...,\overline{a_{i, m-n-l^{''}}}$ on the next $m-2n- l^{''} $ levels, and nothing on the lower $n+ l^{''}$ levels.

2. Reception: The $i^{\text{th}}$ receiver receives $\overline{a_{i, m - n - l^{''}+1}},...,\overline{a_{i,m - l^{''}}}$  on the highest $n$ reception levels,  $(K-1) a_{i,n+1}+(K-2)\overline{a_{i,n+1}},...,(K-1) a_{i, m-n-l^{''}}+(K-2)\overline{a_{i, m-n-l^{''}}}$ on the next $m-2n- l^{''} $ levels, nothing on the next lower $l^{''}$ levels, and $a_{i, m - n - l^{''}+1},...,a_{i,m - l^{''}}$  on the lowest $n$ levels.

{\bf Decoding:}

Decoding at the $i^{\text{th}}$ receiver, $i \in \{1,...,K\}$ is performed as follows. First, $a_{i,1},...,a_{i,n}$ are decoded from the lowest $n$ levels of the  first reception, $\overline{a_{i,n+1}},...,\overline{a_{i, m-n-l^{''}}}$ are decoded from the ${(n+1)}^{\text{th}}$ to ${( m-n-l^{''})}^{\text{th}}$ highest levels of the first reception,  and $a_{i,m-n- l^{''}+1},...,a_{i,m -l^{''}}$ are decoded from the lowest $n$ levels of the second reception. Then, having  $\overline{a_{i,n+1}},...,\overline{a_{i, m-n-l^{''}}}$, the receiver decodes $a_{i,n+1},...,a_{i, m-n-l^{''}}$ from $(K-1) a_{i,n+1}+(K-2)\overline{a_{i,n+1}},...,(K-1)a_{i, m-n-l^{''}}+(K-2)\overline{a_{i, m-n-l^{''}}}$ in the second reception.

{\bf Rate:}

With the above strategy, each user transmits $m- l^{''}$ levels in two uses of the channel which proves the lemma because
$\frac{1}{2}(m-l^{''})=\frac{1}{2}(m-(m-2n-2p)^+)=\frac{1}{2}\min\{m,2n+2p\}=\min\{\frac{m}{2},n+p\}=n\min\{\frac{\alpha}{2},1+\beta\}$.
\end{proof}

\section{Some Lemmas Used in Proofs of Achievability for Gaussian Channel}

\subsection{Lemmas for the proof of decodability of forward transmission} \label{sjkhgsjkdh}

In Lemmas \ref{sjkfgv}-\ref{sjkhgsjkdh2} in the following, assume an interference network with $K$ transmitters and $M$ receivers, where the discrete-time real Gaussian channel has the vector representation
\begin{equation}
{\bf y}_m = \sum_{k=1}^{K} h_{m,k} {\bf x}_k + {\bf z}_m,
\end{equation}
with ${\bf y}_m\in{\mathbb R}^T$, ${\bf x}_k\in{\mathbb R}^T$, $h_{m,k} \in{\mathbb R}$ denoting the channel output of receiver $m$, channel input of transmitter $k$ and the channel gain, respectively. The Gaussian white noise with unit variance is denoted by ${\bf z}_m\in{\mathbb R}^T$. Also, assume the power constraint ${\mathbb E} \{ {\| {\bf x}_k \|}^2 \} \le T P$ on all the transmitters, and each transmitted signal ${\bf x}_k$ is built from the lattice points ${\bf s}_k=\phi({\bf w}_k)$ and using a dither as described in Section \ref{vjhdz}.

The following lemma is taken from \cite{zhu2014asymmetric}:

\begin{lemma}\label{sjkfgv}\cite[Theorem~2]{zhu2014asymmetric}
For any given set of positive numbers $\beta_1,\dots,\beta_K$, and the lattice codes ${\mathcal C}_1, \dots , {\mathcal C}_K$ as described in Section \ref{vjhdz},  the capacity region is such that the desired functions $f_m = \sum_k a_{m,k} {\bf s}_k$, $m\in\{1,\dots,M\}$ are obtainable at destinations, with $a_{m,k}\in{\mathbb Z}$ and the set of rates  $(R_1, \dots , R_K )$ satisfying
\begin{equation}
R_k < \min_{\left\{m | m\in{\mathbb Z}, 1\le m\le M, a_{m,k}\neq 0\right\}} {\left[ \log{\left( {\| {\tilde{\bf a}_m} \|}^2 - \frac{P{( {\bf h}_m^t{\tilde{\bf a}_m} )}^2 }{1+P { \| {\bf h}_m \| }^2 } \right)}^{-1} + \log{\beta_k^2} \right] }^{+},
\end{equation}
for all $k$, where ${{\bf h}_m} \triangleq {[ h_{m,1}, \dots , h_{m,K}]}^t$, ${\tilde{\bf a}_m} \triangleq {[ \beta_1 a_{m,1}, \dots , \beta_K a_{m,K}]}^t$ and $a_{m,k} \in {\mathbb Z}$ for all $k \in \{1,\dots,K\}$.
\end{lemma}

The following lemma is also similar to \cite[Lemma~1]{nazer2012successive} with some modifications:
\begin{lemma}\label{sjh2}
The receiver can make an estimate of the real sum of codewords, $\sum_{k=1}^{J}{\bf x}_k$, with vanishing probability of error so long as the rate constraints proposed in Lemma \ref{sjkfgv} hold.
\end{lemma}
\begin{proof}
In \cite[Lemma~1]{nazer2012successive}, it is shown that if the conditions in Lemma \ref{sjkfgv} hold and we are able to derive $\sum_k a_{k} {\bf s}_k$ for the case that the rates of all messages are equal, the real sum of codewords, $\sum_k a_{k} {\bf x}_k$ can be obtained. The proof in \cite[Lemma~1]{nazer2012successive} can be easily extended to the case where the message rates are different, thus giving the result as in the statement of the lemma.
\end{proof}


Using the following lemma on the properties of lattice codes, the achievability constraints of our theorems on recovering the summation of lattices are obtained: 

\begin{lemma}\label{sjkhgsjkdh2}
Assuming $h_1 = \dots = h_J = h$, $J\leq K$ we are able to obtain $\sum_{k=1}^{J}{\bf x}_k$ with vanishing probability of error as $T\to\infty$, if the following constraints hold:
\begin{equation}
R_i \le \log\left( \frac{1}{J} + \frac{P h^2}{ P \sum_{j = J+1}^{K} h_j^2 +1} \right), \ \ \ \ \ \ \ \ i\in\{1,\dots,J\}.
\end{equation}
\end{lemma}

\begin{proof} 
In Lemma \ref{sjkfgv}, assume $\beta_1=\dots=\beta_K=1$ and $a_1=\dots=a_J=1$ and $a_{J+1}=\dots=a_K=0$. Then, the rate constraints $R_i$, $i\in\{1,\dots,J\}$, need to satisfy:
\begin{eqnarray}
R_i &\le& \log{\left( {\| {\bf a} \|}^2 - \frac{P {({\bf h}^{t}{\bf a})}^2}{ P {\| {\bf h} \|}^2 +1} \right)}^{-1} \nonumber\\
&=& \log{\left( J - \frac{P {(Jh)}^2}{ P \sum_{j = 1}^{K} h_j^2 +1} \right)}^{-1}\nonumber\\
&=& \log{\left( \frac{ J(P \sum_{j = 1}^{K} h_j^2 +1) - P {(Jh)}^2}{ P \sum_{j = 1}^{K} h_j^2 +1} \right)}^{-1}\nonumber\\
&=& \log{\left( \frac{ P \sum_{j = 1}^{K} h_j^2 +1}{ J(P \sum_{j = 1}^{K} h_j^2 +1) - P {(Jh)}^2} \right)}\nonumber\\
&=& \log\left( \frac{1}{J} + \frac{P h^2}{ P \sum_{j = J+1}^{K} h_j^2 +1} \right).
\end{eqnarray}

Therefore, we are able to derive $\sum_{k=1}^{J}{\bf s}_k$ using Lemma \ref{sjkfgv}. Then, by applying Lemma \ref{sjh2} it can be seen that the real sum of codewords, $\sum_{k=1}^{J}{\bf x}_k$, can be obtained which completes the proof.
\end{proof}

\subsection{A lemma for the proof of decodability of feedback transmission} \label{sjk}

Using the following lemma on the properties of lattice codes, the achievability constraints of our theorems on feeding back the summation of multiple lattices to the transmitters are obtained:

\begin{lemma}\label{sjk2}
Assume that each transmitter $k\in\{1,\dots,K\}$, is equipped with an encoder ${\mathcal E}_k$ of rate $R$ which maps its message into the channel input as ${{\bf x}_k}$ that is chosen from a lattice and is a discrete subgroup of ${\mathbb R}^T$ (as described in Section \ref{vjhdz}). If $R_{sum}$ is the minimum rate needed for transmitting $\sum^{K}_{k=1}{{\bf x}_k}$ with error going to zero on feedback links as the block size $T \to\infty$, then $R_{sum} \le R + \log {K}$.
\end{lemma}

\begin{proof}
Assume that each ${\bf x}_k$, $k\in\{1,\dots,K\}$, is a lattice codeword, with rate $R$. Depending on the Voronoi region which is in a $T$-dimensional space, the number of possible values for each ${\bf x}_k$ (the number of channel coding lattice points in Voronoi cell) is $|\Lambda_{f} \cap {\mathcal{V}}_{\Lambda_{c}}|$ where $\Lambda_{c}$ is the quantization lattice with channel coding lattice $\Lambda_{f}$, and ${{\mathcal{V}}_{\Lambda_{c}}}$ is the $T$-dimensional Voronoi cell of the lattice $\Lambda_{c}$. Since if all of the $K$ lattices of ${\bf x}_k$'s are along the same direction, their sum has the maximum length which is $K$ times the length of each individual ${\bf x}_k$, the number of possible values for the sum of messages $\sum_{k=1}^{K} {\bf x}_k$ is up to $|\Lambda_{f} \cap \left(K^T {\mathcal{V}}_{\Lambda_{c}}\right)|$. Therefore, if $R_{sum}$ is the rate needed to transmit the sum of ${\bf x}_k$'s, given $\frac{2^{TR_{sum}}}{2^{TR}} \le \frac{|\Lambda_{f} \cap \left(K^T {\mathcal{V}}_{\Lambda_{c}}\right)|}{|\Lambda_{f} \cap {\mathcal{V}}_{\Lambda_{c}}|}$, we have $R_{sum} - R \le \log {K}$. 
\end{proof}

\section{Proof of Theorem \ref{thm_gauss}} \label{apdx_gap}
We split the proof into three cases: $\alpha \le \frac{1}{2}$, $1/2<\alpha \le \frac{2}{3}$, and $\alpha \ge 2$.

{\bf Case 1 ($\alpha \le \frac{1}{2}$):} We use the following parameters in Theorem \ref{thm_gauss1}: $\mu^{(1)} = \frac{1}{2{\mathsf {INR}}} \min \{2^{2 C_{FB}},{\mathsf {INR}}-1\}$, $\mu^{(2)} =\frac{1}{{\mathsf {INR}}}-\frac{1}{2{\mathsf {SNR}}} \min \{2^{2 C_{FB}},{\mathsf {INR}}-1\}$, and $\mu^{(4)} = \frac{1}{{\mathsf {INR}}}$. We first lower bound the right-hand sides (RHS) of \eqref{1w}-\eqref{6w} as follows.

RHS of \eqref{1w}:
\begin{eqnarray}
&&\log \left( \frac{{\mathsf {SNR}} \mu^{(1)}}{{\mathsf {SNR}} \mu^{(2:3)}+{\mathsf {SNR}}^{\alpha}\mu^{(1:3)}(K-1)+\mu^{(1:3)}} \right)\nonumber\\
&=& \log\left( \frac{\frac{1}{2}{\mathsf {SNR}^{1-\alpha}} \min \{2^{2 C_{FB}},{\mathsf {INR}}-1\}}{{\mathsf {SNR}^{1-\alpha}} +(K-1)
+\frac{1}{2}\min \{2^{2 C_{FB}},{\mathsf {INR}}-1\}(K-1)+\mu^{(1:3)}} \right)\nonumber\\
&\stackrel{(a)}{\ge}& \log\left( \frac{\frac{1}{2}{\mathsf {SNR}^{1-\alpha}} \min \{2^{2 C_{FB}},{\mathsf {INR}}-1\}}{{\mathsf {SNR}^{1-\alpha}} +(K-1)
+\frac{1}{2}\min \{2^{2 C_{FB}},{\mathsf {INR}}-1\}(K-1)+1} \right)\nonumber\\
&\stackrel{(b)}{\ge}& \log\left( \frac{\frac{1}{2}{\mathsf {SNR}^{1-\alpha}} \min \{2^{2 C_{FB}},{\mathsf {INR}}-1\}}{{\mathsf {SNR}^{1-\alpha}} +(K-1)
+\frac{1}{2}({\mathsf {INR}}-1)(K-1)+1} \right)\nonumber\\
&\stackrel{(c)}{\ge}& \log\left( \frac{\frac{1}{2}{\mathsf {SNR}^{1-\alpha}} \min \{2^{2 C_{FB}},{\mathsf {INR}}-1\}}{{\mathsf {SNR}^{1-\alpha}}\left(\frac{K+1}{2}\right)+\left(\frac{K+1}{2}\right)} \right)\nonumber\\
&=& \log\left( \frac{\frac{1}{2}{\mathsf {SNR}^{1-\alpha}} \min \{2^{2 C_{FB}},{\mathsf {INR}}-1\}}{{\mathsf {SNR}^{1-\alpha}}+1} \right)-\log(K+1)\nonumber\\
&\stackrel{(d)}{\ge}& \log\left( \frac{\frac{1}{2}{\mathsf {SNR}^{1-\alpha}} \min \{2^{2 C_{FB}},{\mathsf {INR}}-1\}}{{2\mathsf {SNR}^{1-\alpha}}} \right)-\log(K+1)\nonumber\\
&=& \log\left( \frac{{\mathsf {SNR}^{1-\alpha}} \min \{2^{2 C_{FB}},{\mathsf {INR}}-1\}}{{\mathsf {SNR}^{1-\alpha}}} \right)-\log4(K+1)\nonumber\\
&=& \log\left({ \min \{2^{2 C_{FB}},{\mathsf {INR}}-1\}} \right)-\log(4(K+1)),
\end{eqnarray}
where (a) follows since  $\mu^{(1:3)}\le 1$, (b) follows since  $\min \{2^{2 C_{FB}},{\mathsf {INR}}-1\}\le {\mathsf {INR}}-1$, (c) follows since  ${\mathsf {INR}}\le{\mathsf {SNR}^{1-\alpha}}$, and (d) follows since  $1\le{\mathsf {SNR}^{1-\alpha}}$.

RHS of \eqref{2w}:
\begin{eqnarray}
&&\log \left( \frac{{\mathsf {SNR}} \mu^{(2)}}{{\mathsf {SNR}} \mu^{(3)}+{\mathsf {SNR}}^{\alpha}\mu^{(1:3)}(K-1)+\mu^{(1:3)}} \right)\nonumber\\
&=& \log \left( \frac{{\mathsf {SNR}}^{1-\alpha}-\frac{1}{2}\min\{2^{2 C_{FB}},{\mathsf {INR}}-1\}}{\frac{1}{2}\min\{2^{2 C_{FB}},{\mathsf {INR}}-1\}+(K-1)+\frac{1}{2}\min\{2^{2 C_{FB}},{\mathsf {INR}}-1\}(K-1)+\mu^{(1:3)}}\right)\nonumber\\
&\stackrel{(a)}{\ge}& \log \left( \frac{{\mathsf {SNR}}^{1-\alpha}-\frac{1}{2}\min\{2^{2 C_{FB}},{\mathsf {INR}}-1\}}{\frac{1}{2}\min\{2^{2 C_{FB}},{\mathsf {INR}}-1\}+K+\frac{1}{2}\min\{2^{2 C_{FB}},{\mathsf {INR}}-1\}(K-1)}\right)\nonumber\\
&=& \log \left( \frac{{\mathsf {SNR}}^{1-\alpha}-\frac{1}{2}\min\{2^{2 C_{FB}},{\mathsf {INR}}-1\}}{K+\frac{1}{2}\min\{2^{2 C_{FB}},{\mathsf {INR}}-1\}K}\right)\nonumber\\
&\stackrel{(b)}{\ge}& \log \left( \frac{\frac{1}{2}({{\mathsf {SNR}}^{1-\alpha}+1})}{K+\frac{1}{2}\min\{2^{2 C_{FB}},{\mathsf {INR}}-1\}K}\right)\nonumber\\
&=& \log \left( \frac{\frac{1}{2}({{\mathsf {SNR}}^{1-\alpha}+1})}{1+\frac{1}{2}\min\{2^{2 C_{FB}},{\mathsf {INR}}-1\}}\right)-\log(K)\nonumber\\
&=& \log \left( \frac{({{\mathsf {SNR}}^{1-\alpha}+1})}{2+\min\{2^{2 C_{FB}},{\mathsf {INR}}-1\}}\right)-\log(K)\nonumber\\
&\stackrel{(c)}{\ge}& \log \left( \frac{({{\mathsf {SNR}}^{1-\alpha}+1})}{3\min\{2^{2 C_{FB}},{\mathsf {INR}}-1\}}\right)-\log(K)\nonumber\\
&=& \log \left({1+{\mathsf {SNR}}^{1-\alpha}}\right)
-\min \{2 C_{FB},\log({\mathsf {INR}}-1)\}-\log(3K),
\end{eqnarray}
where (a) follows since $\mu^{(1:3)}\le 1$, (b) follows since  $\min \{2^{2 C_{FB}},{\mathsf {INR}}-1\}\le {\mathsf {INR}}-1$ and ${\mathsf {INR}}\le{\mathsf {SNR}^{1-\alpha}}$, and (c) follows since $0 \le C_{FB}$ and $1 \le {\mathsf {INR}}$.

RHS of \eqref{3w}:
\begin{eqnarray}\label{1,3}
&&\log \left( \frac{{\mathsf {SNR}} \mu^{(3)}}{{\mathsf {SNR}}^{\alpha} \mu^{(2:3)} (K-1)+\mu^{(1:3)}} \right)\nonumber\\
&=& \log \left(  \frac{\frac{1}{2}\min\{2^{2 C_{FB}},{\mathsf {INR}}-1\}}{(K-1)+\mu^{(1:3)}} \right)\nonumber\\
&\stackrel{(a)}{\ge}& \log \left(  \frac{\frac{1}{2}\min\{2^{2 C_{FB}},{\mathsf {INR}}-1\}}{K} \right)\nonumber\\
&=& \log \left( {\min\{2^{2 C_{FB}},{\mathsf {INR}}-1\}}\right) -\log(2K)\nonumber\\
&=& \min \{2 C_{FB},\log({\mathsf {INR}}-1)\}-\log(2K),
\end{eqnarray}
where (a) follows since $\mu^{(1:3)}\le 1$.

RHS of \eqref{4w} is equal to RHS of \eqref{3w} since ${\mathsf {SNR}} \mu^{(3)}={\mathsf {SNR}}^{\alpha} \mu^{(1)}$.

RHS of \eqref{5w}:
\begin{eqnarray}
&& \log \left(  \frac{(\sqrt{\mathsf {SNR}}{\left(\frac{1}{K-1}\right)}+\sqrt{{\mathsf {SNR}}^{\alpha}}\left(\frac{K-2}{K-1}\right))^{2} \mu^{(1)}}{{\mathsf {SNR}}\mu^{(4)}+{\mathsf {SNR}}^{\alpha} \mu^{(4)}(K-1)+(\mu^{(1)}+\mu^{(4)})} \right)\nonumber\\
&\stackrel{(a)}{\ge}& \log \left(  \frac{{\mathsf {SNR}}{\left(\frac{1}{K-1}\right)^{2}} \mu^{(1)}}{{\mathsf {SNR}}\mu^{(4)}+{\mathsf {SNR}}^{\alpha} \mu^{(4)}(K-1)+(\mu^{(1)}+\mu^{(4)})} \right)\nonumber\\
&=& \log \left(  \frac{{\mathsf {SNR}}{\left(\frac{1}{K-1}\right)^{2}} \mu^{(1)}}{{\mathsf {SNR}}\mu^{(4)}+(K-1)+(\mu^{(1)}+\mu^{(4)})} \right)\nonumber\\
&\stackrel{(b)}{\ge}& \log \left(  \frac{{\mathsf {SNR}}{\left(\frac{1}{K-1}\right)^{2}} \mu^{(1)}}{{\mathsf {SNR}}\mu^{(4)}+K} \right)\nonumber\\
&=& \log \left(  \frac{\frac{1}{2}{\left(\frac{1}{K-1}\right)^{2}}
{\mathsf {SNR}}^{1-\alpha}\min\{2^{2 C_{FB}},{\mathsf {INR}}-1\}}{{\mathsf {SNR}}^{1-\alpha}+K} \right)\nonumber\\
&\stackrel{(c)}{\ge}& \log \left(  \frac{\frac{1}{2}{\left(\frac{1}{K-1}\right)^{2}}
{\mathsf {SNR}}^{1-\alpha}\min\{2^{2 C_{FB}},{\mathsf {INR}}-1\}}{(K+1){\mathsf {SNR}}^{1-\alpha}} \right)\nonumber\\
&=& \log \left(  \frac{\frac{1}{2}{\left(\frac{1}{K-1}\right)^{2}}
\min\{2^{2 C_{FB}},{\mathsf {INR}}-1\}}{(K+1)} \right)\nonumber\\
&=& \log \left( \frac{1}{2} {{\left(\frac{1}{K-1}\right)^{2}}
\min\{2^{2 C_{FB}},{\mathsf {INR}}-1\}} \right)-\log(K+1)\nonumber\\
&=& \log \left(  \min\{2^{2 C_{FB}},{\mathsf {INR}}-1\} \right)-\log2(K+1)-2\log(K-1)\nonumber\\
&=& \min \{2 C_{FB},\log({\mathsf {INR}}-1)\}-\log2(K+1)-2\log(K-1),
\end{eqnarray}
where (a) follows since  $\sqrt{\mathsf {SNR}}{\left(\frac{1}{K-1}\right)}\le \sqrt{\mathsf {SNR}}{\left(\frac{1}{K-1}\right)}+\sqrt{{\mathsf {SNR}}^{\alpha}}\left(\frac{K-2}{K-1}\right)$, (b) follows since  $\mu^{(1)}+\mu^{(4)}\le 1$, and (c) follows since  $1 \le {\mathsf {SNR}}$.

RHS of \eqref{6w}:
\begin{eqnarray}
&& \log \left(  \frac{{\mathsf {SNR}} \mu^{(4)}}{{\mathsf {SNR}}^{\alpha}\mu^{(4)}(K-1)+(\mu^{(1)}+\mu^{(4)})} \right)\nonumber\\
&=& \log \left(  \frac{{\mathsf {SNR}}^{1-\alpha} }{(K-1)+(\mu^{(1)}+\mu^{(4)})} \right)\nonumber\\
&\stackrel{(a)}{\ge}& \log \left(  \frac{{\mathsf {SNR}}^{1-\alpha} }{K} \right)\nonumber\\
&=& \log \left( {\mathsf {SNR}}^{1-\alpha}\right)-\log(K),
\end{eqnarray}
where (a) follows since  $\mu^{(1)}+\mu^{(4)}\le 1$.

Also we do not need \eqref{7w1} and \eqref{7w2} anymore, because we have tighter bounds for $R^{(1)}$ and $R^{(3)}$ in \eqref{1,3}. Thus, we find the achievable rate expressions can be reduced as follows:
\begin{eqnarray}
R^{(1)} &\le& \min \{2 C_{FB},\log({\mathsf {INR}}-1)\}-\log2(K+1)-2\log(K-1)\\
R^{(2)} &\le&  \log \left({1+{\mathsf {SNR}}^{1-\alpha}}\right)
-\min \{2 C_{FB},\log({\mathsf {INR}}-1)\}-\log(3K)\\
R^{(3)} &\le& \min \{2 C_{FB},\log({\mathsf {INR}}-1)\}-\log(2K)\\
R^{(4)} &\le& \log \left( {\mathsf {SNR}}^{1-\alpha}\right)-\log(K).
\end{eqnarray}
Putting these bounds all together, we achieve $\frac{R^{(1)}+R^{(2)}+R^{(3)}+R^{(4)}}{2}=$
\begin{eqnarray}\label{ac1}
&&\frac{1}{2}\log \left( 1+{\mathsf {SNR}}^{1-\alpha}\right)+\min \{C_{FB},\frac{1}{2}\log({\mathsf {INR}}-1)\}
+\frac{1}{2}\log \left({\mathsf {SNR}}^{1-\alpha}\right)
-\frac{1}{2}\log(K+1)\nonumber\\
&&-\log(K-1)
-\frac{3}{2}\log(K)-\frac{1}{2}\log(12).
\end{eqnarray}
Next we will bound the gap between \eqref{ac1} and the conjectured rate upper bound in \eqref{conj}. We split this into 2 regimes. The first is when  $C_{FB}\le \frac{1}{2}\log({\mathsf {INR}}-1)$, and the second is when $C_{FB}> \frac{1}{2}\log({\mathsf {INR}}-1)$. In the first case, we find the distance between \eqref{ac1} and ${R}^u_{sym,0}+C_{FB}$ to get
\begin{eqnarray}\label{weakbound1}
&&{R}^u_{sym,0}+C_{FB}-\left(\frac{1}{2}\log \left( 1+{\mathsf {SNR}}^{1-\alpha}\right)+\min \{C_{FB},\frac{1}{2}\log({\mathsf {INR}}-1)\}\frac{1}{2}\log \left({\mathsf {SNR}}^{1-\alpha}\right)\right.\nonumber\\
&&\left.-\frac{1}{2}\log(K+1)-\log(K-1)-\frac{3}{2}\log(K)
-\frac{1}{2}\log(12)\right)\nonumber\\
&\le&\left(\log(1+{\mathsf{INR}}+\frac{{\mathsf{SNR}}}{1+\mathsf{INR}})
+C_{FB}\right)-\left(\frac{1}{2}\log \left( 1+{\mathsf {SNR}}^{1-\alpha}\right)+\min \{C_{FB},\frac{1}{2}\log({\mathsf {INR}}-1)\}\right.\nonumber\\
&&\left.\frac{1}{2}\log \left({\mathsf {SNR}}^{1-\alpha}\right)-\frac{1}{2}\log(K+1)-\log(K-1)-\frac{3}{2}\log(K)
-\frac{1}{2}\log(12)\right)\nonumber\\
&=& \log(1+{\mathsf{INR}}+\frac{{\mathsf{SNR}}}{1+\mathsf{INR}})
-\frac{1}{2}\log \left( 1+{\mathsf {SNR}}^{1-\alpha}\right)-\frac{1}{2}\log \left( {\mathsf {SNR}}^{1-\alpha}\right)+\frac{1}{2}\log(K+1)+\log(K-1)\nonumber\\
&&+\frac{3}{2}\log(K)+\frac{1}{2}\log(12)\nonumber\\
&\le& \log(1+{\mathsf{INR}}+\frac{{\mathsf{SNR}}}{\mathsf{INR}})
-\frac{1}{2}\log \left( 1+{\mathsf {SNR}}^{1-\alpha}\right)-\frac{1}{2}\log \left({\mathsf {SNR}}^{1-\alpha}\right)+\frac{1}{2}\log(K+1)+\log(K-1)\nonumber\\
&&+\frac{3}{2}\log(K)+\frac{1}{2}\log(12)\nonumber\\
&\le& \log(1+\frac{2{\mathsf{SNR}}}{\mathsf{INR}})
-\frac{1}{2}\log \left( 1+{\mathsf {SNR}}^{1-\alpha}\right)-\frac{1}{2}\log \left( {\mathsf {SNR}}^{1-\alpha}\right)+\frac{1}{2}\log(K+1)+\log(K-1)\nonumber\\
&&+\frac{3}{2}\log(K)+\frac{1}{2}\log(12)\nonumber\\
&=& \log(1+2{\mathsf {SNR}}^{1-\alpha})
-\frac{1}{2}\log \left( 1+{\mathsf {SNR}}^{1-\alpha}\right)-\frac{1}{2}\log \left( {\mathsf {SNR}}^{1-\alpha}\right)+\frac{1}{2}\log(K+1)+\log(K-1)\nonumber\\
&&+\frac{3}{2}\log(K)+\frac{1}{2}\log(12)\nonumber\\
&=& \frac{1}{2}\log(4(\frac{1}{2}+{\mathsf {SNR}}^{1-\alpha})^2)
-\frac{1}{2}\log \left( 1+{\mathsf {SNR}}^{1-\alpha}\right)-\frac{1}{2}\log \left( {\mathsf {SNR}}^{1-\alpha}\right)+\frac{1}{2}\log(K+1)+\log(K-1)\nonumber\\
&&+\frac{3}{2}\log(K)+\frac{1}{2}\log(12)\nonumber\\
&=& \frac{1}{2}\log\left(\frac{4\left(\frac{1}{2}+{\mathsf {SNR}}^{1-\alpha}\right)^2}{\left( 1+{\mathsf {SNR}}^{1-\alpha}\right)\left( {\mathsf {SNR}}^{1-\alpha}\right)}\right) +\frac{1}{2}\log(K+1)+\log(K-1)+\frac{3}{2}\log(K)+\frac{1}{2}
\log(12)\nonumber\\
&=& \frac{1}{2}\log\left(\frac{\left(\frac{1}{2}+{\mathsf {SNR}}^{1-\alpha}\right)^2}{\left( 1+{\mathsf {SNR}}^{1-\alpha}\right)\left( {\mathsf {SNR}}^{1-\alpha}\right)}\right) +\frac{1}{2}\log(K+1)+\log(K-1)+\frac{3}{2}\log(K)+\frac{1}{2}
\log(48)\nonumber\\
&=& \frac{1}{2}\log\left(1+\frac{1}{4\left( 1+{\mathsf {SNR}}^{1-\alpha}\right)\left( {\mathsf {SNR}}^{1-\alpha}\right)}\right) +\frac{1}{2}\log(K+1)+\log(K-1)+\frac{3}{2}\log(K)+\frac{1}{2}
\log(48)\nonumber\\
&\le& \frac{1}{2}\log\left(1+\frac{1}{8}\right) +\frac{1}{2}\log(K+1)+\log(K-1)+\frac{3}{2}\log(K)+\frac{1}{2}
\log(48)\nonumber\\
&=& \frac{1}{2}\log(K+1)+\log(K-1)+\frac{3}{2}\log(K)+\frac{1}{2}
\log(54).
\end{eqnarray}
In the second case when $C_{FB}> \frac{1}{2}\log({\mathsf {INR}}-1)$, we find the distance between \eqref{ac1} and ${R}^u_{sym,\infty}$ as follows.

Since $\min \{C_{FB},\frac{1}{2}\log({\mathsf {INR}}-1)\}=\frac{1}{2}\log({\mathsf {INR}}-1)$
\begin{eqnarray}\label{weakbound2}
&&{R}^u_{sym,\infty}-\left(\frac{1}{2}\log \left( 1+{\mathsf {SNR}}^{1-\alpha}\right)+\min \{C_{FB},\frac{1}{2}\log({\mathsf {INR}}+1)\}-\frac{1}{2}\log \left({\mathsf {SNR}}^{1-\alpha}\right)\right.\nonumber\\
&&\left.-\frac{1}{2}\log(K+1)-\log(K-1)
-\frac{3}{2}\log(K)-\frac{1}{2}\log(12)\right)\nonumber\\
&=&\left(\frac{1}{2}\log(1+\frac{{\mathsf{SNR}}}{1+\mathsf{INR}})+
\frac{1}{2}\log(1+{\mathsf{SNR}}+{\mathsf{INR}})+\frac{K-1}{2}+\log K\right)-\left(\frac{1}{2}\log \left( 1+{\mathsf {SNR}}^{1-\alpha}\right)\right.\nonumber\\
&&\left.+\min \{C_{FB},\frac{1}{2}\log({\mathsf {INR}}+1)\}-\frac{1}{2}\log \left({\mathsf {SNR}}^{1-\alpha}\right)-\frac{1}{2}\log(K+1)-\log(K-1)
-\frac{3}{2}\log(K)\right.\nonumber\\
&&\left.-\frac{1}{2}\log(12)\right)\nonumber\\
&=&\frac{1}{2}\log(1+\frac{{\mathsf{SNR}}}{1+\mathsf{INR}})+
\frac{1}{2}\log(1+{\mathsf{SNR}}+{\mathsf{INR}})
-\frac{1}{2}\log \left( 1+{\mathsf {SNR}}^{1-\alpha}\right)-\frac{1}{2}\log({\mathsf {INR}}+1)\nonumber\\
&&-\frac{1}{2}\log \left({\mathsf {SNR}}^{1-\alpha}\right)+\frac{K-1}{2}+\log K+\frac{1}{2}\log(K+1)+\log(K-1)+\frac{3}{2}\log(K)+\frac{1}{2}\log(12)\nonumber\\
&=&\frac{1}{2}\log(1+\frac{{\mathsf{SNR}}}{1+{{\mathsf{SNR}}^{\alpha}}})+
\frac{1}{2}\log(1+{\mathsf{SNR}}+{\mathsf{SNR}}^{\alpha})
-\frac{1}{2}\log \left( 1+{\mathsf {SNR}}^{1-\alpha}\right)-\frac{1}{2}\log({\mathsf {SNR}}^{\alpha}+1)\nonumber\\
&&-\frac{1}{2}\log \left( {\mathsf {SNR}}^{1-\alpha}\right)+\frac{K-1}{2}+\log K+\frac{1}{2}\log(K+1)+\log(K-1)+\frac{3}{2}\log(K)+\frac{1}{2}\log(12)\nonumber\\
&=&\frac{1}{2}\log(\frac{1+{{\mathsf{SNR}}^{\alpha}}+{\mathsf{SNR}}}{1+{{\mathsf{SNR}}^{\alpha}}})+
\frac{1}{2}\log(1+{\mathsf{SNR}}+{\mathsf{SNR}}^{\alpha})
-\frac{1}{2}\log \left( 1+{\mathsf {SNR}}^{1-\alpha}\right)-\frac{1}{2}\log({\mathsf {SNR}}^{\alpha}+1)\nonumber\\
&&-\frac{1}{2}\log \left({\mathsf {SNR}}^{1-\alpha}\right)+\frac{K-1}{2}+\log K+\frac{1}{2}\log(K+1)+\log(K-1)+\frac{3}{2}\log(K)+\frac{1}{2}
\log(12)\nonumber\\
&\le&\frac{1}{2}\log(\frac{1+2{\mathsf{SNR}}}{1+{{\mathsf{SNR}}^{\alpha}}})+
\frac{1}{2}\log(1+{2\mathsf{SNR}})
-\frac{1}{2}\log \left( 1+{\mathsf {SNR}}^{1-\alpha}\right)-\frac{1}{2}\log({\mathsf {SNR}}^{\alpha}+1)\nonumber\\
&&-\frac{1}{2}\log \left({\mathsf {SNR}}^{1-\alpha}\right)+\frac{K-1}{2}+\log K+\frac{1}{2}\log(K+1)+\log(K-1)+\frac{3}{2}\log(K)+\frac{1}{2}\log(12)\nonumber\\
&=&\frac{1}{2}\log\left(\frac{(1+2{\mathsf{SNR}})(1+{2\mathsf{SNR}})}
{(1+{{\mathsf{SNR}}^{\alpha}})^2{(1+{\mathsf {SNR}}^{1-\alpha})}({\mathsf {SNR}}^{1-\alpha})}\right)
+\frac{K-1}{2}+\log K+\frac{1}{2}\log(K+1)+\log(K-1)\nonumber\\
&&+\frac{3}{2}\log(K)+\frac{1}{2}\log(12)\nonumber\\
&\le&\frac{1}{2}\log\left(\frac{(3{\mathsf{SNR}})({3\mathsf{SNR}})}
{({{\mathsf{SNR}}^{\alpha}})^2({\mathsf {SNR}}^{1-\alpha})^2}\right)
+\frac{K-1}{2}+\log K+\frac{1}{2}\log(K+1)+\log(K-1)\nonumber\\
&&+\frac{3}{2}\log(K)+\frac{1}{2}\log(12)\nonumber\\
&=&\frac{1}{2}\log\left(\frac{9{\mathsf{SNR}}^2}
{{\mathsf{SNR}}^2}\right)
+\frac{K-1}{2}+\log K+\frac{1}{2}\log(K+1)+\log(K-1)+\frac{3}{2}\log(K)+\frac{1}{2}\log(12)\nonumber\\
&=& \frac{K-1}{2}+\frac{1}{2}\log(K+1)+\log(K-1)+\frac{5}{2}\log(K)+\frac{1}{2}\log\left(108\right)\nonumber\\
&=& \frac{K-1}{2}+\frac{1}{2}\log\left(108(K-1){(K)^5}(K+1)\right).
\end{eqnarray}
From \eqref{weakbound1} and \eqref{weakbound2}, we find that the achievable symmetric rate is within $\frac{1}{2}\log\left(108(K-1){(K)^5}(K+1)\right)+\frac{K-1}{2}$ bits to the conjectured upper bound in \eqref{conj} when $\alpha \le \frac{1}{2}$.

{\bf Case 2 ($\frac{1}{2} \le \alpha \le \frac{2}{3}$):} We use the following parameters in Theorem \ref{thm_gauss2}: $\mu^{(4)} = \frac{1}{4{\mathsf {INR}}} \max \{2^{-2 C_{FB}},\frac{{\mathsf {INR}}^3}{{\mathsf {SNR}}^2}\},$ $\mu^{(6)} = \mu^{(3)} = \frac{1}{3\mathsf {INR}}-\frac{1}{4{\mathsf {INR}}} \max \{2^{-2 C_{FB}},\frac{{\mathsf {INR}}^3}{{\mathsf {SNR}}^2}\},$ $\mu^{(1)} = 1 - \mu^{(2:4)},$ and $\mu^{(5)} = 1 - \mu^{(2)}- \mu^{(6)}$. We first lower bound the RHS of  \eqref{1}-\eqref{10} as follows.

RHS of \eqref{1}:
\begin{eqnarray}
&& \log \left( \frac{{\mathsf {SNR}} \mu^{(1)}}{{\mathsf {SNR}} \mu^{(2:4)}+{\mathsf {SNR}}^{\alpha} \mu^{(1:4)} (K-1)+1} \right)\nonumber\\
&\stackrel{(a)}{\ge}& \log \left( \frac{\frac{2}{3}{\mathsf {SNR}}}{{\frac{2}{3}{\mathsf {SNR}}^{2-2\alpha}}+{\mathsf {SNR}}^{\alpha}(K-1)+1} \right)\nonumber\\
&\stackrel{(b)}{\ge}& \log \left( \frac{\frac{2}{3}{\mathsf {SNR}}}{\left(K+\frac{2}{3}\right){\mathsf {SNR}}^{2-2\alpha}} \right)\nonumber\\
&=& \log \left( \frac{{\mathsf {SNR}}^{2\alpha-1}}{\frac{3}{2}\left(K+\frac{2}{3}\right)} \right)\nonumber\\
&=& \log \left( {{\mathsf {SNR}}^{2\alpha-1}}\right)-\log\left(\frac{3}{2}\left(K+\frac{2}{3}
\right)\right),
\end{eqnarray}
where (a) follows since  $\mu^{(1:4)}=1$, $\mu^{(1)}\ge \frac{2}{3}$, and $\mu^{(2:4)}\le \frac{2}{3}{\mathsf {SNR}^{1-2\alpha}}$, and (b) follows since  ${\mathsf {SNR}^{2-2\alpha}}\ge {\mathsf {SNR}^{\alpha}}\ge 1$.

RHS of \eqref{2}:
\begin{eqnarray}
&& \log \left( \frac{{\mathsf {SNR}} \mu^{(2)}}{{\mathsf {SNR}} \mu^{(3:4)}+{\mathsf {SNR}}^{\alpha} \mu^{(1:4)}(K-1)+1} \right)\nonumber\\
&\stackrel{(a)}{\ge}& \log \left( \frac{{\frac{1}{12}}{{\mathsf {SNR}}^{2-2\alpha}}}{{\frac{1}{3}}{{\mathsf {SNR}}^{1-\alpha}}+{\mathsf {SNR}}^{\alpha}(K-1)+1} \right)\nonumber\\
&\stackrel{(b)}{\ge}& \log \left( \frac{{\frac{1}{12}}{{\mathsf {SNR}}^{2-2\alpha}}}{{\mathsf {SNR}}^{\alpha}(K+{\frac{1}{3}})} \right)\nonumber\\
&=& \log \left( \frac{{{\mathsf {SNR}}^{2-3\alpha}}}{12(K+{\frac{1}{3}})} \right)\nonumber\\
&=& \log \left({{{\mathsf {SNR}}^{2-3\alpha}}}\right)-\log\left(12\left(K+{\frac{1}{3}}
\right)\right),
\end{eqnarray}
where (a) follows since  $\mu^{(1:4)}=1$, $\mu^{(1)}\ge \frac{2}{3}$, and $\mu^{(2:4)}\le \frac{2}{3}{\mathsf {SNR}^{1-2\alpha}}$, and (b) follows since  ${\mathsf {SNR}^{2-2\alpha}}\ge {\mathsf {SNR}^{\alpha}}\ge 1$.

RHS of \eqref{3}:
\begin{eqnarray}
&& \log \left( \frac{{\mathsf {SNR}}^{\alpha} \mu^{(1)}}{{\mathsf {SNR}}\mu^{(3:4)}+{\mathsf {SNR}}^{\alpha} \mu^{(2:4)} (K-1)+1} \right)\nonumber\\
&\stackrel{(a)}{\ge}& \log \left( \frac{{\frac{2}{3}}{\mathsf {SNR}}^{\alpha}}{{\frac{1}{3}}{{\mathsf {SNR}}^{1-\alpha}}+{\frac{1}{3}}{{\mathsf {SNR}}^{1-\alpha}}(K-1)+1} \right)\nonumber\\
&=& \log \left( \frac{2{\mathsf {SNR}}^{\alpha}}{{{\mathsf {SNR}}^{1-\alpha}}(K)+3} \right)\nonumber\\
&\stackrel{(b)}{\ge}& \log \left( \frac{2{\mathsf {SNR}}^{\alpha}}{{{\mathsf {SNR}}^{1-\alpha}}(K+3)} \right)\nonumber\\
&=& \log \left( \frac{{\mathsf {SNR}}^{\alpha}}{{{\mathsf {SNR}}^{1-\alpha}}} \right)-\log\left(\frac{1}{2}\left(K+3\right)\right)\nonumber\\
&=& \log \left( {{\mathsf {SNR}}^{2\alpha-1}}\right)-\log\left(\frac{1}{2}\left(K+3\right)
\right),
\end{eqnarray}
where (a) follows since  $\mu^{(1:4)}=1$, $\mu^{(1)}\ge \frac{2}{3}$, and $\mu^{(3:4)}\le \frac{1}{3}{\mathsf {SNR}^{-\alpha}}$, and (b) follows since  ${\mathsf {SNR}^{1-\alpha}}\ge 1$.

RHS of \eqref{4}:
\begin{eqnarray}\label{2,,3}
&& \log \left( \frac{{\mathsf {SNR}}^{\alpha} \mu^{(2)}}{{\mathsf {SNR}}\mu^{(4)}+{\mathsf {SNR}}^{\alpha} \mu^{(3:4)} (K-1)+1} \right)\nonumber\\
&\stackrel{(a)}{\ge}& \log \left( \frac{{\frac{1}{12}}{\mathsf {SNR}}^{1-\alpha}}{{\frac{1}{4}}{{\mathsf {SNR}}^{1-\alpha}}\max \{2^{-2 C_{FB}},\frac{{\mathsf {INR}}^3}{{\mathsf {SNR}}^2}\}+{\frac{1}{3}}(K-1)+1} \right)\nonumber\\
&{\ge}& \log \left( \frac{{\mathsf {SNR}}^{1-\alpha}}{3{{\mathsf {SNR}}^{1-\alpha}}\max \{2^{-2 C_{FB}},\frac{{\mathsf {INR}}^3}{{\mathsf {SNR}}^2}\}+4(K-1)+12} \right)\nonumber\\
&=& \log \left( \frac{{\mathsf {SNR}}^{1-\alpha}}{4(K+{\frac{11}{4}}){{\mathsf {SNR}}^{1-\alpha}}\max \{2^{-2 C_{FB}},\frac{{\mathsf {INR}}^3}{{\mathsf {SNR}}^2}\}} \right)\nonumber\\
&=& \log \left( \frac{{\mathsf {SNR}}^{1-\alpha}}{4(K+{\frac{11}{4}}){{\mathsf {SNR}}^{1-\alpha}}\max \{2^{-2 C_{FB}},\frac{{\mathsf {INR}}^3}{{\mathsf {SNR}}^2}\}} \right)\nonumber\\
&=& \log \left( \frac{\min \{2^{2 C_{FB}},\frac{{\mathsf {SNR}}^2}{{\mathsf {INR}}^3}\}}{4\left(K+{\frac{11}{4}}\right)} \right)\nonumber\\
&=& \log \left( {\min \{2^{2 C_{FB}},\frac{{\mathsf {SNR}}^2}{{\mathsf {INR}}^3}\}} \right)-\log\left(4\left(K+{\frac{11}{4}}\right)\right)\nonumber\\
&=& \min \{2 C_{FB},\log\left({\mathsf {SNR}}^{2-3\alpha}\right)\}-\log\left(4\left(K+{\frac{11}{4}}\right)
\right),
\end{eqnarray}
where (a) follows since  $\mu^{(2)}\ge \frac{1}{12}{\mathsf {SNR}^{1-2\alpha}}$, and $\mu^{(3:4)}\le \frac{1}{3}{\mathsf {SNR}^{-\alpha}}$.

RHS of \eqref{5} is equal to RHS of \eqref{4} since ${\mathsf {SNR}} \mu^{(3)}={\mathsf {SNR}}^{\alpha} \mu^{(2)}$.

RHS of \eqref{6}:
\begin{eqnarray}
&& \log \left( \frac{{\mathsf {SNR}} \mu^{(4)}}{{\mathsf {SNR}}^{\alpha} \mu^{(3:4)} (K-1)+1} \right)\nonumber\\
&\stackrel{(a)}{\ge}& \log \left( \frac{{\frac{1}{4}}{\mathsf {SNR}}^{1-\alpha}\max \{2^{-2 C_{FB}},\frac{{\mathsf {INR}}^3}{{\mathsf {SNR}}^2}\}}{{\frac{1}{3}}(K-1)+1} \right)\nonumber\\
&=& \log \left( \frac{{\mathsf {SNR}}^{1-\alpha}\max \{2^{-2 C_{FB}},\frac{{\mathsf {INR}}^3}{{\mathsf {SNR}}^2}\}}{{\frac{4}{3}}(K+2)} \right)\nonumber\\
&=& \log \left({\mathsf {SNR}}^{1-\alpha}\max \{2^{-2 C_{FB}},\frac{{\mathsf {INR}}^3}{{\mathsf {SNR}}^2}\} \right)-\log\left(\frac{4}{3}\left(K+2\right)\right)\nonumber\\
&=& \log \left( {\mathsf {SNR}}^{1-\alpha}\right)+\log \left(\max \{2^{-2 C_{FB}},\frac{{\mathsf {INR}}^3}{{\mathsf {SNR}}^2}\}\right)
-\log\left(\frac{4}{3}\left(K+2\right)\right)\nonumber\\
&=& \log \left( {\mathsf {SNR}}^{1-\alpha}\right)-\log \left(\min \{2^{2 C_{FB}},\frac{{\mathsf {SNR}}^2}{{\mathsf {INR}}^3}\}\right)-\log\left(\frac{4}{3}\left(K+2\right)\right)\nonumber\\
&=& \log \left({\mathsf {SNR}}^{1-\alpha}\right)-\min \{2 C_{FB},\log\left({\mathsf {SNR}}^{2-3\alpha}\right)\}-\log\left(\frac{4}{3}
\left(K+2\right)\right),
\end{eqnarray}
where (a) follows since  $\mu^{(3:4)}\le \frac{1}{3}{\mathsf {SNR}^{-\alpha}}$.

RHS of \eqref{7}:
\begin{eqnarray}
&& \log \left( \frac{{\mathsf {SNR}} \mu^{(5)}}{{\mathsf {SNR}}(\mu^{(2)}+\mu^{(6)})+{\mathsf {SNR}}^{\alpha} (\mu^{(2)}+\mu^{(5:6)}) (K-1)+1-{\mathsf {SNR}}^{\alpha} \mu^{(2)}} \right)\nonumber\\
&\stackrel{(a)}{\ge}& \log \left( \frac{{\frac{1}{3}}{\mathsf {SNR}}}{{\mathsf {SNR}}(\mu^{(2)}+\mu^{(6)})+{\mathsf {SNR}}^{\alpha} (K-1)+1} \right)\nonumber\\
&\stackrel{(b)}{\ge}& \log \left( \frac{{\frac{1}{3}}{\mathsf {SNR}}}{{\mathsf {SNR}}
({\frac{2}{3}}{{\mathsf {SNR}}^{1-2\alpha}})
+{\mathsf {SNR}}^{\alpha} (K-1)+1} \right)\nonumber\\
&\stackrel{(c)}{\ge}& \log \left( \frac{{\frac{1}{3}}{\mathsf {SNR}}}{({{\mathsf {SNR}}^{2-2\alpha}})(K+{\frac{2}{3}})} \right)\nonumber\\
&=& \log \left( {{{\mathsf {SNR}}^{2\alpha-1}}} \right)-\log\left(3\left(K+{\frac{2}{3}}\right)\right),
\end{eqnarray}
where (a) follows since  $\mu^{(2)}+\mu^{(5:6)}=1$, and $\mu^{(5)}\ge \frac{1}{3}$, (b) follows since  $\mu^{(2)}+\mu^{(6)}\le\frac{2}{3}{\mathsf {SNR}^{1-2\alpha}}$, and (c) follows since  ${\mathsf {SNR}^{2-2\alpha}}\ge {\mathsf {SNR}^{\alpha}}\ge 1$.

RHS of \eqref{8}:
\begin{eqnarray}
&& \log \left( \frac{(\sqrt{\mathsf {SNR}}\left(\frac{1}{K-1}\right)+\sqrt{{\mathsf {SNR}}^{\alpha}}\left(\frac{K-2}{K-1}\right))^{2} \mu^{(2)}}{{\mathsf {SNR}}\mu^{(6)}+{\mathsf {SNR}}^{\alpha} \mu^{(5:6)}(K-1)+1} \right)\nonumber\\
&\stackrel{(a)}{\ge}& \log \left( \frac{{\mathsf {SNR}}\left(\frac{1}{K-1}\right)^{2} \mu^{(2)}}{{\mathsf {SNR}}\mu^{(6)}+{\mathsf {SNR}}^{\alpha} \mu^{(5:6)}(K-1)+1} \right)\nonumber\\
&\stackrel{(b)}{\ge}& \log \left( \frac{{\frac{1}{12}}{{\mathsf {SNR}}^{2-2\alpha}}\left(\frac{1}{K-1}\right)^{2} }{{\frac{1}{3}}{\mathsf {SNR}}^{1-\alpha}+{\mathsf {SNR}}^{\alpha}(K-1)+1} \right)\nonumber\\
&\stackrel{(c)}{\ge}& \log \left( \frac{{\frac{1}{12}}{{\mathsf {SNR}}^{2-2\alpha}}\left(\frac{1}{K-1}\right)^{2} }{{\mathsf {SNR}}^{\alpha}(K+{\frac{1}{3}})} \right)\nonumber\\
&=& \log \left({{\mathsf {SNR}}^{2-3\alpha}} \right)-\log\left(12{(K-1)^2}(K+{\frac{1}{3}})\right),
\end{eqnarray}
where (a) follows since  $\sqrt{\mathsf {SNR}}\left(\frac{1}{K-1}\right)\le \sqrt{\mathsf {SNR}}\left(\frac{1}{K-1}\right)+\sqrt{{\mathsf {SNR}}^{\alpha}}\left(\frac{K-2}{K-1}\right)$, (b) follows since  $\mu^{(5:6)}\le 1$, $\mu^{(6)}\le \frac{1}{3}{\mathsf {SNR}^{-\alpha}}$, and $\mu^{(2)}\ge \frac{1}{12}{\mathsf {SNR}^{1-2\alpha}}$, and (c) follows since  ${\mathsf {SNR}^{\alpha}}\ge {\mathsf {SNR}^{1-\alpha}}\ge 1$.

RHS of \eqref{9}:
\begin{eqnarray}
&& \log \left( \frac{{\mathsf {SNR}}^{\alpha} \mu^{(5)}}{{\mathsf {SNR}}\mu^{(6)}+{\mathsf {SNR}}^{\alpha} \mu^{(6)} (K-1)+1} \right)\nonumber\\
&\stackrel{(a)}{\ge}& \log \left( \frac{{\frac{1}{3}}{\mathsf {SNR}}^{\alpha}}{{\frac{1}{3}}{\mathsf {SNR}}^{1-\alpha}+{\frac{1}{3}}(K-1)+1} \right)\nonumber\\
&=& \log \left( \frac{{\mathsf {SNR}}^{\alpha}}{{\mathsf {SNR}}^{1-\alpha}+(K-1)+3} \right)\nonumber\\
&\stackrel{(b)}{\ge}& \log \left( \frac{{\mathsf {SNR}}^{\alpha}}{(K+3){\mathsf {SNR}}^{1-\alpha}} \right)\nonumber\\
&=& \log \left( \frac{{\mathsf {SNR}}^{\alpha}}{{\mathsf {SNR}}^{1-\alpha}} \right)-\log(K+3)\nonumber\\
&=& \log \left({\mathsf {SNR}}^{2\alpha-1}\right)-\log(K+3),
\end{eqnarray}
where (a) follows since  $\mu^{(6)}\le \frac{1}{3}{\mathsf {SNR}^{-\alpha}}$, and $\mu^{(5)}\ge \frac{1}{3}$, and (b) follows since  ${\mathsf {SNR}^{1-\alpha}}\ge 1$.

RHS of \eqref{10}:
\begin{eqnarray}
&& \log \left(  \frac{{\mathsf {SNR}}\mu^{(6)}}{{\mathsf {SNR}}^{\alpha} \mu^{(6)} (K-1)+1} \right)\nonumber\\
&=& \log \left(  \frac{\frac{1}{3}{{\mathsf {SNR}}^{1-\alpha}}-\frac{1}{4} {{\mathsf {SNR}}^{1-\alpha}} \max \{2^{-2 C_{FB}},\frac{{\mathsf {INR}}^3}{{\mathsf {SNR}}^2}\}}{({\frac{1}{3}-\frac{1}{4} \max \{2^{-2 C_{FB}},\frac{{\mathsf {INR}}^3}{{\mathsf {SNR}}^2}\}})(K-1)+1} \right)\nonumber\\
&{\ge}& \log \left( \frac{\frac{1}{12}{{\mathsf {SNR}}^{1-\alpha}}}{{\frac{1}{3}}(K-1)+1} \right)\nonumber\\
&=& \log \left( \frac{\frac{1}{4}{{\mathsf {SNR}}^{1-\alpha}}}{(K+2)} \right)\nonumber\\
&=& \log \left({{\mathsf {SNR}}^{1-\alpha}}\right)-\log(4(K+2)).
\end{eqnarray}
Also we do not need \eqref{11.1} and \eqref{11.2} anymore, because we have tighter bounds for $R^{(2)}$ and $R^{(3)}$ in \eqref{2,,3}. Thus, we find the achievable rate expressions can be reduced as follows
\begin{eqnarray}
R^{(1)} &\le& \log \left( {{\mathsf {SNR}}^{2\alpha-1}}\right)-\log\left(\frac{3}{2}\left(K+
\frac{2}{3}\right)\right)\\
R^{(2)} &\le&  \min \{2 C_{FB},\log\left({\mathsf {SNR}}^{2-3\alpha}\right)\}-\log\left(12{(K-1)^2}
(K+{\frac{1}{3}})\right)\\
R^{(3)} &\le& \min \{2 C_{FB},\log\left({\mathsf {SNR}}^{2-3\alpha}\right)\}-\log\left(4\left(K+
{\frac{11}{4}}\right)\right)\\
R^{(4)} &\le& \log \left({\mathsf {SNR}}^{1-\alpha}\right)-\min \{2 C_{FB},\log\left({\mathsf {SNR}}^{2-3\alpha}\right)\}-\log\left(\frac{4}{3}
\left(K+2\right)\right)\\
R^{(5)} &\le& \log \left( {{{\mathsf {SNR}}^{2\alpha-1}}} \right)-\log\left(3\left(K+{\frac{2}{3}}\right)\right)\\
R^{(6)} &\le& \log \left({{\mathsf {SNR}}^{1-\alpha}}\right)-\log(4(K+2)).
\end{eqnarray}
Putting these bounds all together, we achieve the rate $\frac{R^{(1)}+R^{(2)}+R^{(3)}+R^{(4)}+R^{(5)}+R^{(6)}}{2}=$
\begin{eqnarray}\label{ac2}
&& \frac{1}{2}\left(\log \left( {{\mathsf {SNR}}^{2\alpha-1}}\right)-\log\left(\frac{3}{2}\left(K+
\frac{2}{3}\right)\right)+\right.\nonumber\\
&&\min \{2 C_{FB},\log\left({\mathsf {SNR}}^{2-3\alpha}\right)\}-\log\left(12{(K-1)^2}
(K+{\frac{1}{3}})\right)+\nonumber\\
&&\min \{2 C_{FB},\log\left({\mathsf {SNR}}^{2-3\alpha}\right)\}-\log\left(4\left(K+
{\frac{11}{4}}\right)\right)+\nonumber\\
&&\log \left({\mathsf {SNR}}^{1-\alpha}\right)-\min \{2 C_{FB},\log\left({\mathsf {SNR}}^{2-3\alpha}\right)\}-\log\left(\frac{4}{3}
\left(K+2\right)\right)+\nonumber\\
&&\log \left( {{{\mathsf {SNR}}^{2\alpha-1}}} \right)-\log\left(3\left(K+{\frac{2}{3}}\right)\right)+\nonumber\\
&& \log \left({{\mathsf {SNR}}^{1-\alpha}}\right)-\log(4(K+2))\Bigg)\nonumber\\
&=& \log \left({{\mathsf {SNR}}^{2\alpha-1}}\right)
+\log \left({{\mathsf {SNR}}^{1-\alpha}}\right)
+\min \{C_{FB},\frac{1}{2}\log\left({\mathsf {SNR}}^{2-3\alpha}\right)\}\nonumber\\
&&-\frac{1}{2}\log\left(768\left(K-1\right)^2\left(K+
{\frac{1}{3}}\right)
\left(K+{\frac{2}{3}}\right)^2\left(K+2\right)^2\left(K+
{\frac{11}{4}}\right)\right).
\end{eqnarray}
Next we will bound the gap between \eqref{ac2} and the conjectured upper bound in \eqref{conj}. We split this into 2 regimes. The first is when  $2C_{FB}\le \log\left({\mathsf {SNR}}^{2-3\alpha}\right)$, and the second is when $C_{FB}> \log\left({\mathsf {SNR}}^{2-3\alpha}\right)$. In the first case, we find the distance between \eqref{ac2} and the bound ${R}^u_{sym,0}+C_{FB}$ as follows.
\begin{eqnarray}\label{medbound1}
&&{R}^u_{sym,0}+C_{FB}-\left(\log \left( {{\mathsf {SNR}}^{2\alpha-1}}\right)
+\log \left({{\mathsf {SNR}}^{1-\alpha}}\right)
+\min \{C_{FB},\frac{1}{2}\log\left({\mathsf {SNR}}^{2-3\alpha}\right)\}\right.\nonumber\\
&&\left.-\frac{1}{2}\log\left(768\left(K-1\right)^2\left(K+{\frac{1}{3}}\right)
\left(K+{\frac{2}{3}}\right)^2\left(K+2\right)^2\left(K+{\frac{11}{4}}\right)\right)\right)
\nonumber\\
&\le&\left(\log(1+{\mathsf{INR}}+\frac{{\mathsf{SNR}}}{1+\mathsf{INR}})
+C_{FB}\right)-\nonumber\\
&&\left(\log \left( {{\mathsf {SNR}}^{2\alpha-1}}\right)
+\log \left({{\mathsf {SNR}}^{1-\alpha}}\right)
+\min \{C_{FB},\frac{1}{2}\log\left({\mathsf {SNR}}^{2-3\alpha}\right)\}\right.\nonumber\\
&&\left.-\frac{1}{2}\log\left(768\left(K-1\right)^2\left(K+{\frac{1}{3}}\right)
\left(K+{\frac{2}{3}}\right)^2\left(K+2\right)^2\left(K+{\frac{11}{4}}\right)\right)\right)
\nonumber\\
&=& \left(\log(1+{\mathsf{INR}}+\frac{{\mathsf{SNR}}}{1+\mathsf{INR}})
+C_{FB}\right)-\left(\log \left( {{\mathsf {SNR}}^{2\alpha-1}}\right)
+\log \left({{\mathsf {SNR}}^{1-\alpha}}\right)
+ C_{FB}\right.\nonumber\\
&&\left.-\frac{1}{2}\log\left(768\left(K-1\right)^2\left(K+{\frac{1}{3}}\right)
\left(K+{\frac{2}{3}}\right)^2\left(K+2\right)^2\left(K+{\frac{11}{4}}\right)\right)\right)
\nonumber\\
&=& \log(1+{\mathsf{INR}}+\frac{{\mathsf{SNR}}}{1+\mathsf{INR}})-\log \left( {{\mathsf {SNR}}^{2\alpha-1}}\right)
-\log \left({{\mathsf {SNR}}^{1-\alpha}}\right)\nonumber\\
&&+\frac{1}{2}\log\left(768\left(K-1\right)^2\left(K+{\frac{1}{3}}\right)
\left(K+{\frac{2}{3}}\right)^2\left(K+2\right)^2\left(K+{\frac{11}{4}}\right)\right)
\nonumber\\
&=& \log \left( \frac{(1+{\mathsf{INR}}+\frac{{\mathsf{SNR}}}{1+\mathsf{INR}})}{({{\mathsf {SNR}}^{2\alpha-1}})({{\mathsf {SNR}}^{1-\alpha}})}\right)
\nonumber\\
&&+\frac{1}{2}\log\left(768\left(K-1\right)^2\left(K+{\frac{1}{3}}\right)
\left(K+{\frac{2}{3}}\right)^2\left(K+2\right)^2\left(K+{\frac{11}{4}}\right)\right)
\nonumber\\
&\le& \log \left( \frac{(3{\mathsf{INR}})}{({{\mathsf {SNR}}^{2\alpha-1}})({{\mathsf {SNR}}^{1-\alpha}})}\right)
\nonumber\\
&&+\frac{1}{2}\log\left(768\left(K-1\right)^2\left(K+{\frac{1}{3}}\right)
\left(K+{\frac{2}{3}}\right)^2\left(K+2\right)^2\left(K+{\frac{11}{4}}\right)\right)
\nonumber\\
&\le& \log 3+\frac{1}{2}\log\left(768\left(K-1\right)^2\left(K+{\frac{1}{3}}\right)
\left(K+{\frac{2}{3}}\right)^2\left(K+2\right)^2\left(K+{\frac{11}{4}}\right)\right)
\nonumber\\
&=&\frac{1}{2}\log\left(6912\left(K-1\right)^2\left(K+{\frac{1}{3}}\right)
\left(K+{\frac{2}{3}}\right)^2\left(K+2\right)^2
\left(K+{\frac{11}{4}}\right)\right).
\end{eqnarray}
In the second case when $C_{FB}> \log\left({\mathsf {SNR}}^{2-3\alpha}\right)$, we find the gap between \eqref{ac2} and ${R}^u_{sym,\infty}$ as follows.
%
\begin{eqnarray}\label{medbound2}
&&{R}^u_{sym,\infty}-\left(\log \left( {{\mathsf {SNR}}^{2\alpha-1}}\right)
+\log \left({{\mathsf {SNR}}^{1-\alpha}}\right)
+\min \{C_{FB},\frac{1}{2}\log\left({\mathsf {SNR}}^{2-3\alpha}\right)\}\right.\nonumber\\
&&\left.-\frac{1}{2}\log\left(768\left(K-1\right)^2\left(K+{\frac{1}{3}}\right)
\left(K+{\frac{2}{3}}\right)^2\left(K+2\right)^2\left(K+{\frac{11}{4}}\right)\right)\right)
\nonumber\\
&=&\left(\frac{1}{2}\log(1+\frac{{\mathsf{SNR}}}{1+\mathsf{INR}})+
\frac{1}{2}\log(1+{\mathsf{SNR}}+{\mathsf{INR}})+\frac{K-1}{2}+\log K\right)-\nonumber\\
&&\left(\log \left( {{\mathsf {SNR}}^{2\alpha-1}}\right)
+\log \left({{\mathsf {SNR}}^{1-\alpha}}\right)
+\min \{C_{FB},\frac{1}{2}\log\left({\mathsf {SNR}}^{2-3\alpha}\right)\}\right.\nonumber\\
&&\left.-\frac{1}{2}\log\left(768\left(K-1\right)^2\left(K+{\frac{1}{3}}\right)
\left(K+{\frac{2}{3}}\right)^2\left(K+2\right)^2\left(K+{\frac{11}{4}}\right)\right)\right)
\nonumber\\
&=& \frac{1}{2}\log(1+\frac{{\mathsf{SNR}}}{1+\mathsf{INR}})+
\frac{1}{2}\log(1+{\mathsf{SNR}}+{\mathsf{INR}})+\frac{K-1}{2}+\log K\nonumber\\
&&-\log \left( {{\mathsf {SNR}}^{2\alpha-1}}\right)
-\log \left({{\mathsf {SNR}}^{1-\alpha}}\right)
-\frac{1}{2}\log\left({\mathsf {SNR}}^{2-3\alpha}\right)\nonumber\\
&&+\frac{1}{2}\log\left(768\left(K-1\right)^2\left(K+{\frac{1}{3}}\right)
\left(K+{\frac{2}{3}}\right)^2\left(K+2\right)^2\left(K+{\frac{11}{4}}\right)\right)
\nonumber\\
&\le& \frac{1}{2}\log(1+{{\mathsf{SNR}}^{1-\alpha}})+
\frac{1}{2}\log(3{\mathsf{SNR}})+\frac{K-1}{2}+\log K\nonumber\\
&&-\log \left( {{\mathsf {SNR}}^{2\alpha-1}}\right)
-\log \left({{\mathsf {SNR}}^{1-\alpha}}\right)
-\frac{1}{2}\log\left({\mathsf {SNR}}^{2-3\alpha}\right)\nonumber\\
&&+\frac{1}{2}\log\left(768\left(K-1\right)^2\left(K+{\frac{1}{3}}\right)
\left(K+{\frac{2}{3}}\right)^2\left(K+2\right)^2\left(K+{\frac{11}{4}}\right)\right)
\nonumber\\
&\le& \frac{1}{2}\log(3{\mathsf{SNR}})-\log \left( {{\mathsf {SNR}}^{2\alpha-1}}\right)
-\frac{1}{2}\log \left({{\mathsf {SNR}}^{1-\alpha}}\right)
-\frac{1}{2}\log\left({\mathsf {SNR}}^{2-3\alpha}\right)\nonumber\\
&&+\frac{1}{2}\log\left(768\left(K-1\right)^2{K^2}\left(K+{\frac{1}{3}}\right)
\left(K+{\frac{2}{3}}\right)^2\left(K+2\right)^2\left(K+{\frac{11}{4}}\right)\right)+\frac{K-1}{2}
\nonumber\\
&=& \frac{1}{2}\log \left(\frac{(3{\mathsf{SNR}})}{({{\mathsf {SNR}}^{2\alpha-1}})^2({\mathsf {SNR}}^{2-3\alpha})({{\mathsf {SNR}}^{1-\alpha}})}\right)+\frac{K-1}{2}\nonumber\\
&&+\frac{1}{2}\log\left(768\left(K-1\right)^2{K^2}\left(K+{\frac{1}{3}}\right)
\left(K+{\frac{2}{3}}\right)^2\left(K+2\right)^2\left(K+{\frac{11}{4}}\right)\right)
\nonumber\\
&=& \frac{1}{2}\log 3 +\frac{1}{2}\log\left(768\left(K-1\right)^2{K^2}\left(K+{\frac{1}{3}}\right)
\left(K+{\frac{2}{3}}\right)^2\left(K+2\right)^2\left(K+{\frac{11}{4}}\right)\right)+\frac{K-1}{2}
\nonumber\\
&=& \frac{1}{2}\log\left(2304\left(K-1\right)^2{K^2}\left(K+{\frac{1}{3}}\right)
\left(K+{\frac{2}{3}}\right)^2\left(K+2\right)^2\left(K+
{\frac{11}{4}}\right)\right)+\frac{K-1}{2}.
\end{eqnarray}
From \eqref{medbound1} and \eqref{medbound2}, we find that the achievable symmetric rate is within $\frac{1}{2}\log\left(2304\left(K-1\right)^2{K^2}\left(K+{\frac{1}{3}}\right)\right.$
$\left.\left(K+{\frac{2}{3}}\right)^2\left(K+2\right)^2\left(K+{\frac{11}{4}}\right)\right)+\frac{K-1}{2}$ bits to the conjectured upper bound \eqref{conj} when $\frac{1}{2} \le \alpha \le \frac{2}{3}$.

{\bf Case 3 ($ \alpha \ge 2$)}: We use the following parameters in Theorem \ref{thm_gauss3}: $\mu^{(2)} = \frac{{\mathsf{SNR}}}{2\mathsf{INR}}\min \{2^{2 C_{FB}},\frac{{\mathsf {INR}}}{{\mathsf {SNR}}^{2}}\},$ and $\mu^{(1)}= \mu^{(3)} = 1-\mu^{(2)}$. We first lower bound the RHS of \eqref{1s}-\eqref{6s} as follows.

RHS of \eqref{1s}:
\begin{eqnarray}
&&\log \left( \frac{{\mathsf {INR}} \mu^{(1)}}{{\mathsf {SNR}} \mu^{(1:2)}+{\mathsf {SNR}}^{\alpha}\mu^{(2)}(K-1)+1} \right)\nonumber\\
&\stackrel{(a)}{\ge}& \log \left( \frac{\frac{1}{2}{\mathsf {INR}} }{{\mathsf {SNR}} +\frac{1}{2}{\mathsf {SNR}}\min \{2^{2 C_{FB}},\frac{{\mathsf {INR}}}{{\mathsf {SNR}}^{2}}\}(K-1)+1} \right)\nonumber\\
&\stackrel{(b)}{\ge}& \log \left( \frac{\frac{1}{2}{\mathsf {INR}} }{{\mathsf {SNR}}\min \{2^{2 C_{FB}},\frac{{\mathsf {INR}}}{{\mathsf {SNR}}^{2}}\}\frac{(K+3)}{2}} \right)\nonumber\\
&=& \log \left( \frac{{\mathsf {INR}} }{{\mathsf {SNR}}\min \{2^{2 C_{FB}},\frac{{\mathsf {INR}}}{{\mathsf {SNR}}^{2}}\}{(K+3)}} \right)\nonumber\\
&=& \log \left( \frac{{\mathsf {SNR}}^{\alpha-1} }{\min \{2^{2 C_{FB}},\frac{{\mathsf {INR}}}{{\mathsf {SNR}}^{2}}\}{(K+3)}} \right)\nonumber\\
&=& \log \left( {{\mathsf {SNR}}^{\alpha-1}\max \{2^{-2 C_{FB}},\frac{{\mathsf {SNR}}^{2}}{{\mathsf {INR}}}\}} \right)-\log(K+3)\nonumber\\
&{\ge}& \log \left(  {{\mathsf {SNR}}^{\alpha-1}\left(\frac{{\mathsf {SNR}}^{2}}{{\mathsf {INR}}}\right)} \right)-\log(K+3)\nonumber\\
&=& \log \left( {{\mathsf {SNR}}} \right)-\log(K+3),
\end{eqnarray}
where (a) follows since  $\mu^{(1)}\ge \frac{1}{2}$, and $\mu^{(1:2)}= 1$, (b) follows since  $\min \{2^{2 C_{FB}},\frac{{\mathsf {INR}}}{{\mathsf {SNR}}^{2}}\} \ge1$, and ${\mathsf {SNR}}\ge1$.

RHS of \eqref{2s}:
\begin{eqnarray}
&& \log \left(  \frac{{\mathsf {INR}} \mu^{(2)}}{{\mathsf {SNR}} \mu^{(1:2)}+1} \right)\nonumber\\
&=& \log \left( \frac{\frac{1}{2}{\mathsf {SNR}}\min \{2^{2 C_{FB}},\frac{{\mathsf {INR}}}{{\mathsf {SNR}}^{2}}\}
}{{\mathsf {SNR}}+1} \right)\nonumber\\
&\stackrel{(a)}{\ge}& \log \left( \frac{\frac{1}{2}{\mathsf {SNR}}\min \{2^{2 C_{FB}},\frac{{\mathsf {INR}}}{{\mathsf {SNR}}^{2}}\}
}{2{\mathsf {SNR}}} \right)\nonumber\\
&=&\log \left( \frac{{\mathsf {SNR}}\min \{2^{2 C_{FB}},\frac{{\mathsf {INR}}}{{\mathsf {SNR}}^{2}}\}
}{{\mathsf {SNR}}} \right)-\log(4)\nonumber\\
&=&\log \left( \min \{2^{2 C_{FB}},\frac{{\mathsf {INR}}}{{\mathsf {SNR}}^{2}}\}
\right)-\log(4)\nonumber\\
&=&\min \{2 C_{FB},\log\left({{\mathsf {SNR}}^{\alpha-2}} \right)\} -\log(4),
\end{eqnarray}
where (a) follows since  ${\mathsf {SNR}}\ge1$.

RHS of \eqref{3s}:
\begin{eqnarray}
&& \log \left( \frac{{\mathsf {SNR}} \mu^{(1)}}{{\mathsf {SNR}}\mu^{(2)}+1} \right)\nonumber\\
&\stackrel{(a)}{\ge}& \log \left( \frac{\frac{1}{2}{\mathsf {SNR}}}{\frac{1}{2}+1} \right)\nonumber\\
&=& \log \left( \frac{{\mathsf {SNR}}}{3} \right)\nonumber\\
&=& \log \left( {\mathsf {SNR}} \right)-\log(3),
\end{eqnarray}
where (a) follows since  $\mu^{(1)}\ge \frac{1}{2}$.

RHS of \eqref{4s}:
\begin{eqnarray}
&& \log \left(  \frac{{\mathsf {SNR}}^{\alpha} \mu^{(3)}}{{\mathsf {SNR}}^{\alpha} \mu^{(2)}+{\mathsf {SNR}}\mu^{(2:3)}+1} \right)\nonumber\\
&\stackrel{(a)}{\ge}& \log \left(  \frac{\frac{1}{2}{\mathsf {INR}} }{\frac{1}{2}{\mathsf {SNR}}\min \{2^{2 C_{FB}},\frac{{\mathsf {INR}}}{{\mathsf {SNR}}^{2}}\}+{\mathsf {SNR}} +1} \right)\nonumber\\
&\stackrel{(b)}{\ge}& \log \left(  \frac{\frac{1}{2}{\mathsf {INR}} }{{\mathsf {SNR}}\min \{2^{2 C_{FB}},\frac{{\mathsf {INR}}}{{\mathsf {SNR}}^{2}}\}\frac{(5)}{2}} \right)\nonumber\\
&=& \log \left(  \frac{{\mathsf {INR}} }{{\mathsf {SNR}}\min \{2^{2 C_{FB}},\frac{{\mathsf {INR}}}{{\mathsf {SNR}}^{2}}\}{(5)}} \right)\nonumber\\
&=& \log \left(  \frac{{\mathsf {SNR}}^{\alpha-1} }{\min \{2^{2 C_{FB}},\frac{{\mathsf {INR}}}{{\mathsf {SNR}}^{2}}\}{(5)}} \right)\nonumber\\
&=& \log \left(  {{\mathsf {SNR}}^{\alpha-1}\max \{2^{-2 C_{FB}},\frac{{\mathsf {SNR}}^{2}}{{\mathsf {INR}}}\}} \right)-\log(5)\nonumber\\
&{\ge}& \log \left(  {{\mathsf {SNR}}^{\alpha-1}\left(\frac{{\mathsf {SNR}}^{2}}{{\mathsf {INR}}}\right)} \right)-\log(5)\nonumber\\
&=& \log \left(  {{\mathsf {SNR}}} \right)-\log(5),
\end{eqnarray}
where (a) follows since  $\mu^{(3)}\ge \frac{1}{2}$, and $\mu^{(2:3)}= 1$, (b) follows since  $\min \{2^{2 C_{FB}},\frac{{\mathsf {INR}}}{{\mathsf {SNR}}^{2}}\} \ge1$, and ${\mathsf {SNR}}\ge1$.

RHS of \eqref{5s}:
\begin{eqnarray}
&& \log \left(  \frac{{\mathsf {SNR}}^{\alpha} \mu^{(2)}}{{\mathsf {SNR}}\mu^{(3)}+1} \right)\nonumber\\
&\stackrel{(a)}{\ge}& \log \left(  \frac{\frac{1}{2}{\mathsf {SNR}}\min \{2^{2 C_{FB}},\frac{{\mathsf {INR}}}{{\mathsf {SNR}}^{2}}\}}{{\mathsf {SNR}}+1} \right)\nonumber\\
&\stackrel{(b)}{\ge}& \log \left(  \frac{\frac{1}{2}{\mathsf {SNR}}\min \{2^{2 C_{FB}},\frac{{\mathsf {INR}}}{{\mathsf {SNR}}^{2}}\}}{2{\mathsf {SNR}}} \right)\nonumber\\
&=& \log \left(  \frac{{\mathsf {SNR}}\min \{2^{2 C_{FB}},\frac{{\mathsf {INR}}}{{\mathsf {SNR}}^{2}}\}}{{\mathsf {SNR}}} \right)-\log(4)\nonumber\\
&=& \log \left( \min \{2^{2 C_{FB}},\frac{{\mathsf {INR}}}{{\mathsf {SNR}}^{2}}\} \right)-\log(4)\nonumber\\
&=&\min \{2 C_{FB},\log\left({{\mathsf {SNR}}^{\alpha-2}} \right)\} -\log(4),
\end{eqnarray}
where (a) follows since  $\mu^{(3)} \le 1$, and $\mu^{(2:3)}= 1$, (b) follows since  ${\mathsf {SNR}}\ge1$.

RHS of \eqref{6s}:
\begin{eqnarray}
&& \log \left(  \frac{{\mathsf {SNR}} \mu^{(3)}}{1} \right)\nonumber\\
&\stackrel{(a)}{\ge}& \log \left(  \frac{{\mathsf {SNR}} \frac{1}{2}}{1} \right)\nonumber\\
&=& \log \left(  \mathsf {SNR} \right)-\log(2),
\end{eqnarray}
where (a) follows since  $\mu^{(3)}\ge \frac{1}{2}$.

Thus, by considering \eqref{7s}, we find the achievable rate expressions can be reduced as follows:
\begin{eqnarray}
R^{(1)} &\le& \log \left( {{\mathsf {SNR}}} \right)-\log(K+3)\\
R^{(2)} &\le&  \min \{2 C_{FB},\log\left({{\mathsf {SNR}}^{\alpha-2}} \right)\} -\log\left(\max\{4,K-1\}\right)\\
R^{(3)} &\le& \log \left(  {{\mathsf {SNR}}} \right)-\log(5).
\end{eqnarray}
Putting these bounds all together, we achieve $\frac{R^{(1)}+R^{(2)}+R^{(3)}}{2}=$
\begin{eqnarray}\label{ac3}
&&\frac{1}{2}\left(\log \left( {{\mathsf {SNR}}} \right)-\log(K+3)
+\min \{2 C_{FB},\log\left({{\mathsf {SNR}}^{\alpha-2}} \right)\} -\log\left(\max\{4,K-1\}\right)\right.\nonumber\\
&&\left.+\log \left(  {{\mathsf {SNR}}} \right)-\log(5)\right)\nonumber\\
&=&\log \left(  {{\mathsf {SNR}}} \right)
+\min \{C_{FB},\frac{1}{2}\log\left({{\mathsf {SNR}}^{\alpha-2}} \right)\}
-\frac{1}{2}\log\left(5 (\max\{4,K-1\}) (K+3)\right).
\end{eqnarray}
Define $K'\triangleq \frac{1}{2}\log\left(5 (\max\{4,K-1\}) (K+3)\right)$ for notational simplicity in the following. Next, we will bound the gap between \eqref{ac3} and the conjectured upper bound \eqref{conj}. We split this into 2 regimes. The first is when $2C_{FB}\le \log\left({\mathsf {SNR}}^{\alpha-2}\right)$, and the second is when $C_{FB}> \log\left({\mathsf {SNR}}^{\alpha-2}\right)$. In the first case, we bound the gap between \eqref{ac3} and ${R}^u_{sym,0}+C_{FB}$ as follows:
\begin{eqnarray}\label{strongbound1}
&&{R}^u_{sym,0}+C_{FB}-\left(\log \left( {{\mathsf {SNR}}} \right)
+\min \{C_{FB},\frac{1}{2}\log\left({{\mathsf {SNR}}^{\alpha-2}} \right)\}
-K'\right)\nonumber\\
&\le&\left(\log(1+{\mathsf{SNR}})+C_{FB}\right)-\nonumber\\
&&\left(\log \left( {{\mathsf {SNR}}} \right)
+\min \{C_{FB},\frac{1}{2}\log\left({{\mathsf {SNR}}^{\alpha-2}} \right)\}
- K' \right)\nonumber\\
&=& \left(\log(1+{\mathsf{SNR}})+C_{FB}\right)-
\left(\log \left( {{\mathsf {SNR}}} \right)+C_{FB}- K' \right)\nonumber\\
&\le&\log(2)+ K' \nonumber\\
&=&\frac{1}{2}\log\left(20 (\max\{4,K-1\}) (K+3)\right).
\end{eqnarray}
In the second case we bound the gap between \eqref{ac3} and ${R}^u_{sym,\infty}$ as follows:
\begin{eqnarray}\label{strongbound2}
&&{R}^u_{sym,\infty}-\left(\log \left({{\mathsf {SNR}}} \right)
+\min \{C_{FB},\frac{1}{2}\log\left({{\mathsf {SNR}}^{\alpha-2}} \right)\}
- K' \right)\nonumber\\
&=&\left(\frac{1}{2}\log(1+\frac{{\mathsf{SNR}}}{1+\mathsf{INR}})+
\frac{1}{2}\log(1+{\mathsf{SNR}}+{\mathsf{INR}})+\frac{K-1}{2}+\log K
\right)-\nonumber\\
&&\left(\log \left({{\mathsf {SNR}}} \right)
+\min \{C_{FB},\frac{1}{2}\log\left({{\mathsf {SNR}}^{\alpha-2}} \right)\}
- K' \right)\nonumber\\
&=&\frac{1}{2}\log(1+\frac{{\mathsf{SNR}}}{1+\mathsf{INR}})+
\frac{1}{2}\log(1+{\mathsf{SNR}}+{\mathsf{INR}})+\frac{K-1}{2}+\log K\nonumber\\
&&-\log \left( {{\mathsf {SNR}}} \right)
-\frac{1}{2}\log\left({{\mathsf {SNR}}^{\alpha-2}} \right)
+ K' \nonumber\\
&=&\frac{1}{2}\log\left(\frac{(1+{\mathsf{SNR}}+{\mathsf{INR}})
(1+{\mathsf{SNR}}+{\mathsf{INR}})}
{({\mathsf {SNR}}^{\alpha-2})(1+ {{\mathsf {INR}}})({{\mathsf {SNR}}})^2}\right)+\frac{K-1}{2}+\log K+ K' \nonumber\\
&\le&\frac{1}{2}\log\left(\frac{(3{\mathsf{SNR}}^{\alpha})^2}{({\mathsf {SNR}}^{\alpha-2})({{\mathsf {INR}}})({{\mathsf {SNR}}})^2}\right)+\frac{K-1}{2}+\log K
+ K' \nonumber\\
&=&\frac{1}{2}\log 9+\frac{K-1}{2}+\log K+\frac{1}{2}\log\left(5 (\max\{4,K-1\}) (K+3)\right)\nonumber\\
&=&\frac{K-1}{2}+\frac{1}{2}\log\left( 45 (\max\{4,K-1\}) K^2(K+3)\right).
\end{eqnarray}
Therefore, from  \eqref{strongbound1} and \eqref{strongbound2}, we find that the achievable symmetric rate is within $\frac{K-1}{2} + \frac{1}{2} \log\left(45 (\max\{4,K-1\}) K^2 (K+3)\right)$ bits to the conjectured upper bound in \eqref{conj} when $\alpha \ge 2$.

Combining these three cases together with the gap of $\frac{1}{2}\log9+16+\frac{K-1}{2}+3\log K$ bits  when $2/3<\alpha<1$ regime, and the gap of $\frac{1}{2}\log6+6+\frac{K-1}{2}+\log K$ bits when $1<\alpha<2$ (gap between the upper bound for the symmetric capacity with perfect feedback in Theorem 3 of \cite{Mohajer} and the lower bound for the symmetric capacity with no feedback in Theorem 1 of \cite{Ordentlich}), we find that the achievable symmetric rate is within $L$ bits to the conjectured upper bound in \eqref{conj}, where $L$ is given by \eqref{distance}.

\end{appendices}

\bibliographystyle{IEEETran}
\bibliography{bib}

\end{document}